\newcommand{\cS}{{\mathcal{S}}}
\newcommand{\mnf}{\mathbb{M}}
\newcommand{\fap}{{\textbf{f}}}
\newcommand{\gap}{{\textbf{g}}}
\newcommand{\wor}[1]{\fik^{#1}\setminus\{0\}}
\newcommand{\rs}[1]{\mathcal{R}_{#1}}
\newcommand{\gq}[3]{\mathcal{G}(#1,#2,#3)}
\newcommand{\hg}[1]{\mathbb{H}_{#1}}
\newcommand{\ud}{u_l,\dots}
\newcommand{\ua}{{\boldsymbol\alpha}}
\newcommand{\ub}{{\boldsymbol\beta}}
\newcommand{\tua}{{\tilde{\boldsymbol\alpha}}}
\newcommand{\tub}{{\tilde{\boldsymbol\beta}}}
\newcommand{\ta}{{a}}
\newcommand{\tb}{{b}}
\newcommand{\fe}{F}
\DeclareMathOperator{\codim}{codim}
\newcommand{\prs}{\mathbb{CP}}
\newcommand{\iel}{\mathbf{1}}
\newcommand{\fv}{v}
\newcommand{\cp}{c}
\newcommand{\og}{m}
\newcommand{\mf}{f}
\newcommand{\la}{\lambda}
\newcommand{\ff}{\mathbf{F}}
\newcommand{\tff}{\tilde{\mathbf{F}}}
\newcommand{\gl}{\mathrm{Mat}}
\newcommand{\GL}{\mathrm{GL}}
\newcommand{\gla}{\mathbf{G}}
\newcommand{\fik}{\mathbb{C}}
\newcommand{\sm}{d}
\newcommand{\pmm}{x}
\newcommand{\lb}{\label}
\newcommand{\ga}{\gamma}
\newcommand{\vf}{\varphi}
\newcommand{\zp}{\mathbb{Z}_{\ge 0}}
\newcommand{\zsp}{\mathbb{Z}_{>0}}
\newcommand{\zz}{\mathbb{Z}}
\newcommand{\pd}{\partial}
\newcommand{\lt}{\mathcal{U}}
\newcommand{\al}{\alpha}
\newcommand{\be}{\beta}
\newcommand{\diu}{N}
\newcommand{\ti}{{\tilde\imath}}
\newcommand{\tj}{{\tilde\jmath}}
\newcommand{\tq}{{\tilde q}}
\newcommand{\tk}{{\tilde k}}
\newcommand{\hi}{{\hat\imath}}
\newcommand{\hk}{{\hat k}}
\newcommand{\ci}{{\check\imath}}
\newcommand{\ck}{{\check k}}
\newcommand{\da}{{\textbf{d}}}
\newtheorem{theorem}{Theorem}
\newtheorem{lemma}{Lemma}
\theoremstyle{definition}
\newtheorem{definition}{Definition}
\newtheorem{example}{Example}
\newtheorem{remark}{Remark}
\begin{document}

\title[Miura-type transformations and Lie group actions]{Miura-type transformations 
for lattice equations and Lie group actions associated with Darboux-Lax representations}
\date{}

\author{George Berkeley\qquad\qquad Sergei Igonin} 
\address{School of Mathematics, University of Leeds, LS2 9JT Leeds, UK}

\author{}
\email{george231086@yahoo.co.uk, s-igonin@yandex.ru}

\keywords{Miura-type transformations, differential-difference equations, Lie group actions, Darboux-Lax representations, Narita-Itoh-Bogoyavlensky lattice, Toda lattice, Adler-Postnikov lattices}

\subjclass[2010]{37K35, 53C30}

\begin{abstract}
Miura-type transformations (MTs) are an essential tool in the theory of 
integrable nonlinear partial differential and difference equations.
We present a geometric method to construct MTs 
for differential-difference (lattice) equations 
from Darboux-Lax representations (DLRs) of such equations.

The method is applicable to parameter-dependent DLRs satisfying certain conditions. 
We construct MTs and modified lattice equations 
from invariants of some Lie group actions on manifolds associated with such DLRs.

Using this construction, from a given suitable DLR one can obtain many MTs of different orders. 
The main idea behind this method is closely related to the results of 
Drinfeld and Sokolov on MTs for the partial differential KdV equation.

Considered examples include the Volterra,
Narita-Itoh-Bogoyavlensky, Toda, and Adler-Postnikov lattices.
Some of the constructed MTs and modified lattice equations seem to be new. 

\end{abstract}

\maketitle

\section{Introduction}

\subsection{An overview of the results}

It is well known that Miura-type transformations (MTs) 
play an essential role in the theory of 
integrable nonlinear partial differential and difference equations.
(For partial differential equations, MTs are also called differential substitutions.)

In particular, when one tries to classify a certain 
class of integrable equations, one often finds a few basic equations 
such that all other equations from the considered class can be obtained from 
the basic ones by means of MTs (see, e.g.,~\cite{mss91,yam2006,gram11,meshk2008} 
and references therein).
Applications of MTs to construction of conservation laws~\cite{miura-cl,suris03} and 
auto-B\"acklund transformations are also well known.

Therefore, it is highly desirable to develop systematic methods for constructing MTs.
In this paper, we focus on MTs of differential-difference (lattice) equations.

Let $\ua$, $\ub$ be integers such that $\ua\le\ub$.
We study differential-difference equations of the form
\begin{equation}
\lb{isdde}
u_t=\ff(u_\ua,u_{\ua+1},\dots,u_\ub),
\end{equation}
where $u=u(n,t)$ is a vector-function of
an integer variable~$n$ and a real or complex variable~$t$.
We use the standard notation $u_t=\pd_t(u)$ and $u_l=u(n+l,t)$ for $l\in\zz$.
In particular, $u_0=u$.

Equation~\eqref{isdde} must be valid for all $n\in\zz$, 
so~\eqref{isdde} encodes an infinite sequence of differential equations
$$
\pd_t\big(u(n,t)\big)=\ff\big(u(n+\ua,t),u(n+\ua+1,t),\dots,u(n+\ub,t)\big),\qquad\quad n\in\zz.
$$

Consider another differential-difference equation of similar type
\begin{equation}
\lb{ivdde}
v_t=\tff(v_\tua,v_{\tua+1},\dots,v_\tub)
\end{equation}
for a vector-function $v=v(n,t)$ and some integers $\tua\le\tub$.

A \emph{Miura-type transformation} (MT) from equation~\eqref{ivdde} to equation~\eqref{isdde} 
is determined by an expression of the form
\begin{equation}
\lb{iuvf}
u=\mf(v_p,v_{p+1},\dots,v_r),\qquad\quad p,r\in\zz,\qquad p\le r,
\end{equation}
such that 
if $v$ satisfies~\eqref{ivdde} then $u$ given by~\eqref{iuvf} satisfies~\eqref{isdde}.
A more precise definition of MTs is presented in Section~\ref{ddelr}.

Equation~\eqref{ivdde} is called the \emph{modified equation} 
corresponding to the described MT.

Let $p,r\in\zz$, $p\le r$, be such that the function $\mf$ in~\eqref{iuvf} 
may depend only on $v_p,v_{p+1},\dots,v_r$ and depends nontrivially on $v_p$, $v_r$.
Then the number $r-p$ is called the \emph{order} of the MT~\eqref{iuvf}.

\begin{example}
Let $u$ and $v$ be scalar functions.
It is known that the formula $u=v v_1$ determines an MT 
from the modified Volterra equation $v_t=v^2 (v_1-v_{-1})$
to the Volterra equation $u_t=u(u_1-u_{-1})$.

According to our notation, $v=v_0$, so $v v_1=v_0 v_1$. 
Therefore, the MT $u=v v_1$ is of order~$1$.
\end{example}

We present a method to construct equations~\eqref{ivdde} and MTs~\eqref{iuvf} of different orders 
for a given equation~\eqref{isdde} possessing a matrix Darboux-Lax representation (DLR).
Note that the order of the obtained MTs may be higher   
than the size of the matrices in the DLR. 
(The definition of DLRs is given in Remark~\ref{idlr} below.)

The method uses invariants of some Lie group actions on manifolds associated with DLRs and 
is applicable to parameter-dependent DLRs satisfying certain conditions, 
see Section~\ref{secmi} and Section~\ref{secgt} for more details.
The main idea behind the method is closely related to the results of 
Drinfeld, Sokolov~\cite{drin-sok85} and Igonin~\cite{igon05}
on MTs for evolution partial differential equations (PDEs).

The papers~\cite{drin-sok85,igon05} construct MTs for evolution PDEs 
from zero-curvature representations of evolution PDEs.
We construct MTs for differential-difference equations from 
Darboux-Lax representations of differential-difference equations.
Since the structure of Darboux-Lax representations 
of differential-difference equations is quite different from the structure of zero-curvature representations of PDEs, 
translation of ideas of~\cite{drin-sok85,igon05} to the case of differential-difference equations is a challenging problem.

Note that the papers~\cite{drin-sok85,igon05} study properly only the case 
of scalar equations.
More precisely, the paper~\cite{drin-sok85} presents (without proof) 
a construction of MTs for the (scalar) partial differential KdV equation from the standard 
zero-curvature representation of KdV.
A similar (but more general) construction of MTs for scalar $(1+1)$-dimensional evolution PDEs is given in~\cite{igon05}.
The multicomponent case is not discussed at all in~\cite{drin-sok85}.
In addition to scalar evolution PDEs, the paper \cite{igon05} considers a small class 
of MTs for some multicomponent evolution PDEs, but not the general multicomponent case.

We present a method to construct MTs for differential-difference equations in the general multicomponent case.
To clarify the main idea, we first consider the scalar case 
in Theorem~\ref{thscal} in Section~\ref{secgt}, 
and then the general multicomponent case in Theorem~\ref{tharb}.
Note that we give detailed proofs of these results.

A detailed exposition of the method is given in Section~\ref{secgt}.
The main ideas are briefly outlined in Section~\ref{secmi}.

To illustrate the method, in Section~\ref{secgt} we include a derivation 
of some MTs for the Volterra equation. 
These MTs (up to a change of variables) can be found also in~\cite{yam2006}.

Using the described method, 
in Sections~\ref{secbog},~\ref{sectoda},~\ref{secap} 
we construct a number of MTs for the Narita-Itoh-Bogoyavlensky, Toda lattices 
and Adler-Postnikov lattices from~\cite{adler-pos11}.
Some of the constructed MTs and modified equations seem to be new.

Some abbreviations, conventions, and notation used in the paper are presented in Section~\ref{subs-conv}.

\begin{remark}
\lb{idlr}

Let $\cS$ be the shift operator which replaces $n$ by $n+1$ and $u_l$ by $u_{l+1}$ 
in all considered functions. (A more detailed definition of $\cS$ is given in
Section~\ref{ddelr}.)

Let $\sm$ be a positive integer. 
Let $M=M(\ud,\la)$ and $\lt=\lt(\ud,\la)$ be $\sm\times\sm$ matrix-functions
depending on a finite number of the variables $u_l$ and a complex parameter~$\la$.
Suppose that the matrix $M$ is invertible, and the equation
\begin{equation}
\lb{ilr}
\pd_t(M)=\cS(\lt)M-M\lt
\end{equation}
holds as a consequence of~\eqref{isdde}. 
It is well known that such a pair $(M,\,\lt)$ is an analogue 
of a zero-curvature (Lax) representation for differential-difference equations.
Also, it is known that such $(M,\,\lt)$ often arise from Darboux transformations 
of PDEs (see, e.g.,~\cite{kmw}).

Motivated by these facts, following~\cite{kmw}, we say that the pair $(M,\,\lt)$ is 
a \emph{Darboux-Lax representation} (DLR) for equation~\eqref{isdde}.
This implies that the auxiliary linear system 
\begin{equation}
\lb{isyspsi}
\begin{aligned}
\cS(\Psi)&=M\Psi,\\
\pd_t(\Psi)&=\lt\Psi
\end{aligned}
\end{equation}
is compatible modulo~\eqref{isdde}. 
Here $\Psi=\Psi(n,t)$ is an invertible $\sm\times\sm$ matrix-function.
(A more precise definition of DLRs is given in Section~\ref{ddelr}.) 

Some authors say that a pair $(M,\,\lt)$ (with invertible $M$) 
is a DLR for~\eqref{isdde} 
if equation~\eqref{ilr} is equivalent to equation~\eqref{isdde}.
For our purposes, it is sufficient to assume that~\eqref{ilr} holds as a consequence of~\eqref{isdde}.

\end{remark}

\begin{remark}
\lb{yam-met}
A different method to construct MTs for differential-difference equations is described 
by Yamilov~\cite{yam94}.
Yamilov's method uses the following diagram, 
where $(F)$, $(G)$, $(T)$, $(H)$ are differential-difference equations 
and $m_1$, $m_2$, $m_3$, $m_4$ are  MTs.
\begin{equation}
\lb{yd}
\begin{CD}
(F) @<m_3<< (H)\\
@VVm_2V @VVm_4V\\
(T) @<m_1<< (G)
\end{CD}
\end{equation}
For given equations $(F)$, $(G)$, $(T)$ and MTs $m_1$, $m_2$ of order~$1$ 
satisfying certain conditions, 
the paper~\cite{yam94} constructs another equation $(H)$ and 
MTs $m_3$, $m_4$ of order~$1$ so that the diagram~\eqref{yd} is commutative.

Yamilov's method is very different from ours.
Indeed, we construct MTs of arbitrary order 
from certain Lie group actions associated with a given DLR.
Yamilov~\cite{yam94} constructs new MTs of order~$1$ from given MTs of order~$1$, 
not using any Lie group actions.
(Compositions of such MTs of order~$1$ can produce MTs of higher order~\cite{yam94}.)

It would be interesting to determine if Yamilov's MTs can be obtained 
from some DLRs by our method. (But we do not study Yamilov's MTs 
in the present paper.)

\end{remark}

\begin{remark}
\lb{remsok}
As has been said above, in this paper we study the following problem:
how to construct MTs~\eqref{iuvf} and equations~\eqref{ivdde} 
for a given equation~\eqref{isdde}?

Sokolov~\cite{sokolov88} studied the inverse problem:
how to construct MTs~\eqref{iuvf} and equations~\eqref{isdde} 
for a given equation~\eqref{ivdde}?
(More precisely, Sokolov~\cite{sokolov88} studied this for 
$(1+1)$-dimensional evolution PDEs and more general MTs 
which may change the space variable.)
According to~\cite{sokolov88}, the study of the inverse problem 
does not require Lax representations.
\end{remark}

\subsection{Abbreviations, conventions, and notation}
\lb{subs-conv}

The following abbreviations, conventions, and notation are used in the paper.

MT = Miura-type transformation, DLR = Darboux-Lax representation. 

The symbols $\zsp$ and $\zp$ denote the sets of positive and nonnegative 
integers respectively.

Each considered function 
is supposed to be analytic on its domain of definition.
(This includes meromorphic functions. By our convention, the poles of
a meromorphic function do not belong to its domain of definition, so 
a meromorphic function is analytic on its domain of definition.)

Unless otherwise specified, all scalar variables and functions are assumed to be 
$\fik$-valued.
The symbol~$n$ denotes an integer variable.

All considered Lie groups and Lie subgroups are supposed to be complex-analytic.

For every $\sm\in\zsp$, we denote by 
$\gl_\sm(\fik)$ the algebra of $\sm\times\sm$ matrices with entries from $\fik$ 
and by $\GL_\sm(\fik)$ the group of invertible $\sm\times\sm$ matrices with entries from $\fik$.

For $i\in\zsp$ and $k\in\zz$, the symbols $\cp_i$ denote complex constants, 
and $\cp_i^k=(\cp_i)^k$ is the $k$th power of~$\cp_i$.

Also, we use the notation $u^i_k=\cS^k(u^i)$ and $v^i_k=\cS^k(v^i)$, 
where $\cS^k$ is the $k$th power of the operator~$\cS$, 
and $u^i$, $v^i$ are dependent variables in differential-difference equations.

\subsection{The main ideas}
\lb{secmi}

In this subsection we outline the main ideas of our method to construct MTs
from DLRs.

Let $\sm\in\zsp$. 
Let $(M,\,\lt)$ be a $\sm\times\sm$ matrix DLR for equation~\eqref{isdde}, 
as described in Remark~\ref{idlr}.
So the matrix $M$ is invertible, and system~\eqref{isyspsi} is compatible modulo~\eqref{isdde}.

Our method to construct MTs is applicable to the case when the matrix 
$M=M(u,\la)$ depends only on $u$ and $\la$. 
(That is, $M$ does not depend on $u_l$ for $l\neq 0$.)
So we consider the case when $M$, $\lt$ are of the form 
\begin{equation}
\lb{imu}
M=M(u,\la),\qquad\quad
\lt=\lt(u_\ta,u_{\ta+1},\dots,u_\tb,\la)
\end{equation}
for some integers $\ta\le\tb$.

Without loss of generality, we can assume $a<0$.
(Indeed, $\lt(u_\ta,u_{\ta+1},\dots,u_\tb,\la)$ does not have to depend 
on~$u_\ta$ nontrivially, so we can always take $a<0$.)
Note that some DLRs of other types can be transformed to the form~\eqref{imu} 
by a change of variables, see Remark~\ref{rrelab} in Section~\ref{ddelr}.

\begin{remark}
\lb{reminv}
In some examples of DLRs, 
it may happen that $M(u,\la)$ is invertible for almost all (but not all) 
values of $u$ and $\la$.
The exceptional points $(u,\la)$ where $M(u,\la)$ is not invertible 
are excluded from further consideration.
\end{remark}

Consider the algebra $\fik[\la]$ of polynomials in~$\la$.
Fix $\cp\in\fik$ and $k\in\zsp$.
We denote by ${((\la-\cp)^k)\subset\fik[\la]}$ the ideal generated 
by the polynomial $(\la-\cp)^k\in\fik[\la]$.
Clearly, the quotient algebra $\fik[\la]/((\la-\cp)^k)$ is $k$-dimensional.

Denote by $\gq{\sm}{\cp}{k}$ the group 
of invertible $\sm\times\sm$ matrices with entries from $\fik[\la]/((\la-\cp)^k)$.
Equivalently, an element of $\gq{\sm}{\cp}{k}$ can be described as a sum of the form
\begin{equation}
\lb{iwww}
\sum_{q=0}^{k-1}(\la-\cp)^qW^q,\quad\qquad W^0\in\GL_\sm(\fik),\qquad
W^1,\dots,W^{k-1}\in\gl_\sm(\fik).
\end{equation}
This shows that $\gq{\sm}{\cp}{k}$ is a $k\sm^2$-dimensional Lie group.
When we multiply two elements of the form~\eqref{iwww} in the group~$\gq{\sm}{\cp}{k}$, 
we use the fact that in the algebra $\fik[\la]/((\la-\cp)^{k})$ 
one has $(\la-\cp)^{k}=0$.

For $i,j=1,\dots,\sm$ and $q=0,\dots,k-1$, 
we denote by $w^{q}_{ij}=(W^{q})_{ij}$ the entries of 
the matrix $W^{q}$. Then $w^{q}_{ij}$ can be regarded as coordinates 
on the Lie group $\gq{\sm}{\cp}{k}$. Set $G=\gq{\sm}{\cp}{k}$.

Now assume that $w^{q}_{ij}=w^{q}_{ij}(n,t)$ are functions of the variables $n$, $t$.
This means that $W^q$ are $\sm\times\sm$ matrix-functions of the variables $n$, $t$.
Substituting $\Psi=\sum_{q=0}^{k-1}(\la-\cp)^qW^q$ in~\eqref{isyspsi}, one obtains 
the system 
\begin{gather}
\lb{icsw}
\sum_{q=0}^{k-1}(\la-\cp)^q\cS\big(W^{q}\big)=
M(u,\la)\bigg(\sum_{q=0}^{k-1}(\la-\cp)^qW^{q}\bigg),
\qquad
(\la-\cp)^{k}=0,\\
\lb{idtw}
\sum_{q=0}^{k-1}(\la-\cp)^q\pd_t\big(W^{q}\big)=
\lt(u_\ta,\dots,u_\tb,\la)
\bigg(\sum_{q=0}^{k-1}(\la-\cp)^qW^{q}\bigg),
\qquad
(\la-\cp)^{k}=0.
\end{gather}
To compute the right-hand sides of~\eqref{icsw} and~\eqref{idtw}, 
we take the corresponding Taylor series with respect to $\la$ 
and truncate these series at the term $(\la-\cp)^{k-1}$.
(Recall that in the algebra $\fik[\la]/((\la-\cp)^{k})$ 
one has $(\la-\cp)^{k}=0$, which we use here.)

Since $w^{q}_{ij}=(W^{q})_{ij}$ are the entries of the matrix $W^{q}$, 
equation~\eqref{icsw} allows us to express $\cS(w^{q}_{ij})$ in terms 
of $u$ and $w^{\tq}_{\ti\tj}$ for $\tq=0,\dots,k-1,\ \ti,\tj=1,\dots,\sm$.
Similarly, equation~\eqref{idtw} allows us to express $\pd_t(w^{q}_{ij})$ in terms 
of $u_\ta,\dots,u_\tb$ and $w^{\tq}_{\ti\tj}$. That is, system~\eqref{icsw},~\eqref{idtw}
can be rewritten as 
\begin{equation}
\lb{wicsw}
\cS(w^{q}_{ij})=A^q_{ij}(u,w^{\tq}_{\ti\tj}),\qquad\quad
\pd_t(w^{q}_{ij})=B^q_{ij}(u_\ta,\dots,u_\tb,w^{\tq}_{\ti\tj})
\end{equation}
for some scalar functions $A^q_{ij}$, $B^q_{ij}$.

Recall that $w^{q}_{ij}$ are coordinates on the Lie group $G=\gq{\sm}{\cp}{k}$.
In what follows, by a function on~$G$ we mean an analytic function defined on an open dense subset of~$G$.

Let $z=z(w^{q}_{ij})$ be a function on $G$. 
Using formulas~\eqref{wicsw} and the property $\cS(u_l)=u_{l+1}$, 
one can compute $\pd_t(z)$ and $\cS^i(z)$ for any $i\in\zsp$.
Here and below $\cS^i$ denotes the $i$th power of the operator $\cS$.

To clarify the main idea of the method to construct MTs, suppose that $u$ is a scalar function.
In the scalar case,  
our goal is to find a function $z=z(w^{q}_{ij})$ such that there is $r\in\zsp$ 
satisfying the following conditions.
\begin{gather}
\lb{vcon1}
\begin{array}{c}
\text{$\cS^r(z)$ can be expressed in terms 
of $\,z,\ \cS(z),\ \cS^2(z),\,\dots,\ \cS^{r-1}(z),\ u\,$}\\
\text{so that the obtained expression
$\cS^r(z)=\fe\big(z,\cS(z),\cS^2(z),\dots,\cS^{r-1}(z),u\big)$}\\
\text{depends nontrivially on $u$.}
\end{array}\\
\lb{vcon2}
\begin{array}{c}
\text{$\pd_t(z)$ can be expressed in terms of 
$\,z,\ \cS(z),\ \cS^2(z),\,\dots,\ \cS^{r-1}(z),\ u_\ta,\,\dots,\ u_\tb$,}\\
\text{so $\pd_t(z)=Q\big(z,\cS(z),\cS^2(z),\dots,\cS^{r-1}(z),u_\ta,\dots,u_\tb\big)$ 
for some function $Q$.}
\end{array}
\end{gather}

Using a function $z$ satisfying~\eqref{vcon1},~\eqref{vcon2}, 
one obtains an MT for equation~\eqref{isdde} as follows.
Set $z_0=z$ and $z_i=\cS^i(z)$ for $i=1,\dots,r$. 
Then the above formulas 
\begin{gather*}
\cS^r(z)=\fe\big(z,\cS(z),\cS^2(z),\dots,\cS^{r-1}(z),u\big),\\
\pd_t(z)=Q\big(z,\cS(z),\cS^2(z),\dots,\cS^{r-1}(z),u_\ta,\dots,u_\tb\big)
\end{gather*}
become
\begin{gather}
\lb{vrfe}
z_r=\fe(z_0,z_1,z_2,\dots,z_{r-1},u),\\
\lb{dtvq}
\pd_t(z)=Q\big(z_0,z_1,z_2,\dots,z_{r-1},u_\ta,\dots,u_\tb\big).
\end{gather}
Since, according to~\eqref{vcon1}, the function $\fe$ depends nontrivially on $u$, 
locally from equation~\eqref{vrfe} one can express $u$ in terms of $z_0,z_1,\dots,z_r$
\begin{equation}
\lb{iufz}
u=\mf(z_0,z_1,\dots,z_{r}).
\end{equation}
We introduce new variables $\fv_l$ for $l\in\mathbb{Z}$.
Using the function $\mf(z_0,z_1,\dots,z_{r})$ from~\eqref{iufz}, 
for each $j\in\mathbb{Z}$ one can consider the function 
$\mf(\fv_{j},\fv_{j+1},\dots,\fv_{j+r})$.
Let $P(\fv_{\ta},\dots,\fv_{{\tb}+r})$ be the function 
obtained from $Q(z_0,\dots,z_{r-1},u_\ta,\dots,u_\tb)$
by replacing $z_i$ with $\fv_i$ and $u_j$ with $\mf(\fv_{j},\fv_{j+1},\dots,\fv_{j+r})$.
That is,
\begin{equation}
\notag
P(\fv_{\ta},\dots,\fv_{{\tb}+r})
=Q\big(\fv_0,\dots,\fv_{r-1},\mf(\fv_{\ta},\dots,\fv_{\ta+r}),\dots,
\mf(\fv_{\tb},\dots,\fv_{\tb+r})\big).
\end{equation}
We introduce the formula
\begin{equation}
\lb{ivtp}
v_t=P(\fv_{\ta},\dots,\fv_{{\tb}+r})
\end{equation}
and regard~\eqref{ivtp} as a differential-difference equation for~$\fv$.
Then the formula 
\begin{equation}
\lb{iumffv}
u=\mf(\fv_0,\fv_1,\dots,\fv_{r})
\end{equation}
determines an MT from equation~\eqref{ivtp} to equation~\eqref{isdde}. 
As usual, we can use the identification $v_0=v$. 
Then \eqref{iumffv} becomes $u=\mf(\fv,\fv_1,\dots,\fv_{r})$.

So we have shown that a function $z=z(w^{q}_{ij})$ satisfying~\eqref{vcon1},~\eqref{vcon2} gives an MT. 
Now let us explain how to construct such~$z$.
It turns out that one can construct such functions~$z$ as invariants 
with respect to the left and right actions of certain subgroups of~$G$.
(Subgroups of~$G$ act on~$G$ by left and right   
multiplication, which induces actions on the algebra of functions on~$G$.)

In Section~\ref{sgalr}, for a given DLR~\eqref{imu} we define a sequence of groups 
\begin{equation}
\notag
\hg{0}\subset\hg{1}\subset\hg{2}\subset\dots
\end{equation}
For each $p\in\zp$, the group $\hg{p}$ consists of certain invertible 
$\sm\times\sm$ matrix-functions of~$\la$.
The groups $\hg{p}$ are defined by induction on $p$ as follows.

We set $\hg{0}=\iel$, where $\iel$ is the identity matrix of size~$\sm$.
The group $\hg{1}$ is generated by the matrix-functions 
$M({\tilde u},\la)\cdot M(u,\la)^{-1}$ for all values of $u$, ${\tilde u}$.

For each $p\in\zsp$, the group $\hg{p+1}$ is generated by the elements 
$h_p\in\hg{p}$ and the elements 
$$
M(u,\la)\cdot h_p\cdot M(u,\la)^{-1}
$$ 
for all $h_p\in\hg{p}$ and all values of $u$.

Since the elements of $\hg{p}$ are invertible $\sm\times\sm$ matrix-functions of~$\la$, 
the group $\hg{p}$ acts on the group $G=\gq{\sm}{\cp}{k}$ by left multiplication.
(Performing this multiplication, we work modulo the relation ${(\la-\cp)^{k}=0}$, 
as has been said above.) A function $\mathbf{f}$ on~$G$ is called 
\emph{$\hg{p}$-left-invariant} if $\mathbf{f}$ is invariant with respect 
to this left action of~$\hg{p}$.

Let $\mathcal{H}\subset G$ be a subgroup of $G$. 
A function $\mathbf{f}$ on~$G$ is called 
\emph{$\mathcal{H}$-right-invariant} if $\mathbf{f}$ is invariant with respect 
to the action of~$\mathcal{H}$ on~$G$ by right multiplication.
(See Section~\ref{sgalr} for a detailed definition of left-invariant 
and right-invariant functions.)

For an arbitrary function $z=z(w^{q}_{ij})$ on $G$, 
the functions $\cS^\ga(z)$ for $\ga\in\zsp$ may depend on $w^{q}_{ij}$ and $u_l$.
In Section~\ref{secgt} we prove the following statements. 

If a function~$z$ on~$G$ is $\hg{p}$-left-invariant for some $p\in\zsp$ then 
the functions 
\begin{equation}
\lb{icscs}
\cS(z),\quad \cS^2(z),\quad\dots,\quad\cS^{p}(z)
\end{equation}
do not depend on $u_l$ for any $l\in\zz$. 
So $\cS(z)$, $\cS^2(z),\,\dots,\,\cS^{p}(z)$ depend only on $w^{q}_{ij}$ 
and can be regarded as functions on~$G$. 
This implies that $\cS^{p+1}(z)$ may depend only on $w^{q}_{ij}$ and $u=u_0$. 

Furthermore, for any closed connected Lie subgroup $H\subset G$, 
if a function~$z$ on~$G$ is $H$-right-invariant and 
is $\hg{p}$-left-invariant for some $p\in\zsp$ 
then the functions~\eqref{icscs} are $H$-right-invariant as well.

These statements (along with some other considerations) allow us to prove the following.
Suppose that a function $z$ on~$G$ satisfies the following conditions:
\begin{gather}
\lb{nvc1}
\begin{array}{c}
\text{there is a closed connected Lie subgroup $H\subset G$ of codimension $r>0$}\\ 
\text{such that $z$ is $H$-right-invariant and $\hg{r-1}$-left-invariant,}
\end{array}\\
\lb{nvc2}
\begin{array}{c}
\text{the differentials of the functions 
$z$, $\cS(z)$, $\cS^2(z),\,\dots,\,\cS^{r-1}(z)$}\\ 
\text{are linearly independent on an open dense subset of $G$,}
\end{array}\\
\lb{nvc3}
\begin{array}{c}
\text{the function $\cS^{r}(z)$ depends nontrivially on $u$.}
\end{array}
\end{gather}
Then $z$ obeys~\eqref{vcon1},~\eqref{vcon2}.

So~\eqref{vcon1},~\eqref{vcon2} follow from~\eqref{nvc1},~\eqref{nvc2},~\eqref{nvc3}.
As has been shown above, a function~$z$ obeying~\eqref{vcon1},~\eqref{vcon2} gives an MT.

Note that \eqref{nvc2},~\eqref{nvc3} can be regarded as non-degeneracy conditions.
In all examples known to us, if a nonconstant function~$z$ satisfies~\eqref{nvc1} 
then \eqref{nvc2},~\eqref{nvc3} are also satisfied.
Therefore, to construct an MT, 
one needs to find a nonconstant function~$z$ obeying~\eqref{nvc1}.
This can often be done as follows.

Note that $H$-right-invariant functions on~$G$ 
can be identified with functions on the quotient manifold~$G/H$.
For any $p\in\zp$, the above-mentioned left action of~$\hg{p}$ on $G$ 
induces a left action of~$\hg{p}$ on~$G/H$.

Therefore, we need to choose a positive integer~$r$ 
and a closed connected Lie subgroup $H\subset G$ of codimension $r>0$
such that the left action of~$\hg{r-1}$ on~$G/H$ possesses a nonconstant 
invariant~$z$. (That is, $z$ is a nonconstant $\hg{r-1}$-left-invariant function on~$G/H$.)
Then $z$ can be viewed as an $H$-right-invariant function on~$G$ and obeys~\eqref{nvc1}.

A detailed description of this theory is given in Section~\ref{secgt}.
Actually, in Section~\ref{secgt} we consider a more general Lie group $G$ 
which is equal to the product of several groups of the form $\gq{\sm}{\cp}{k}$.
(See formula~\eqref{ggr}.) Because of this, some notation in Section~\ref{secgt} 
is different from the notation in the present section.

In the above discussion we have assumed that $u$ is a scalar function.
In Section~\ref{secgt} we develop this theory in the case when 
$u=\big(u^1(n,t),\dots,u^{\diu}(n,t)\big)$ is 
an $\diu$-component vector-function for arbitrary $\diu\ge 1$.
To construct an MT in the case $\diu>1$, 
we use several invariants with respect to 
the left actions of the groups $\hg{p}$ and the right action  
of a Lie subgroup $H\subset G$,
which must satisfy certain conditions (see Theorem~\ref{tharb} for details).

\begin{remark}
\lb{conjug}
Let $\mathbf{g}(\la)$ be a $\GL_\sm(\fik)$-valued function of~$\la$.
Using the DLR~\eqref{imu}, consider the matrix-functions
\begin{equation}
\lb{tmtu}
\tilde M=\mathbf{g}(\la)\cdot M\cdot\mathbf{g}(\la)^{-1},\qquad\quad
\tilde\lt=\mathbf{g}(\la)\cdot\lt\cdot\mathbf{g}(\la)^{-1}.
\end{equation}
Then~\eqref{tmtu} is a DLR as well, which is said to be
\emph{gauge equivalent} to~\eqref{imu}.

Replacing~\eqref{imu} by~\eqref{tmtu}, one can sometimes simplify the form  
of the groups~$\hg{p}$. Examples of this procedure are presented in Section~\ref{secbog}.
\end{remark}

\begin{remark}
\lb{opendense}
As has been discussed above, 
if a function~$z$ on~$G$ is $\hg{p}$-left-invariant for some $p\in\zsp$ then 
the functions~\eqref{icscs} do not depend on $u_l$ for any $l\in\zz$.

Recall that by a function on~$G$ we mean an analytic function defined on an open dense subset of~$G$. So $z$ is defined on an open dense subset $\mathbb{U}\subset G$.
Then the definition of the operator $\cS$ implies that for each $l=1,\dots,p$ 
the function $\cS^l(z)$ is defined on an open dense subset $\mathbb{U}^l\subset G$, 
and it may happen that $\mathbb{U}^l\neq\mathbb{U}$.

Since $\mathbb{U}$, $\mathbb{U}^1,\dots,\mathbb{U}^p$ are open dense subsets of~$G$, 
the intersection $\mathbb{U}\cap\bigcap_{l=1}^p\mathbb{U}^l$ is also open and dense 
in~$G$. Then we can regard~\eqref{icscs} as functions on 
$\mathbb{U}\cap\bigcap_{l=1}^p\mathbb{U}^l$.

\end{remark}

\section{General theory}
\lb{secgt}

\subsection{Differential-difference equations and Darboux-Lax representations}
\lb{ddelr}

Fix $\diu\in\zsp$ and $\ua,\ub\in\zz$ such that $\ua\le\ub$.
Consider a differential-difference equation of the form
\begin{equation}
\lb{sdde}
u_t=\ff(u_\ua,u_{\ua+1},\dots,u_\ub)
\end{equation}
for an $\diu$-component vector-function $u=\big(u^1(n,t),\dots,u^{\diu}(n,t)\big)$ of
an integer variable~$n$ and a real or complex variable~$t$.
We use the standard notation $u_t=\pd_t(u)$ and $u_l=u(n+l,t)$ for $l\in\zz$.
In particular, $u_0=u$.

In other words, equation \eqref{sdde} encodes an infinite sequence of differential equations
$$
\pd_t\big(u(n,t)\big)=\ff\big(u(n+\ua,t),u(n+\ua+1,t),\dots,u(n+\ub,t)\big),\qquad\quad n\in\zz.
$$
Here $\ff$ is also an $\diu$-component vector-function $\ff=(\ff^1,\dots,\ff^{\diu})$.
So in components equation~\eqref{sdde} reads
\begin{equation}
\lb{msdde}
u_t^i=\ff^i(u_\ua^\ga,u_{\ua+1}^\ga,\dots,u_\ub^\ga),\qquad\quad
i=1,\dots,\diu.
\end{equation}

As usual in the formal theory of differential-difference equations,
one regards 
$$
u_l=(u^1_l,\dots,u^\diu_l),\qquad\quad l\in\zz,
$$ 
as independent quantities (which are sometimes 
called \emph{dynamical variables} in the literature).
When we write $f=f(\ud)$, we mean that $f$ may depend on any finite 
number of the variables $u_l^{\ga}$ for $l\in\zz$ and $\ga=1,\dots,\diu$.

When we write $f=f(u_p,\dots,u_q)$ or $f=f(u_p,u_{p+1},\dots,u_q)$ 
for some integers $p\le q$,
we mean that $f$ may depend on $u_l^{\ga}$ for $l=p,\dots,q$ 
and $\ga=1,\dots,\diu$.

Let $\cS$ be the \emph{shift operator} with respect to the integer variable $n$.
That is, for any function $g=g(n,t)$ one has $\cS(g)=g(n+1,t)$.

Since $u_l$ corresponds to $u(n+l,t)$, the operator $\cS$ 
and its powers $\cS^k$ for $k\in\zz$ act on functions of $u_l$ as follows
\begin{equation}
\lb{csuf}
\cS(u_l)=u_{l+1},\qquad\cS^k(u_l)=u_{l+k},\qquad
\cS^k\big(f(u_l,\dots)\big)=f(\cS^k(u_{l}),\dots).
\end{equation}
That is, applying $\cS^k$ to a function $f=f(\ud)$, 
we replace $u_l^{\ga}$ by $u_{l+k}^{\ga}$ in $f$ for all $l,\ga$.

The \emph{total derivative operator} $D_t$ corresponding to~\eqref{sdde} 
is given by the formula 
\begin{equation}
\lb{dtfu}
D_t(f)=\sum_{l,\ga}\cS^l(\ff^\ga)\cdot\frac{\pd f}{\pd u_l^\ga}.
\end{equation}

Let $\sm\in\zsp$. 
Let $M=M(\ud,\la)$ and $\lt=\lt(\ud,\la)$ be $\sm\times\sm$ matrix-functions
depending on the variables $u_l$ and a complex parameter $\la$.
Suppose that $M$ is invertible and 
\begin{equation}
\lb{lr}
D_t(M)=\cS(\lt)M-M\lt,
\end{equation}
where $D_t$ is given by~\eqref{dtfu}.
Then the pair $(M,\,\lt)$ is called a \emph{\textup{(}matrix\textup{)} Darboux-Lax representation} (DLR) for equation~\eqref{sdde}.
This implies that the auxiliary linear system 
\begin{equation}
\lb{syspsi}
\begin{aligned}
\cS(\Psi)&=M\Psi,\\
\pd_t(\Psi)&=\lt\Psi
\end{aligned}
\end{equation}
is compatible modulo~\eqref{sdde}. 
Here $\Psi=\Psi(n,t)$ is an invertible $\sm\times\sm$ matrix-function.

Our method to construct MTs is applicable to the case when 
\begin{equation}
\notag
M=M(u,\la)=M(u^1,\dots,u^{\diu},\la)
\end{equation}
depends only on 
$u=(u^1,\dots,u^{\diu})$ and $\la$. 
(That is, $M$ does not depend on $u_l$ for $l\neq 0$.)

So we will mainly consider the case when $M$, $\lt$ are of the form 
\begin{equation}
\lb{mu}
M=M(u,\la),\qquad\quad
\lt=\lt(u_\ta,u_{\ta+1},\dots,u_\tb,\la)
\end{equation}
for some integers $\ta\le\tb$.
Without loss of generality, we can assume $a<0$.

We will use also Remark~\ref{reminv} on invertibility of the matrix $M(u,\la)$.

\begin{remark}
\lb{rrelab}
If $M=M(u^1_{l_1},\dots,u^{\diu}_{l_\diu},\la)$ depends on $\la$ and 
$u^1_{l_1},\dots,u^{\diu}_{l_\diu}$ for some fixed integers $l_1,\dots,l_\diu$, 
then one can relabel 
$$
u^1:=u^1_{l_1},\quad\dots\quad,\quad u^{\diu}:=u^{\diu}_{l_\diu},
$$ 
which reduces $M(u^1_{l_1},\dots,u^{\diu}_{l_\diu},\la)$ to $M(u^1,\dots,u^{\diu},\la)$. 
An example of such relabelling is used in Section~\ref{sectoda}.
\end{remark}

Similarly to~\eqref{sdde}, consider a differential-difference equation
\begin{equation}
\lb{vdde}
v_t=\tff(v_\tua,v_{\tua+1},\dots,v_\tub)
\end{equation}
for an $\diu$-component vector-function $v=\big(v^1(n,t),\dots,v^{\diu}(n,t)\big)$.
Here $\tua,\tub\in\zz$, $\tua\le\tub$, and $v_l=v(n+l,t)$ for $l\in\zz$.
In particular, $v_0=v$.

Similarly to~\eqref{csuf},~\eqref{dtfu}, the operators $\cS$ and $D_t$ act on functions 
of the variables $v_l=(v^1_l,\dots,v^\diu_l)$ as follows
\begin{gather*}
\cS(v_l)=v_{l+1},\qquad\cS^k(v_l)=v_{l+k},\qquad
\cS^k\big(f(v_l,\dots)\big)=f(\cS^k(v_{l}),\dots),\qquad k\in\zz,\\
D_t\big(f(v_l,\dots)\big)=\sum_{l,\ga}\cS^l(\tff^\ga)\cdot\frac{\pd f}{\pd v_l^\ga},
\end{gather*}
where $\tff^\ga$ are the components of the vector-function 
$\tff=(\tff^1,\dots,\tff^\diu)$ from~\eqref{vdde}.

\begin{definition}
\lb{defmtt}
A \emph{Miura-type transformation} (MT) from equation~\eqref{vdde} to equation~\eqref{sdde} 
is determined by an expression of the form
\begin{equation}
\lb{uvf}
u=\mf(v_l,\dots)
\end{equation}
(where $\mf$ may depend on any finite number of the variables 
$v_l=(v^1_l,\dots,v^\diu_l)$, $l\in\zz$,)
such that if $v$ satisfies~\eqref{vdde} then $u$ given by~\eqref{uvf} satisfies~\eqref{sdde}.

More precisely, this means the following. 
In components formula~\eqref{uvf} reads 
\begin{equation}
\lb{uimfi}
u^i=\mf^i(v^\ga_l,\dots),\qquad\quad i=1,\dots,\diu,
\end{equation}
where $\mf^i$ are the components of the vector-function $\mf=(\mf^1,\dots,\mf^{\diu})$.
If we substitute the right-hand side of~\eqref{uimfi} in place of $u^i$ in~\eqref{msdde}, we get 
$$
D_t\big(\mf^i(v^\ga_l,\dots)\big)=
\ff^i\big(\cS^\ua(\mf^\ga),\cS^{\ua+1}(\mf^\ga),\dots,\cS^\ub(\mf^\ga)\big),\qquad\quad
i=1,\dots,\diu,
$$
which must be an identity in the variables $v^\ga_l$.
\end{definition}

\begin{remark}
Consider an MT $u=\mf(v_l,\dots)$ from equation~\eqref{vdde} to 
equation~\eqref{sdde}.
Suppose that equation~\eqref{sdde} possesses a Darboux-Lax 
representation (DLR) of the form~\eqref{mu}
such that the equation $D_t(M)=\cS(\lt)M-M\lt$ is equivalent to~\eqref{sdde}.

Let $\hat M=\hat M(v_l,\dots,\la)$ and 
$\hat\lt=\hat\lt(v_l,\dots,\la)$ be the matrix-functions obtained 
from~\eqref{mu} by substituting $\mf(v_l,\dots)$ in place of $u$.
So $\hat M=M\big(\mf(v_l,\dots),\la\big)$, 
and $\hat\lt$ is obtained from $\lt(u_\ta,\dots,u_\tb,\la)$ 
by substituting 
$\cS^i\big(\mf(v_l,\dots)\big)$ in place of $u_i$ for all $i\in\zz$.

Since $(M,\,\lt)$ is a DLR for~\eqref{sdde} and 
$u=\mf(v_l,\dots)$ is an MT from~\eqref{vdde} to~\eqref{sdde}, 
the obtained matrices $\hat M$, $\hat\lt$ form a DLR for equation~\eqref{vdde}
in the sense that the equation $D_t(\hat M)=\cS(\hat\lt)\hat M-\hat M\hat\lt$ 
is equivalent to a consequence of~\eqref{vdde}.

Then for any invertible $\sm\times\sm$ matrix  
$\mathbf{g}=\mathbf{g}(v_l,\dots,\la)$ the matrices
\begin{equation}
\lb{ggtmtu}
\check M=\cS(\mathbf{g})\cdot \hat M\cdot\mathbf{g}^{-1},\qquad\quad
\check\lt=D_t(\mathbf{g})\cdot\mathbf{g}^{-1}+
\mathbf{g}\cdot\hat\lt\cdot\mathbf{g}^{-1}
\end{equation}
form a DLR for~\eqref{vdde} as well.
The DLR $(\check M,\,\check\lt)$ is \emph{gauge equivalent}  
to the DLR $(\hat M,\,\hat\lt)$ 
with respect to the \emph{gauge transformation}~$\mathbf{g}$.

Very often, one can find a matrix $\mathbf{g}=\mathbf{g}(v_l,\dots,\la)$ 
such that the equation $D_t(\check M)=\cS(\check\lt)\check M-\check M\check\lt$ 
with $\check M$, $\check\lt$ given by~\eqref{ggtmtu} is equivalent to~\eqref{vdde}.
(So $D_t(\check M)=\cS(\check\lt)\check M-\check M\check\lt$ 
is equivalent to equation~\eqref{vdde} itself, while 
$D_t(\hat M)=\cS(\hat\lt)\hat M-\hat M\hat\lt$ 
is equivalent to a consequence of~\eqref{vdde}.)

For instance, consider the MT $u=v v_1$  
from the modified Volterra equation $v_t=v^2 (v_1-v_{-1})$
to the Volterra equation $u_t=u(u_1-u_{-1})$.
As we discuss in Example~\ref{exlrv} below,
the matrices~\eqref{lrvol} form a DLR for the Volterra equation.
Substituting $v v_1$ in place of $u$ in~\eqref{lrvol}, one obtains the matrices
\begin{equation}
\lb{hatlrvol}
\hat M=
\begin{pmatrix}
 0 & v v_1\\
 -1 & \lambda
 \end{pmatrix},
\qquad\quad
\hat\lt=
\begin{pmatrix}
 v v_1 & \lambda v_{-1}v \\
 -\lambda & \lambda^2+v_{-1}v 
\end{pmatrix}.
\end{equation}
The equation $D_t(\hat M)=\cS(\hat\lt)\hat M-\hat M\hat\lt$ 
is equivalent to $(v v_1)_t=vv_2(v_1)^2-v_1v_{-1}v^2$, 
which is a consequence of $v_t=v^2 (v_1-v_{-1})$.
 
Consider the gauge transformation 
$\mathbf{g}=\begin{pmatrix}
 1/v & 0 \\
 0 & 1 
\end{pmatrix}$. 
Computing \eqref{ggtmtu} for \eqref{hatlrvol}, we get
 \begin{equation}
\lb{chlrvol}
\check M=
\begin{pmatrix}
 0 & v\\
 -v & \lambda
 \end{pmatrix},
\qquad\quad
\check\lt=
\begin{pmatrix}
 v v_{-1} & \lambda v_{-1} \\
 -\lambda v & \lambda^2+vv_{-1} 
\end{pmatrix}.
\end{equation}
The equation $D_t(\check M)=\cS(\check\lt)\check M-\check M\check\lt$ 
is equivalent to $v_t=v^2 (v_1-v_{-1})$.
The DLR~\eqref{chlrvol} is well known.
\end{remark}

\subsection{Some algebraic and geometric structures 
associated with Darboux-Lax representations}
\lb{sgalr}

Fix $\sm\in\zsp$. 
Consider a DLR of the form~\eqref{mu}, 
where $M$, $\lt$ are $\sm\times\sm$ matrix-functions satisfying~\eqref{lr}, 
and $M$ is invertible.

Let $\gla$ be the group of \mbox{$\GL_\sm(\fik)$-valued} functions of~$\la$.
Since $M(u,\la)$ in~\eqref{mu} is supposed to be invertible, 
for every fixed value of~$u$ we have $M(u,\la)\in\gla$.

\begin{remark}
\lb{rgla}
The group $\gla$ consists of \mbox{$\GL_\sm(\fik)$-valued} functions of~$\la$ 
defined on some connected open subset of~$\fik$. 
Essentially, the only requirement is that $M(u,\la)\in\gla$ for every fixed value of~$u$, 
where $u$ runs through some connected open subset of~$\fik^\diu$.
To simplify notation, we do not mention these open subsets explicitly.

As has been said in Section~\ref{subs-conv}, each considered function 
is supposed to be analytic on its domain of definition.
\end{remark}

We define a sequence of subgroups 
$$
\hg{0}\subset\hg{1}\subset\hg{2}\subset\dots\subset\gla,
$$
associated with the DLR~\eqref{mu} as follows.

We set $\hg{0}=\iel$, where $\iel\in\gla$ is the identity element.
The subgroup $\hg{1}\subset\gla$ is generated by the elements 
\begin{equation}
\lb{mumu}
M({\tilde u},\la)\cdot M(u,\la)^{-1}\in\gla
\end{equation}
for all values of $u$, ${\tilde u}$.

Now we define $\hg{k}\subset\gla$ by induction on $k\in\zsp$ as follows.
For each $k\in\zsp$, the subgroup $\hg{k+1}\subset\gla$ is generated by the elements 
$h_k\in\hg{k}$ and the elements 
\begin{equation}
\lb{mhm}
M(u,\la)\cdot h_k\cdot M(u,\la)^{-1}\in\gla
\end{equation}
for all $h_k\in\hg{k}$ and all values of $u$.

\begin{example}
\lb{exlrv}
For the Volterra equation 
\begin{equation}
\lb{volt}
u_t=u(u_1-u_{-1}), 
\end{equation}
one has $\diu=1$, and we can take 
\begin{equation}
\lb{lrvol}
M(u,\la)=
\begin{pmatrix}
 0 & u\\
 -1 & \lambda
 \end{pmatrix},
\qquad\quad
\lt(u_{-1},u,\la)=
\begin{pmatrix}
 u & \lambda u_{-1} \\
 -\lambda & \lambda^2+u_{-1} 
\end{pmatrix}.
\end{equation}
Note that $M(u,\la)=
\begin{pmatrix}
 0 & u\\
 -1 & \lambda
 \end{pmatrix}$
is not invertible for $u=0$. 
In agreement with Remark~\ref{reminv}, we assume $u\neq 0$.

In this example we have $\sm=2$, 
so $\gla$ consists of $\GL_2(\fik)$-valued functions of~$\la$.
The subgroup $\hg{1}\subset\gla$ is generated by the elements 
\begin{equation}
\notag
M({\tilde u},\la)\cdot M(u,\la)^{-1}=
\begin{pmatrix}
 0 & \tilde{u}\\
 -1 & \lambda\\
 \end{pmatrix}
\cdot
\begin{pmatrix}
\frac{\lambda}{u} & -1\\
\frac{1}{u} & 0\\
\end{pmatrix}=
\begin{pmatrix}
 \frac{\tilde u}{u} & 0 \\
 0 & 1 
\end{pmatrix}.
\end{equation}
Hence $\hg{1}$ consists of the constant matrix-functions 
$\begin{pmatrix}
 a_1 & 0 \\
 0 & 1 
\end{pmatrix}$,  
where $a_1\in\fik$ is an arbitrary nonzero constant.

The subgroup $\hg{2}\subset\gla$ is generated by the elements 
$h_1=\begin{pmatrix}
 a_1 & 0 \\
 0 & 1 
\end{pmatrix}\in\hg{1}$ and the elements 
\begin{equation}
\notag
M(u,\la)\cdot h_1\cdot M(u,\la)^{-1}=
\begin{pmatrix}
 1 & 0 \\
 \frac{(1-a_1)\lambda}{u} & a_1
\end{pmatrix}
\end{equation}
for all nonzero $a_1,u\in\fik$.
Therefore, $\hg{2}$ consists of the matrix-functions 
$\begin{pmatrix}
 a_1 & 0 \\
 a_3\lambda & a_2 
\end{pmatrix}$
for all $a_1,a_2,a_3\in\fik$, $a_1\neq 0$, $a_2\neq 0$.
\end{example}

Consider the algebra $\fik[\la]$ of polynomials in~$\la$.
For every $\cp\in\fik$ and $k\in\zsp$, 
we denote by $((\la-\cp)^k)\subset\fik[\la]$ the ideal generated 
by the polynomial $(\la-\cp)^k\in\fik[\la]$.
Clearly, the quotient algebra $\fik[\la]/((\la-\cp)^k)$ is $k$-dimensional.

Denote by $\gq{\sm}{\cp}{k}$ the group 
of invertible $\sm\times\sm$ matrices with entries from $\fik[\la]/((\la-\cp)^k)$.
Equivalently, an element of $\gq{\sm}{\cp}{k}$ can be described as a sum of the form
\begin{equation}
\lb{www}
\sum_{q=0}^{k-1}(\la-\cp)^qW^q,\quad\qquad W^0\in\GL_\sm(\fik),\qquad
W^1,\dots,W^{k-1}\in\gl_\sm(\fik).
\end{equation}
This shows that $\gq{\sm}{\cp}{k}$ is a $k\sm^2$-dimensional Lie group.
When we multiply two elements of the form~\eqref{www} in the group~$\gq{\sm}{\cp}{k}$, 
we use the fact that in the algebra $\fik[\la]/((\la-\cp)^{k})$ 
one has $(\la-\cp)^{k}=0$.

Fix 
$$
\og\in\zsp,\quad\qquad \cp_1,\dots,\cp_\og\in\fik,
\quad\qquad
k_1,\dots,k_\og\in\zsp.
$$ 

\begin{remark}
\lb{cpbel}
According to Remark~\ref{rgla}, 
the group $\gla$ consists of \mbox{$\GL_\sm(\fik)$-valued} functions of~$\la$ 
defined on some connected open subset of~$\fik$. We assume that 
$\cp_1,\dots,\cp_\og$ belong to this open subset.
\end{remark}

Consider the following Lie group
\begin{equation}
\lb{ggr}
G=\gq{\sm}{\cp_1}{k_1}\times\dots\times\gq{\sm}{\cp_\og}{k_\og}.
\end{equation}
An element $g\in G$ is an $\og$-tuple $g=(g^1,\dots,g^\og)$, 
where $g^p\in\gq{\sm}{\cp_p}{k_p}$, $p=1,\dots,\og$, can be described as a sum of the form
\begin{equation}
\notag
g^p=\sum_{q=0}^{k_p-1}(\la-\cp_p)^qW^{p,q},\quad\qquad W^{p,0}\in\GL_\sm(\fik),\qquad
W^{p,1},\dots,W^{p,k_p-1}\in\gl_\sm(\fik).
\end{equation}
For $i,j=1,\dots,\sm$, we denote by $w^{pq}_{ij}=(W^{p,q})_{ij}$ the entries of 
the matrix $W^{p,q}$. Then $w^{pq}_{ij}$ can be regarded as coordinates 
on the Lie group $G$.

Consider functions of the form $f(w^{pq}_{ij},\ud)$ which may depend 
on the coordinates $w^{pq}_{ij}$ and a finite number of the variables 
$u_l=(u^1_l,\dots,u^\diu_l)$, $l\in\zz$. 
Using the DLR~\eqref{mu}, 
we extend the operators $\cS$, $D_t$ to such functions as follows.
 
To define $\cS(w^{pq}_{ij})$ and $D_t(w^{pq}_{ij})$, we use the formulas
\begin{gather}
\lb{csw}
\sum_{q=0}^{k_p-1}(\la-\cp_p)^q\cS\big(W^{p,q}\big)=
M(u,\la)\bigg(\sum_{q=0}^{k_p-1}(\la-\cp_p)^qW^{p,q}\bigg),
\qquad
(\la-\cp_p)^{k_p}=0,\\
\lb{dtw}
\sum_{q=0}^{k_p-1}(\la-\cp_p)^qD_t\big(W^{p,q}\big)=
\lt(u_\ta,\dots,u_\tb,\la)
\bigg(\sum_{q=0}^{k_p-1}(\la-\cp_p)^qW^{p,q}\bigg),
\qquad
(\la-\cp_p)^{k_p}=0.
\end{gather}
To compute the right-hand sides of~\eqref{csw} and~\eqref{dtw}, 
we take the corresponding Taylor series with respect to $\la$ 
and truncate these series at the term $(\la-\cp_p)^{k_p-1}$.
(Recall that in the algebra $\fik[\la]/((\la-\cp_p)^{k_p})$ 
one has $(\la-\cp_p)^{k_p}=0$, which we use here.)

Since $\cS(w^{pq}_{ij})$ and $D_t(w^{pq}_{ij})$ appear on the left-hand sides 
of~\eqref{csw} and~\eqref{dtw} respectively, formulas~\eqref{csw},~\eqref{dtw} define 
$\cS(w^{pq}_{ij})$ and $D_t(w^{pq}_{ij})$. 
Now for a function $f(w^{pq}_{ij},\ud)$ we set
\begin{gather*}
\cS\big(f(w^{pq}_{ij},\ud)\big)=f(\cS(w^{pq}_{ij}),\cS(u_l),\dots),\\
D_t\big(f(w^{pq}_{ij},\ud)\big)=\sum_{p,q,i,j}\frac{\pd f}{\pd w^{pq}_{ij}}
D_t(w^{pq}_{ij})+\sum_{l,\ga}\frac{\pd f}{\pd u_l^\ga}D_t(u_l^\ga),
\end{gather*}
where we use the fact that 
$\cS(w^{pq}_{ij})$, $\cS(u_l)$, $D_t(w^{pq}_{ij})$, $D_t(u_l^\ga)$ 
have already been defined.

The presented definition of the operator $\cS$ implies that the inverse~$\cS^{-1}$
is given by
\begin{gather}
\lb{s1w}
\sum_{q=0}^{k_p-1}(\la-\cp_p)^q\cS^{-1}\big(W^{p,q}\big)=
M(u_{-1},\la)^{-1}\bigg(\sum_{q=0}^{k_p-1}(\la-\cp_p)^qW^{p,q}\bigg),
\qquad
(\la-\cp_p)^{k_p}=0,\\
\notag
\cS^{-1}\big(f(w^{pq}_{ij},\ud)\big)=f(\cS^{-1}(w^{pq}_{ij}),\cS^{-1}(u_l),\dots),
\end{gather}
where $\cS^{-1}(u_l)=u_{l-1}$ and 
$\cS^{-1}(w^{pq}_{ij})$ is determined by~\eqref{s1w}.

\begin{lemma}
\lb{sdts}
One has $\cS\circ D_t=D_t\circ\cS$.
\end{lemma}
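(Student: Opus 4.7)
The plan is to exploit two structural facts about the operators: $\cS$ acts as a $\fik$-algebra endomorphism on the algebra $\mathcal{A}$ of analytic functions in the variables $u_l^\ga$ and $w^{pq}_{ij}$, while $D_t$ is a $\fik$-linear derivation of $\mathcal{A}$. A direct chain-rule computation shows that if $\cS(D_t(x)) = D_t(\cS(x))$ holds on each generator~$x$ of $\mathcal{A}$, then for any analytic $f(x_1,\dots,x_n)$ one has
\[
D_t(\cS(f)) = \sum_i \tfrac{\pd f}{\pd x_i}(\cS(x))\cdot D_t(\cS(x_i)) = \sum_i \tfrac{\pd f}{\pd x_i}(\cS(x))\cdot \cS(D_t(x_i)) = \cS(D_t(f)),
\]
so it is enough to verify the commutation on the generators $u_l^\ga$ and $w^{pq}_{ij}$ separately.

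For the dynamical variables $u_l^\ga$, formulas \eqref{csuf} and \eqref{dtfu} give $\cS(D_t(u_l^\ga)) = \cS^{l+1}(\ff^\ga) = D_t(\cS(u_l^\ga))$ at once.

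The content of the lemma is therefore concentrated in the generators $w^{pq}_{ij}$. To handle them efficiently, I would repackage the scalar definitions \eqref{csw} and \eqref{dtw} into the matrix identities
\[
\cS(W^p) = M\cdot W^p,\qquad\quad D_t(W^p) = \lt\cdot W^p
\]
in the matrix algebra over $\fik[\la]/((\la-\cp_p)^{k_p})$, where $W^p = \sum_{q=0}^{k_p-1}(\la-\cp_p)^q W^{p,q}$. Since $\cS$ respects matrix products and $D_t$ satisfies the Leibniz rule entry-wise, I compute
\begin{align*}
D_t(\cS(W^p)) &= D_t(M)\cdot W^p + M\cdot D_t(W^p) = D_t(M)\,W^p + M\lt\,W^p,\\
\cS(D_t(W^p)) &= \cS(\lt)\cdot\cS(W^p) = \cS(\lt)\,M\,W^p.
\end{align*}
Their difference equals $\bigl(D_t(M) - \cS(\lt)\,M + M\lt\bigr)W^p$, and the parenthesised factor vanishes by the Darboux--Lax relation~\eqref{lr}. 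Reading off the coefficients of $(\la-\cp_p)^q$ and the entries $(i,j)$ of the resulting zero matrix yields $\cS(D_t(w^{pq}_{ij})) = D_t(\cS(w^{pq}_{ij}))$ for every $p,q,i,j$.

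The only bookkeeping subtlety is that all matrix identities are taken modulo the ideal $((\la-\cp_p)^{k_p})$; since this ideal is stable under multiplication by polynomials in~$\la$, truncating at any intermediate step gives the same answer as truncating at the end, so no ambiguity arises. I do not anticipate any real obstacle here: the lemma is essentially a coordinate-free reformulation of the compatibility relation~\eqref{lr}, and the only work is arranging the scalar definitions \eqref{csw}, \eqref{dtw} into a matrix form that makes \eqref{lr} directly applicable.
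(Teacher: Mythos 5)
Your proposal is correct and follows essentially the same route as the paper: both reduce the identity to the generators $u_l^\ga$ and $w^{pq}_{ij}$ (the paper by noting that $\cS^{-1}\circ D_t\circ\cS$ is again a derivation, you by a direct chain-rule computation, which amounts to the same thing) and then verify the $w^{pq}_{ij}$ case by combining \eqref{csw}, \eqref{dtw} with the Darboux--Lax relation \eqref{lr}. Your matrix-form computation $D_t(M)W^p + M\lt W^p - \cS(\lt)MW^p = 0$ is exactly the calculation the paper leaves implicit in its final sentence.
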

\begin{proof}
It is sufficient to prove $\cS^{-1}\circ D_t\circ\cS=D_t$.
Note that $D_t$ is a derivation of the algebra of functions 
of the form $f=f(w^{pq}_{ij},\ud)$, and $\cS$ is an automorphism of this algebra.
Hence $\cS^{-1}\circ D_t\circ\cS$ is also a derivation of this algebra.

Since $D_t$ and $\cS^{-1}\circ D_t\circ\cS$ are derivations, 
to prove that $\cS^{-1}\circ D_t\circ\cS=D_t$, it is sufficient to show 
\begin{gather}
\lb{sdsul}
(\cS^{-1}\circ D_t\circ\cS)(u_l)=D_t(u_l)\qquad\quad\forall\,l,\\
\lb{sdsw}
(\cS^{-1}\circ D_t\circ\cS)(w^{pq}_{ij})=D_t(w^{pq}_{ij})
\qquad\quad\forall\,p,q,i,j.
\end{gather}
Equation~\eqref{sdsul} follows immediately from the definition of $\cS$ and $D_t$.
Equation~\eqref{sdsw} is equivalent to 
\begin{equation}
\lb{dssd}
(D_t\circ\cS)(w^{pq}_{ij})=(\cS\circ D_t)(w^{pq}_{ij})
\qquad\quad\forall\,p,q,i,j.
\end{equation}
Applying $D_t$ to~\eqref{csw} and $\cS$ to~\eqref{dtw}, and using~\eqref{lr},
we obtain~\eqref{dssd}.
\end{proof}

For any $h\in G$, we define the operator $\rs{h}$ on functions 
of the form $f(w^{pq}_{ij},\ud)$ as follows.
We set $\rs{h}(u_l)=u_l$ for all $l\in\zz$.
For a function~$y$ on an open subset $\mathbb{U}\subset G$, 
the function $\rs{h}(y)$ is defined on the open subset 
$\mathbb{U}h^{-1}\subset G$ by the formula
$$
\rs{h}(y)(g)=y(gh),\qquad g\in\mathbb{U}h^{-1}.
$$
Since $w^{pq}_{ij}$ are functions on $G$, we see that $\rs{h}(w^{pq}_{ij})$ 
is well defined.
Then for any function $f=f(w^{pq}_{ij},\ud)$ we set 
$\rs{h}(f)=f(\rs{h}(w^{pq}_{ij}),\ud)$.

Note that the operator $\rs{h}$ is invertible, and one has $\rs{h}^{-1}=\rs{h^{-1}}$.

\begin{lemma}
\lb{lemrs}
We have $\cS\circ\rs{h}=\rs{h}\circ\cS$ and 
$\rs{h}\circ D_t=D_t\circ\rs{h}$ for all $h\in G$.
\end{lemma}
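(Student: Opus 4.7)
The plan is to reduce both identities to a check on the algebra generators $u_l^\ga$ and $w^{pq}_{ij}$. Since $\cS$ and $\rs{h}$ are automorphisms of the algebra of functions $f(w^{pq}_{ij},\ud)$, both compositions $\cS\circ\rs{h}$ and $\rs{h}\circ\cS$ are algebra homomorphisms, so equality on generators will force equality everywhere. For the second identity, both $\rs{h}\circ D_t$ and $D_t\circ\rs{h}$ are $\rs{h}$-twisted derivations, i.e., linear operators $L$ obeying $L(fg)=\rs{h}(f)L(g)+\rs{h}(g)L(f)$; their difference is again such a derivation, and it must vanish everywhere as soon as it vanishes on the generators (by induction on polynomial degree via the twisted Leibniz rule).

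The generators $u_l^\ga$ present no difficulty: $\rs{h}(u_l^\ga)=u_l^\ga$ by definition, and since $\cS(u_l^\ga)=u_{l+1}^\ga$ and $D_t(u_l^\ga)=\cS^l(\ff^\ga)$ are functions of the $u$-variables alone, $\rs{h}$ acts trivially on the right-hand sides as well. So both identities reduce to tautologies on the $u_l^\ga$.

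The substantive step is the check on $w^{pq}_{ij}$. I would repackage them into $W^p=\sum_{q=0}^{k_p-1}(\la-\cp_p)^qW^{p,q}$, viewed as a $\gq{\sm}{\cp_p}{k_p}$-valued coordinate function on the $p$-th factor of $G$. By~\eqref{csw} one has $\cS(W^p)=M(u,\la)\cdot W^p$, and by~\eqref{dtw} one has $D_t(W^p)=\lt(u_\ta,\dots,u_\tb,\la)\cdot W^p$, where $M$ and $\lt$ are viewed in $\gq{\sm}{\cp_p}{k_p}$ via truncated Taylor expansion at $\cp_p$. Hence both $\cS$ and $D_t$ act on $W^p$ by left multiplication in $\gq{\sm}{\cp_p}{k_p}$, while $\rs{h}(W^p)=W^p\cdot h^p$ is right multiplication by the $p$-th component $h^p$ of $h\in G$ and does not touch the $u$-variables. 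Since left and right multiplications commute in any group,
\[
\cS\bigl(\rs{h}(W^p)\bigr)=M(u,\la)\cdot W^p\cdot h^p=\rs{h}\bigl(\cS(W^p)\bigr),
\]
and an identical computation works for $D_t$. Passing to matrix entries gives the desired identity on each $w^{pq}_{ij}$.

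I do not expect a serious obstacle. The whole argument rests on the commutation of left and right multiplication in $G$, together with the observation that $\rs{h}$ preserves the $u$-variables and that left multiplication by $M$ or $\lt$ does not couple distinct factors of $G$. The only bookkeeping is to justify the reduction to generators via the $\rs{h}$-twisted Leibniz rule and to verify that all matrix products are well-defined in the quotient algebra $\fik[\la]/((\la-\cp_p)^{k_p})$ for each $p$.
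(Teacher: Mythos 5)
Your proof is correct and follows essentially the same route as the paper: reduce both identities to the generators $u_l^\ga$ and $w^{pq}_{ij}$, and observe that $\cS$, $D_t$ act by left multiplication by $M$, $\lt$ while $\rs{h}$ acts by right multiplication by $h$ and fixes the $u$-variables, so the two commute. The only cosmetic difference is that the paper justifies the reduction to generators for the $D_t$ identity by noting that $\rs{h}^{-1}\circ D_t\circ\rs{h}$ is an ordinary derivation, whereas you compare two $\rs{h}$-twisted derivations directly; these are equivalent.
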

\begin{proof}
To prove $\cS\circ\rs{h}=\rs{h}\circ\cS$, it is sufficient to show
that $(\cS\circ\rs{h})(w^{pq}_{ij})=(\rs{h}\circ\cS)(w^{pq}_{ij})$, 
which follows easily from~\eqref{csw} and the definition of~$\rs{h}$.
The main idea is the following. 
According to~\eqref{csw}, the action of $\cS$ is given 
by the left multiplication by $M(u,\la)$. 
The action of $\rs{h}$ is given by the right multiplication by $h$.
Since the left and right multiplications commute, we have $\cS\circ\rs{h}=\rs{h}\circ\cS$.

Let us prove $\rs{h}\circ D_t=D_t\circ\rs{h}$, which is equivalent to 
$\rs{h}^{-1}\circ D_t\circ\rs{h}=D_t$.
Note that $D_t$ is a derivation of the algebra of functions 
of the form $f=f(w^{pq}_{ij},\ud)$, and $\rs{h}$ is an automorphism of this algebra.
Therefore, $\rs{h}^{-1}\circ D_t\circ\rs{h}$ is also a derivation of this algebra.

Since $D_t$ and $\rs{h}^{-1}\circ D_t\circ\rs{h}$ are derivations, 
to prove that $\rs{h}^{-1}\circ D_t\circ\rs{h}=D_t$, it is sufficient to show 
\begin{gather}
\lb{rsdsul}
(\rs{h}^{-1}\circ D_t\circ\rs{h})(u_l)=D_t(u_l)\qquad\quad\forall\,l,\\
\lb{rsdsw}
(\rs{h}^{-1}\circ D_t\circ\rs{h})(w^{pq}_{ij})=D_t(w^{pq}_{ij})
\qquad\quad\forall\,p,q,i,j.
\end{gather}
Equation~\eqref{rsdsul} is obvious, because $\rs{h}(u_l)=u_l$ for all $l$.
Equation~\eqref{rsdsw} is equivalent to 
\begin{equation}
\notag
(D_t\circ\rs{h})(w^{pq}_{ij})=(\rs{h}\circ D_t)(w^{pq}_{ij})
\qquad\quad\forall\,p,q,i,j,
\end{equation}
which follows easily from~\eqref{dtw} and the definition of~$\rs{h}$.
\end{proof}

\begin{example}
\lb{exsdtw}
Consider the case $\sm=2$, $\og=1$, $k_1=2$ and 
the DLR~\eqref{lrvol} of the Volterra equation~\eqref{volt}.
Then $G=\gq{2}{\cp_1}{2}$, and formulas~\eqref{csw},~\eqref{dtw} become
\begin{multline}
\lb{vcsw}
\cS\begin{pmatrix}
 w^{10}_{11} & w^{10}_{12} \\
 w^{10}_{21} & w^{10}_{22} 
\end{pmatrix}
+(\lambda -\cp_1)
\cS\begin{pmatrix}
 w^{11}_{11} & w^{11}_{12} \\
 w^{11}_{21} & w^{11}_{22} 
\end{pmatrix}
=\\
=\begin{pmatrix}
 0 & u\\
 -1 & \lambda
 \end{pmatrix}
\left(
\begin{pmatrix}
 w^{10}_{11} & w^{10}_{12} \\
 w^{10}_{21} & w^{10}_{22} 
\end{pmatrix}
+(\lambda -\cp_1)
\begin{pmatrix}
 w^{11}_{11} & w^{11}_{12} \\
 w^{11}_{21} & w^{11}_{22} 
\end{pmatrix}
\right),
\qquad
(\la-\cp_1)^2=0,
\end{multline}
\begin{multline}
\lb{vdtw}
D_t\begin{pmatrix}
 w^{10}_{11} & w^{10}_{12} \\
 w^{10}_{21} & w^{10}_{22} 
\end{pmatrix}
+(\lambda -\cp_1)
D_t\begin{pmatrix}
 w^{11}_{11} & w^{11}_{12} \\
 w^{11}_{21} & w^{11}_{22} 
\end{pmatrix}
=\\
=\begin{pmatrix}
 u & \lambda u_{-1} \\
 -\lambda & \lambda^2+u_{-1} 
\end{pmatrix}
\left(
\begin{pmatrix}
 w^{10}_{11} & w^{10}_{12} \\
 w^{10}_{21} & w^{10}_{22} 
\end{pmatrix}
+(\lambda -\cp_1)
\begin{pmatrix}
 w^{11}_{11} & w^{11}_{12} \\
 w^{11}_{21} & w^{11}_{22} 
\end{pmatrix}\right),
\qquad
(\la-\cp_1)^2=0.
\end{multline}
\end{example}

Let $\mathcal{H}\subset G$ be a subgroup of $G$ and 
$\mathbb{U}\subset G$ be an open subset.
A function $f$ on $\mathbb{U}$ is called \emph{$\mathcal{H}$-right-invariant} 
if for any $g_1\in\mathcal{H}$ and $g_2\in\mathbb{U}$ such that 
$g_2g_1\in\mathbb{U}$ we have $f(g_2g_1)=f(g_2)$.

Similarly, $f$ is called \emph{$\mathcal{H}$-left-invariant} 
if for any $g_1\in\mathcal{H}$ and $g_2\in\mathbb{U}$ such that 
$g_1g_2\in\mathbb{U}$ we have $f(g_1g_2)=f(g_2)$.

Let $H\subset G$ be a closed connected Lie subgroup of $G$.
Consider the manifold $G/H$ and the natural projection
\begin{equation}
\notag
\pi\colon G\to G/H,\qquad\pi(g)=gH\in G/H,\qquad g\in G.
\end{equation}
Here we use the fact that the points of $G/H$ 
correspond to the left cosets $gH$ of $H$ in $G$.

\begin{remark}
\lb{remhri}
Let $Y\subset G/H$ be an open subset.
For each function $f$ on $Y$, one defines the function $\pi^*(f)$ 
on the open subset $\pi^{-1}(Y)\subset G$ as follows
$$
\pi^*(f)(g)=f(\pi(g))\qquad\quad\forall\,g\in\pi^{-1}(Y).
$$

A function $y$ on $\pi^{-1}(Y)$ is of the form $y=\pi^*(f)$ for some 
function $f$ on~$Y$ iff $y$ is $H$-right-invariant.
Therefore, functions on $Y\subset G/H$ can be regarded 
as $H$-right-invariant functions on $\pi^{-1}(Y)\subset G$.

Note that a function $y$ on $\pi^{-1}(Y)$ is $H$-right-invariant 
iff $\rs{h}(y)=y$ for all $h\in H$.

\end{remark}

Let $H\subset G$ be a closed connected Lie subgroup of codimension $r>0$.
Then $\dim G/H=r$.
Let $y^1,\dots,y^r$ be functions defined on 
an open dense subset of the manifold~$G/H$.
We say that $y^1,\dots,y^r$ 
\emph{form a system of local coordinates almost everywhere on~$G/H$}
if there is an open dense subset $\mathcal{Y}\subset G/H$ such that 
$y^1,\dots,y^r$ form a system of local coordinates on a neighborhood 
of each point of~$\mathcal{Y}$.

This is equivalent to the fact that 
the differentials of the functions $y^1,\dots,y^r$ 
are linearly independent almost everywhere on~$G/H$.

\begin{remark}
\lb{remlcgh}
In this remark, by a function on~$G/H$ we mean a function defined
on an open subset of~$G/H$, and similarly for functions on~$G$.
Suppose that $y^1,\dots,y^r$ form a system of local coordinates almost everywhere on~$G/H$.
Then any $H$-right-invariant function $\vf$ on~$G$ 
can locally be expressed as $\vf=\psi(y^1,\dots,y^r)$ for some function~$\psi$.
(Such an expression for $\vf$ exists locally almost everywhere, i.e., 
on neighborhoods of the points where 
$y^1,\dots,y^r$ form a system of local coordinates.)

Let $y$ be a function on~$G/H$.
According to Remark~\ref{remhri}, 
we can regard $y$ as an $H$-right-invariant function
on~$G$, so $y=y(w^{pq}_{ij})$ depends on the coordinates $w^{pq}_{ij}$ 
of~$G$. Then, according to the definition of the operators $\cS$ and $D_t$,
\begin{gather}
\lb{csyw}
\text{$\cS(y)$ may depend on $w^{pq}_{ij}$ and $u$},\\
\lb{dtyw}
\text{$D_t(y)$ may depend on $w^{pq}_{ij}$ and $u_\ta,\dots,u_\tb$}.
\end{gather}
Since $y$ is $H$-right-invariant and 
$\cS\circ\rs{h}=\rs{h}\circ\cS$, $\rs{h}\circ D_t=D_t\circ\rs{h}$ for all $h\in H$, 
the following property holds.

If we fix the values of the variables $u_l$ for all $l$ 
(including the variable $u=u_0$), 
then $\cS(y)$, $D_t(y)$ become $H$-right-invariant functions on~$G$ 
and can locally be expressed as functions of $y^1,\dots,y^r$.
Combining this observation with~\eqref{csyw} and~\eqref{dtyw}, 
we see that locally $\cS(y)$, $D_t(y)$ can be written in the form
\begin{equation}
\notag
\cS(y)=\mu(y^1,\dots,y^r,u),\quad\qquad 
D_t(y)=\xi(y^1,\dots,y^r,u_\ta,\dots,u_\tb)
\end{equation}
for some functions $\mu$, $\xi$.

\end{remark}

Recall that $\gla$ is the group of $\GL_\sm(\fik)$-valued functions of~$\la$, 
and the group $G$ is defined by~\eqref{ggr}.
Consider the natural homomorphism $\rho\colon\gla\to G$ given by
\begin{equation}
\lb{rhog}
\rho(\mathbf{g})=(g^1,\dots,g^{\og})\,\in\,\gq{\sm}{\cp_1}{k_1}\times\dots\times\gq{\sm}{\cp_\og}{k_\og}=G,
\qquad\quad\mathbf{g}\in\gla.
\end{equation}
Here $\mathbf{g}=\mathbf{g}(\la)$ is a 
$\GL_\sm(\fik)$-valued function of~$\la$, and for each $p=1,\dots,\og$ the element 
\begin{equation}
\lb{gprho}
g^p=\sum_{q=0}^{k_p-1}(\la-\cp_p)^qg^{p,q}\,\in\,\gq{\sm}{\cp_p}{k_p},\qquad
g^{p,0}\in\GL_\sm(\fik),\qquad
g^{p,1},\dots,g^{p,k_p-1}\in\gl_\sm(\fik),
\end{equation}
is determined by the Taylor expansion of $\mathbf{g}(\la)$ at the point $\la=\cp_p$.

Let $H\subset G$ be a closed connected Lie subgroup.
For any subgroup $\mathcal{H}\subset G$, we have defined the notion 
of $\mathcal{H}$-left-invariant functions on open subsets of~$G$.
Since $G$ acts by left multiplication on the manifold $G/H$, 
one can define similarly the notion of  
$\mathcal{H}$-left-invariant functions on open subsets of~$G/H$.

Recall that $\hg{k}\subset\gla$ for each $k\in\zp$.
Using the embedding $\rho(\hg{k})\subset G$, we can speak about 
$\hg{k}$-left-invariant functions on open subsets of $G$ and $G/H$.
So, by definition, a function is \emph{$\hg{k}$-left-invariant} 
if it is $\rho(\hg{k})$-left-invariant.

\begin{lemma}
\lb{lemyh}
Let $y=y(w^{pq}_{ij})$ be a function on an open dense subset of~$G$.
According to~\eqref{csw}, the function $\cS(y)=y(\cS(w^{pq}_{ij}))$ may depend 
on $w^{pq}_{ij}$, $u$.
If $y$ is $\hg{k}$-left-invariant for some $k>0$, then
$\cS(y)$ does not depend on $u$ and is $\hg{k-1}$-left-invariant. 
\end{lemma}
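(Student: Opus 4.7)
The plan is to reinterpret the action of $\cS$ on a function $y=y(w^{pq}_{ij})$ on~$G$ as a left-translation. Formula~\eqref{csw} says that, for each $p$, the $p$-th component of the shifted tuple is obtained from that of $g\in G$ by left-multiplication inside $\gq{\sm}{\cp_p}{k_p}$ by the class of $M(u,\la)$ modulo $(\la-\cp_p)^{k_p}$. In the language of the homomorphism~\eqref{rhog}, this is precisely the statement that
\[
\cS(y)(g)=y\bigl(\rho(M(u,\la))\cdot g\bigr),\qquad g\in G,
\]
so the only way $u$ can enter $\cS(y)$ is through the left factor $\rho(M(u,\la))$.

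For the first claim, I would compare the values at two different choices $u$ and $\tilde u$. Writing
\[
\rho(M(\tilde u,\la))\cdot g=\rho\bigl(M(\tilde u,\la)\,M(u,\la)^{-1}\bigr)\cdot\rho(M(u,\la))\cdot g,
\]
the element $M(\tilde u,\la)\,M(u,\la)^{-1}$ lies in~$\hg{1}\subset\hg{k}$ by the very definition of~$\hg{1}$ in~\eqref{mumu}. Since $y$ is $\hg{k}$-left-invariant, it follows that $y\bigl(\rho(M(\tilde u,\la))g\bigr)=y\bigl(\rho(M(u,\la))g\bigr)$, so the value of $\cS(y)$ at $g$ is the same for every admissible $u$; hence $\cS(y)$ does not depend on~$u$.

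For the second claim, I would take any $h\in\hg{k-1}$ and use the commutation identity
\[
\rho(M(u,\la))\cdot\rho(h)=\rho\bigl(M(u,\la)\,h\,M(u,\la)^{-1}\bigr)\cdot\rho(M(u,\la)).
\]
By the inductive definition of~$\hg{k}$ in~\eqref{mhm}, the conjugate $M(u,\la)\,h\,M(u,\la)^{-1}$ belongs to~$\hg{k}$. Applying the $\hg{k}$-left-invariance of~$y$ once more gives
\[
\cS(y)(\rho(h)\cdot g)=y\bigl(\rho(M(u,\la))\cdot\rho(h)\cdot g\bigr)=y\bigl(\rho(M(u,\la))\cdot g\bigr)=\cS(y)(g),
\]
which is exactly the $\hg{k-1}$-left-invariance of~$\cS(y)$.

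The only real hurdle is bookkeeping: one has to check that $\rho$ is a homomorphism, so that the identities above can be read off componentwise from multiplication in~$\gla$ after truncation modulo each $(\la-\cp_p)^{k_p}$. This is immediate because each $\fik[\la]/((\la-\cp_p)^{k_p})$ is a quotient algebra. The conceptual content is simply that $\cS$ acts by left translation on~$G$, while the filtration $\hg{0}\subset\hg{1}\subset\dots$ was designed precisely so that changing $u$ and conjugating by $M(u,\la)$ both stay controlled by one more level of this filtration.
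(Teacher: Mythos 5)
Your proposal is correct and follows essentially the same route as the paper's own proof: both interpret $\cS$ as left translation by $\rho(M(u,\la))$, derive independence of $u$ from the generators $M(\tilde u,\la)M(u,\la)^{-1}\in\hg{1}\subset\hg{k}$, and derive $\hg{k-1}$-left-invariance from the conjugation relation defining $\hg{k}$. The only difference is cosmetic: the paper phrases the key steps via the substitution $g=\rho(M(u,\la)^{-1})\hat g$ and is slightly more explicit about the open dense subsets on which the functions are defined.
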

\begin{proof}

In what follows, we use the symbols $g$ and $\hat g$ as arguments of some functions 
defined on open dense subsets of $G$.
When we write $g\in G$ (or $\hat g\in G$), we mean that $g$ (or $\hat g$) 
belongs to an open subset where the considered function is defined.

For example, since $y$ is defined on an open dense subset of~$G$, 
for any fixed values of ${\tilde u}$, $u$ there is an open dense 
subset $\mathbb{U}\subset G$ such that 
$y(\hat g)$ and $y\big(\rho(M({\tilde u},\la)\cdot M(u,\la)^{-1})\hat g\big)$ 
are defined for all $\hat g\in\mathbb{U}$. In equation~\eqref{zgzg} below 
we assume that $\hat g$ belongs to such an open dense subset, 
so that the left-hand side and right-hand side of~\eqref{zgzg} are well defined.
Similar considerations apply also to other equations in this proof.

Since $\cS(y)=y(\cS(w^{pq}_{ij}))$ may depend on $w^{pq}_{ij}$ and $u$, 
for each fixed value of $u$ we can regard $\cS(y)$ as a function 
on an open dense subset of~$G$. 
Formula~\eqref{csw} implies that this interpretation of $\cS(y)$ can be written as
\begin{equation}
\lb{cszg}
\cS(y)(g)=y\big(\rho(M(u,\la))g\big),\qquad\quad g\in G.
\end{equation}

Recall that $\hg{1}$ is generated by the elements~\eqref{mumu}, 
and we have $\hg{1}\subset\hg{k}$ for $k>0$.
Therefore, the elements~\eqref{mumu} belong to $\hg{k}$. 
Since $y$ is defined on an open dense subset of~$G$
and is $\hg{k}$-left-invariant, for any fixed values 
of ${\tilde u}$, $u$ we have 
\begin{equation}
\lb{zgzg}
y\big(\rho(M({\tilde u},\la)\cdot M(u,\la)^{-1})\hat g\big)=y(\hat g),
\qquad\quad\hat g\in G.
\end{equation}
Taking $g=\rho(M(u,\la)^{-1})\hat g$, from~\eqref{zgzg} we obtain
\begin{equation}
\lb{zmzm}
y\big(\rho(M({\tilde u},\la))g\big)=
y\big(\rho(M(u,\la))g\big),\qquad\quad g\in G.
\end{equation}
Combining~\eqref{zmzm} with~\eqref{cszg}, we see that $\cS(y)$
does not depend on $u$.

Let us show that $\cS(y)$ is $\hg{k-1}$-left-invariant. 
Recall that for any $h\in\hg{k-1}$ and any fixed value of~$u$ one has
$M(u,\la)\cdot h\cdot M(u,\la)^{-1}\in\hg{k}$.
Since $y$ is defined on an open dense subset of~$G$
and is $\hg{k}$-left-invariant, we have 
\begin{equation}
\lb{zmhm}
y\big(\rho(M(u,\la)\cdot h\cdot M(u,\la)^{-1})\hat g\big)=y(\hat g),
\qquad\quad\hat g\in G.
\end{equation}
Taking $g=\rho(M(u,\la)^{-1})\hat g$, from~\eqref{zmhm} we get
\begin{equation}
\lb{yy}
y\big(\rho(M(u,\la)\cdot h)g\big)=
y\big(\rho(M(u,\la))g\big),\qquad\quad g\in G.
\end{equation}
Combining~\eqref{yy} with~\eqref{cszg}, one obtains
$$
\cS(y)\big(\rho(h)g\big)=\cS(y)(g),\qquad g\in G,\qquad h\in\hg{k-1},
$$
which says that $\cS(y)$ is $\hg{k-1}$-left-invariant. 
\end{proof}

\begin{lemma}
\lb{suiz}
Let $H\subset G$ be a closed connected Lie subgroup of~$G$.
Let $z$ be a function on an open dense subset of~$G/H$.
According to Remark~\ref{remhri}, the function~$z$ can also be regarded 
as an $H$-right-invariant function on an open dense subset of~$G$.
Therefore, we can consider the functions $\cS^k(z)$, $k\in\zsp$,
which may depend on~$w^{pq}_{ij}$ and $u_l$.

Suppose that $z$ is $\hg{\al}$-left-invariant for some $\al\in\zsp$.
Then the functions 
\begin{equation}
\lb{stk}
\cS^k(z),\qquad\quad k=1,\dots,\al,
\end{equation}
do not depend on $u_l$ for any $l\in\zz$.
Furthermore, the functions~\eqref{stk} are $H$-right-invariant 
and can be viewed as functions on an open dense subset of~$G/H$.
\end{lemma}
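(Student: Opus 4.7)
The plan is to prove the lemma by induction on $k$ from $1$ up to $\al$, carrying the stronger induction hypothesis that $\cS^k(z)$ is (i) a function of the coordinates $w^{pq}_{ij}$ alone (hence independent of every $u_l$), (ii) $H$-right-invariant, and (iii) $\hg{\al-k}$-left-invariant. This packaging is what allows Lemma \ref{lemyh} to be applied iteratively.

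For the base case $k=1$, I would apply Lemma \ref{lemyh} to $z$ itself, which is legitimate since $z$ is a function of $w^{pq}_{ij}$ alone (viewed on $G$ via Remark \ref{remhri}) and $\hg{\al}$-left-invariant with $\al\ge 1$. This yields immediately that $\cS(z)$ is independent of $u=u_0$ and is $\hg{\al-1}$-left-invariant. Combined with the observation that the shift formula \eqref{csw} can only introduce $u_0$ when acting on a function of $w^{pq}_{ij}$, this forces $\cS(z)$ to depend solely on $w^{pq}_{ij}$, so no $u_l$ appears. For $H$-right-invariance, I would invoke Lemma \ref{lemrs}: since $\cS\circ\rs{h}=\rs{h}\circ\cS$ and $\rs{h}(z)=z$ for all $h\in H$, we get $\rs{h}(\cS(z))=\cS(\rs{h}(z))=\cS(z)$, so $\cS(z)$ is $H$-right-invariant and, by Remark \ref{remhri}, descends to a function on an open dense subset of $G/H$.

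For the inductive step, assume the hypothesis for some $k$ with $1\le k<\al$. Since $\cS^k(z)$ depends only on $w^{pq}_{ij}$ and is $\hg{\al-k}$-left-invariant with $\al-k\ge 1$, Lemma \ref{lemyh} applies to it and yields that $\cS(\cS^k(z))=\cS^{k+1}(z)$ is independent of $u=u_0$ and is $\hg{\al-k-1}$-left-invariant. Again, \eqref{csw} can only introduce $u_0$ when shifting a function of $w^{pq}_{ij}$, so $\cS^{k+1}(z)$ depends only on $w^{pq}_{ij}$. The same commutation $\cS\circ\rs{h}=\rs{h}\circ\cS$ from Lemma \ref{lemrs}, combined with $\rs{h}(\cS^k(z))=\cS^k(z)$ from the induction hypothesis, gives $\rs{h}(\cS^{k+1}(z))=\cS^{k+1}(z)$, completing the induction.

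The main obstacle is that Lemma \ref{lemyh} is stated only for functions on~$G$ that already depend solely on the $w^{pq}_{ij}$; applying it inductively therefore requires that at each stage the object $\cS^k(z)$ has no residual dependence on any $u_l$ before we take one more shift. This is precisely why it is essential to include "depends only on $w^{pq}_{ij}$" in the induction hypothesis rather than merely "does not depend on $u_0$", and to observe explicitly that each application of $\cS$ introduces only $u_0$ (not previously existing $u_l$), so that Lemma \ref{lemyh} then eliminates even that dependence.
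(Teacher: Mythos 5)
Your proof is correct and follows essentially the same route as the paper's: an induction on $k$ that repeatedly applies Lemma~\ref{lemyh} to obtain independence of $u$ and $\hg{\al-k}$-left-invariance, together with the commutation $\cS\circ\rs{h}=\rs{h}\circ\cS$ from Lemma~\ref{lemrs} for the $H$-right-invariance. Your explicit remark that each shift via~\eqref{csw} can only introduce $u_0$ (so that Lemma~\ref{lemyh} then removes all $u_l$-dependence) is a useful clarification of a point the paper leaves implicit in the statement of Lemma~\ref{lemyh}.
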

\begin{proof}
Applying Lemma~\ref{lemyh} to $y=z$, we see that $\cS(z)$ 
does not depend on $u_l$ for any $l\in\zz$ and is $\hg{\al-1}$-left-invariant.
If $\al\ge 2$, 
then, applying Lemma~\ref{lemyh} to $y=\cS(z)$, one obtains that the function 
$\cS^2(z)=\cS(\cS(z))$ 
does not depend on $u_l$ for any $l\in\zz$ and is $\hg{\al-2}$-left-invariant.

Similarly, using Lemma~\ref{lemyh}, 
by induction on $k=1,\dots,\al$ we prove that the function 
$\cS^k(z)$ does not depend on $u_l$ for any $l\in\zz$ and 
is $\hg{\al-k}$-left-invariant.

Since $z$ is $H$-right-invariant and is defined on an open dense subset of~$G/H$, 
we have also the same property for the functions~\eqref{stk}, 
because $\cS\circ\rs{h}=\rs{h}\circ\cS$ for all $h\in H$.
\end{proof}

\subsection{The method to construct MTs}
\lb{smcmt}

Recall that the Lie group $G$ is given by~\eqref{ggr}.
We are going to describe how to construct MTs from some functions on open subsets of $G/H$, 
where $H\subset G$ is a closed connected Lie subgroup.

Recall that $u_l=(u^1_l,\dots,u^{\diu}_l)$ is an $\diu$-component vector for each $l\in\zz$.
To clarify the main idea, we consider first the case $\diu=1$.
(The case of arbitrary $\diu\in\zsp$ will be described in Theorem~\ref{tharb}.)

Note that Theorems~\ref{thscal},~\ref{tharb} do not mention the DLR~\eqref{mu} 
explicitly, but the matrices $M$, $\lt$ from~\eqref{mu} appear in the definition~\eqref{csw}, \eqref{dtw} 
of the operators $\cS$, $D_t$, which are used in Theorems~\ref{thscal},~\ref{tharb}.

In the case $\diu=1$ the vector $u_l$ 
has only one component $u^1_l$, which we denote by the same symbol $u_l$. 
According to our notation, $u=u_0$.

\begin{theorem}
\lb{thscal}
Suppose that $\diu=1$. 
Recall that the Lie group $G$ is defined by~\eqref{ggr}.
Let $H\subset G$ be a closed connected Lie subgroup of codimension $r>0$.

Let $z$ be an $\hg{r-1}$-left-invariant function 
on an open dense subset of the manifold~$G/H$. 
\textup{(}According to Remark~\ref{remhri}, the function~$z$ can also be regarded 
as an $H$-right-invariant 
and $\hg{r-1}$-left-invariant function on an open dense subset of~$G$.\textup{)}

By Lemma~\ref{suiz}, the functions $\cS^k(z)$, $k=1,\dots,r-1$, 
do not depend on $u_l$, and we can view $\cS^k(z)$, $k=1,\dots,r-1$, 
as functions on an open dense subset of~$G/H$. 
Suppose that
\begin{gather}
\lb{sclc}
\begin{array}{c}
\text{the functions $z,\ \cS(z),\ \cS^2(z),\,\dots,\ \cS^{r-1}(z)$ form}\\
\text{a system of local coordinates almost everywhere on~$G/H$}\\
\text{\textup{(}i.e., the differentials of the functions 
are linearly independent almost everywhere\textup{)}},
\end{array}\\
\lb{pdu}
\text{the function $\frac{\partial}{\partial u}\big(\cS^{r}(z)\big)$
is not identically zero}.
\end{gather}

Set $z_0=z$ and $z_k=\cS^k(z)$ for $k=1,\dots,r$.
Condition~\eqref{sclc} says that $z_0,z_1,\dots,z_{r-1}$ form a system of local coordinates
almost everywhere on $G/H$.
Applying Remark~\ref{remlcgh} to this system of local coordinates, 
we see that locally one has 
\begin{gather}
\lb{zrfzu}
z_r=\cS^{r}(z)=\cS(z_{r-1})=\fe(z_0,z_1,\dots,z_{r-1},u),\\
\notag
D_t(z)=Q(z_0,\dots,z_{r-1},u_\ta,\dots,u_\tb)
\end{gather}
for some functions $\fe$ and $Q$.
By the implicit function theorem, condition~\eqref{pdu} implies that 
locally from equation~\eqref{zrfzu} we can express $u$ in terms of $z_0,z_1,\dots,z_r$
\begin{equation}
\lb{ufz}
u=\mf(z_0,z_1,\dots,z_{r}).
\end{equation}
\textup{(}We can do this on a neighborhood of a point where 
$\dfrac{\partial}{\partial u}\big(\cS^{r}(z)\big)\neq 0$.\textup{)}

Then one obtains an MT for equation~\eqref{sdde} as follows.
We introduce new variables $\fv_l$ for $l\in\mathbb{Z}$.
Using the function $\mf(z_0,z_1,\dots,z_{r})$ from~\eqref{ufz}, 
for each $j\in\mathbb{Z}$ we can consider the function 
$\mf(\fv_{j},\fv_{j+1},\dots,\fv_{j+r})$.
Let $P(\fv_{\ta},\dots,\fv_{{\tb}+r})$ be the function 
obtained from $Q(z_0,\dots,z_{r-1},u_\ta,\dots,u_\tb)$
by replacing $z_i$ with $\fv_i$ and $u_j$ with $\mf(\fv_{j},\fv_{j+1},\dots,\fv_{j+r})$.
That is,
\begin{equation}
\lb{pq}
P(\fv_{\ta},\dots,\fv_{{\tb}+r})
=Q\big(\fv_0,\dots,\fv_{r-1},\mf(\fv_{\ta},\dots,\fv_{\ta+r}),\dots,
\mf(\fv_{\tb},\dots,\fv_{\tb+r})\big).
\end{equation}

We introduce the formula
\begin{equation}
\lb{vtp}
v_t=P(\fv_{\ta},\dots,\fv_{{\tb}+r})
\end{equation}
and regard~\eqref{vtp} as a differential-difference equation for the variable~$\fv$.
Then the formula 
\begin{equation}
\lb{umffv}
u=\mf(\fv_0,\fv_1,\dots,\fv_{r})
\end{equation}
determines an MT from equation~\eqref{vtp} to equation~\eqref{sdde}. 

As usual, we can use the identification $v_0=v$. 
Then \eqref{umffv} becomes $u=\mf(\fv,\fv_1,\dots,\fv_{r})$.
\end{theorem}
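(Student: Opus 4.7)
The plan is to run the construction described in the statement and then check the MT identity by a descent argument: first derive the identity on $G$ from the DLR calculus of Section~\ref{sgalr}, and then lift it to an identity in the independent jet variables $\fv_l$ by a functional-independence statement for the family $\{z_l=\cS^l(z)\}$.

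For the setup of $\fe$, $Q$, $\mf$ and~$P$, Lemma~\ref{suiz} already gives that $z_1,\dots,z_{r-1}$ are $H$-right-invariant and independent of every~$u_l$, so they descend to functions on an open dense subset of $G/H$. One further step of the argument of Lemma~\ref{lemyh}, combined with $\cS\circ\rs{h}=\rs{h}\circ\cS$ from Lemma~\ref{lemrs}, shows that $\cS^r(z)$ is still $H$-right-invariant and may depend only on the coordinates of~$G$ and on $u=u_0$. Remark~\ref{remlcgh}, applied with the local coordinates $z_0,\dots,z_{r-1}$ on~$G/H$, then yields the representation $z_r=\fe(z_0,\dots,z_{r-1},u)$, and hypothesis~\eqref{pdu} combined with the implicit function theorem produces $u=\mf(z_0,\dots,z_r)$. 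Applying the same remark to $D_t(z)$, which by Lemma~\ref{lemrs} is $H$-right-invariant and depends only on the $G$-coordinates and on $u_\ta,\dots,u_\tb$, gives $D_t(z)=Q(z_0,\dots,z_{r-1},u_\ta,\dots,u_\tb)$, so $P$ and equation~\eqref{vtp} are defined unambiguously.

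Next I would derive the MT identity on~$G$. Applying $\cS^j$ to $u=\mf(z_0,\dots,z_r)$ gives $u_j=\mf(z_j,\dots,z_{j+r})$, and applying $\cS^k$ to the formula for $D_t(z)$, using Lemma~\ref{sdts}, gives $D_t(z_k)=Q(z_k,\dots,z_{k+r-1},u_{\ta+k},\dots,u_{\tb+k})$. Differentiating $u=\mf(z_0,\dots,z_r)$ by the chain rule for $D_t$ and using $D_t(u)=\ff(u_\ta,\dots,u_\tb)$ from~\eqref{sdde} then yields an identity on~$G$ whose right-hand side is, by the definition~\eqref{pq} of~$P$, exactly what the total derivative of $\mf(\fv_0,\dots,\fv_r)$ along equation~\eqref{vtp} becomes after the substitution $\fv_l=z_l$, and whose left-hand side reads $\ff\big(\mf(z_\ta,\dots,z_{\ta+r}),\dots,\mf(z_\tb,\dots,z_{\tb+r})\big)$. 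So the MT condition of Definition~\ref{defmtt} is verified after this substitution.

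The main obstacle is the final step: upgrading this from an identity on $G$ to an identity in the independent variables~$\fv_l$. Both sides of the desired identity involve only finitely many $\fv_l$, and by the translation invariance of the construction one may, after an overall $\cS^N$-shift, assume that all involved indices are non-negative. It then suffices to show that for every sufficiently large~$N'$ the functions $z_0,z_1,\dots,z_{N'}$ are functionally independent on an open dense subset of $G/H\times\fik^{N'-r+1}$, parametrised by $(z_0,\dots,z_{r-1},u_0,u_1,\dots,u_{N'-r})$. I would prove this by induction on~$N'$: the base case $N'=r-1$ is hypothesis~\eqref{sclc}, and for the inductive step the relation $z_{r+k}=\fe(z_k,\dots,z_{r+k-1},u_k)$, obtained by applying $\cS^k$ to $z_r=\fe(z_0,\dots,z_{r-1},u)$, together with $\pd\fe/\pd u\neq 0$ from~\eqref{pdu}, shows that $dz_{r+k}$ has a non-zero $du_k$-component while $dz_0,\dots,dz_{r+k-1}$ do not involve~$du_k$, preserving linear independence. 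Once this functional independence is in hand, the pullback $\fv_l\mapsto z_l$ is injective on the analytic functions of the relevant $\fv_l$ appearing in the identity, and the identity proved on~$G$ descends to the required formal identity in the~$\fv_l$, which is exactly the MT condition.
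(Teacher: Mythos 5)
Your proposal is correct and follows essentially the same route as the paper's proof: derive the identity on $G$ by applying $D_t$ to $u=\mf(z_0,\dots,z_r)$, using $\cS\circ D_t=D_t\circ\cS$ and $u_j=\cS^j(\mf(z_0,\dots,z_r))$, and then transfer it to the independent variables $\fv_l$ via functional independence of the family $\{z_l\}$ (the paper's Lemma~\ref{lemfi}). The only difference is cosmetic: you establish functional independence by an inductive differential argument exploiting $\partial\fe/\partial u\neq 0$, whereas the paper shows that $z_0,\dots,z_{q-p}$ and $z_0,\dots,z_{r-1},u_0,\dots,u_{q-p-r}$ are mutually expressible through~\eqref{zrjf} and~\eqref{ujfz}; both rest on the same relations and the same non-degeneracy hypothesis~\eqref{pdu}.
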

\begin{proof}
As has been said above, we can regard $z$ as 
a function on an open dense subset of~$G$, so $z=z(w^{pq}_{ij})$ 
is a function of the coordinates $w^{pq}_{ij}$ on~$G$.
Set $z_k=\cS^k(z)$ for all $k\in\mathbb{Z}$.
The functions $z_k$ may depend on $w^{pq}_{ij}$ and $u_l$.

\begin{lemma}
\lb{lemfi}
The functions $z_k$, $k\in\mathbb{Z}$, are functionally independent.
\end{lemma}
\begin{proof}
Suppose that $z_k$, $k\in\mathbb{Z}$, are not functionally independent.
This means that there are $p,q\in\zz$, $p\le q$, and a nontrivial relation of the form
\begin{equation}
\lb{rz}
R(z_{p},z_{p+1},\dots,z_{q})=0.
\end{equation}
Applying $\cS^{-p}$ to equation~\eqref{rz} and using the identities $\cS^j(z_k)=z_{k+j}$ for $j\in\zz$,
we obtain 
$$
R(z_{0},z_{1},\dots,z_{q-p})=0, 
$$
which implies that 
\begin{equation}
\lb{zzpq}
z_{0},\quad z_{1},\quad \dots,\quad z_{q-p}
\end{equation}
are not functionally independent.
The functions $z_{0},\dots,z_{r-1}$ form a system of local coordinates 
almost everywhere on~$G/H$, 
hence $z_{0},\dots,z_{r-1}$ are functionally independent.
Therefore, ${q-p\ge r}$.

For any $j\in\zz$, applying $\cS^j$ to equations~\eqref{zrfzu},~\eqref{ufz} 
and using the identities $\cS^j(z_k)=z_{k+j}$, $\cS^j(u)=u_j$,
one gets
\begin{gather}
\lb{zrjf}
z_{r+j}=\fe(z_j,z_{j+1},\dots,z_{j+r-1},u_j),\\
\lb{ujfz}
u_j=\mf(z_j,z_{j+1},\dots,z_{j+r})\qquad\quad\forall\,j\in\zz.
\end{gather}
Equations~\eqref{zrjf},~\eqref{ujfz} imply that the functions~\eqref{zzpq}
can be expressed in terms of
\begin{equation}
\lb{zupq}
z_{0},\quad z_{1},\quad \dots,\quad z_{r-1},\quad u,\quad u_{1},\quad\dots,\quad u_{q-p-r},
\end{equation}
and the functions~\eqref{zupq} can be expressed in terms of~\eqref{zzpq}.
Since~\eqref{zupq} are functionally independent, we obtain that 
\eqref{zzpq} are functionally independent as well.
\end{proof}

For any $j,\,i_1,\dots,i_l\in\zz$ and any function 
of the form $h=h(\fv_{i_1},\dots,\fv_{i_l})$, 
we set 
$$
\cS^j(h)=h(\fv_{i_1+j},\dots,\fv_{i_l+j}).
$$
According to Definition~\ref{defmtt} of MTs, 
to prove the theorem, we need to show that
\begin{multline}
\lb{mteq}
\sum_{i=0}^r\frac{\pd}{\pd\fv_i}
\big(\mf(\fv_0,\fv_1,\dots,\fv_{r})\big)\cdot
\cS^i\big(P(\fv_{\ta},\dots,\fv_{{\tb}+r})\big)=\\
=\ff\big(\mf(\fv_{\ua},\dots,\fv_{\ua+r}),
\mf(\fv_{\ua+1},\dots,\fv_{\ua+1+r}),\dots,
\mf(\fv_{\ub},\dots,\fv_{\ub+r})\big).
\end{multline}
Applying $D_t$ to equation~\eqref{ufz} 
and using the identity~$D_t(u)=\ff(u_\ua,u_{\ua+1},\dots,u_\ub)$, one obtains
\begin{equation}
\lb{ffumf}
\ff(u_\ua,u_{\ua+1},\dots,u_\ub)=
\sum_{i=0}^r\frac{\pd}{\pd z_i}
\big(\mf(z_0,z_1,\dots,z_{r})\big)\cdot
D_t(z_i).
\end{equation}
Since $\cS\circ D_t=D_t\circ\cS$, we have
\begin{equation}
\lb{dtzi}
D_t(z_i)=D_t(\cS^i(z))=\cS^i(D_t(z))=
\cS^i(Q(z_0,\dots,z_{r-1},u_\ta,\dots,u_\tb)).
\end{equation}
Substituting~\eqref{ujfz} and~\eqref{dtzi} in~\eqref{ffumf}, we obtain
\begin{multline}
\lb{fffmf}
\ff\big(\mf(z_{\ua},\dots,z_{\ua+r}),
\mf(z_{\ua+1},\dots,z_{\ua+1+r}),\dots,
\mf(z_{\ub},\dots,z_{\ub+r})\big)=\\
=\sum_{i=0}^r\frac{\pd}{\pd z_i}
\big(\mf(z_0,z_1,\dots,z_{r})\big)\cdot
\cS^i\big(Q\big(z_0,\dots,z_{r-1},\mf(z_{\ta},\dots,z_{\ta+r}),\dots,
\mf(z_{\tb},\dots,z_{\tb+r})\big)\big).
\end{multline}
Since $z_j$, $j\in\mathbb{Z}$, are functionally independent, 
the identity~\eqref{fffmf} will remain valid if we replace $z_j$ by $\fv_j$ for all $j$.
Replacing $z_j$ by $\fv_j$ in~\eqref{fffmf} and using~\eqref{pq}, we get~\eqref{mteq}.
\end{proof}

\begin{remark}
\lb{nondeg}
Informally speaking, \eqref{sclc} and \eqref{pdu} can be regarded as non-degeneracy conditions.
Condition~\eqref{sclc} says that 
the differentials of the functions $z$, $\cS(z)$, $\cS^2(z),\dots,\cS^{r-1}(z)$
are linearly independent almost everywhere on the $r$-dimensional manifold~$G/H$.
Condition~\eqref{pdu} says that the function 
$\cS^{r}(z)=\fe(z_0,z_1,\dots,z_{r-1},u)$ depends nontrivially on $u$.

In constructing an MT by the method described in Theorem~\ref{thscal},
the most important step is to find a nonconstant 
$\hg{r-1}$-left-invariant function $z$ on an open dense subset of~$G/H$.
For such a function $z$, one can expect that conditions~\eqref{sclc},~\eqref{pdu} are usually satisfied.
This is the case in all examples known to us.

\end{remark}

\begin{example}
\lb{m1vol}
Consider the case $\sm=2$, $\og=1$, $k_1=2$ and 
the DLR~\eqref{lrvol} of the Volterra equation~\eqref{volt}.
The group $G=\gq{2}{\cp_1}{2}$ can be described as follows
\begin{equation}
\lb{g22}
G=\gq{2}{\cp_1}{2}=\left.\left\{
\begin{pmatrix}
 w^{10}_{11} & w^{10}_{12} \\
 w^{10}_{21} & w^{10}_{22} 
\end{pmatrix}
+(\lambda -\cp_1)
\begin{pmatrix}
 w^{11}_{11} & w^{11}_{12} \\
 w^{11}_{21} & w^{11}_{22} 
\end{pmatrix}\ \right|\ 
w^{10}_{11}w^{10}_{22}-w^{10}_{21}w^{10}_{12}\neq 0\right\},
\end{equation}
where $\cp_1\in\fik$ is a fixed constant, 
and $w^{10}_{11}$, $w^{10}_{12}$, $w^{10}_{21}$, $w^{10}_{22}$, 
$w^{11}_{11}$, $w^{11}_{12}$, $w^{11}_{21}$, $w^{11}_{22}$ are coordinates on~$G$. 
Since $k_1=2$, we work modulo the relation ${(\la-\cp_1)^2=0}$.

Consider the subgroup
\begin{equation}
\notag
H=\left.\left\{
\begin{pmatrix}
 1 & w^{10}_{12} \\
 0 & w^{10}_{22} 
\end{pmatrix}
+(\lambda -\cp_1)
\begin{pmatrix}
 w^{11}_{11} & w^{11}_{12} \\
 w^{11}_{21} & w^{11}_{22} 
\end{pmatrix}\ \right|\ 
w^{10}_{22}\neq 0\right\}\,\subset\,G,
\end{equation}
which is of codimension $r=2$.
As has been shown in Example~\ref{exlrv}, for the DLR~\eqref{lrvol},
the group $\hg{1}$ consists of the constant matrix-functions 
$\begin{pmatrix}
 a_1 & 0 \\
 0 & 1 
\end{pmatrix}$.

For any $h\in H$ and $h_1\in\hg{1}$, in the product 
$$
h_1\cdot
\left(
\begin{pmatrix}
 w^{10}_{11} & w^{10}_{12} \\
 w^{10}_{21} & w^{10}_{22} 
\end{pmatrix}
+(\lambda -\cp_1)
\begin{pmatrix}
 w^{11}_{11} & w^{11}_{12} \\
 w^{11}_{21} & w^{11}_{22} 
\end{pmatrix}
\right)
\cdot h
$$
the $w^{10}_{21}$ component does not change.
Therefore, the function $z=w^{10}_{21}$ is 
$H$-right-invariant and $\hg{1}$-left-invariant.

Using formula~\eqref{vcsw} and the notation $z_k=\cS^k(z)$ for $k\in\zz$, we obtain
\begin{gather}
\lb{z0z1}
z_0=z=w^{10}_{21},\qquad
z_1=\cS(z)=\cS(w^{10}_{21})=\cp_1 w_{21}^{10}-w_{11}^{10},\\
\lb{z2cs2}
z_2=\cS^2(z)=\cS(z_1)=\cp_1\cS(w_{21}^{10})-\cS(w_{11}^{10})
=(\cp_1^2-u)w^{10}_{21}-\cp_1 w^{10}_{11}.
\end{gather}
Recall that in this example we have $r=2$.
Since the differentials of the functions $z$ and $\cS(z)$ are linearly independent, 
condition~\eqref{sclc} is valid.
Equation~\eqref{z2cs2} shows that condition~\eqref{pdu} is valid as well.

Therefore, using Theorem~\ref{thscal}, we obtain an MT as follows.
Equations~\eqref{z0z1}, \eqref{z2cs2} imply $z_2=\cp_1z_1-uz_0$, which yields
\begin{equation}
\lb{u0zz}
u=\frac{\cp_1z_1-z_2}{z_0}.
\end{equation}
Using~\eqref{vdtw},~\eqref{z0z1},~\eqref{z2cs2}, we get
\begin{equation}
\lb{dtzu}
D_t(z)=D_t(w^{10}_{21})=\cp_1^2w_{21}^{10}-\cp_1 w_{11}^{10}+u_{-1} w_{21}^{10}
=\cp_1z_1+u_{-1}z_0.
\end{equation}
Equation~\eqref{u0zz} implies 
$u_{-1}=\cS^{-1}(({\cp_1z_1-z_2})/{z_0})=({\cp_1z_0-z_1})/{z_{-1}}$.
Substituting this in~\eqref{dtzu}, one obtains
\begin{equation}
\lb{dtzz}
D_t(z)=\frac{\cp_1 z_0^2+\cp_1 z_{-1} z_1-z_1 z_0}{z_{-1}}.
\end{equation}
According to Theorem~\ref{thscal}, to obtain an MT, 
we need to replace $z_k$ by $v_k$ for all $k\in\zz$ in~\eqref{u0zz},~\eqref{dtzz}.
(And we can use the identification $v_0=v$.) 

Thus we get the following result. For any $\cp_1\in\fik$, 
the formula 
\begin{equation}
\lb{ucv1v2}
u=\frac{\cp_1v_1-v_2}{v}
\end{equation}
determines an MT from the equation 
\begin{equation}
\lb{vtcpv1}
v_t=\frac{\cp_1 v^2+\cp_1 v_{-1} v_1-v_1 v}{v_{-1}}
\end{equation}
to the Volterra equation~\eqref{volt}.
As we show in Remark~\ref{mtyam1}, this MT can be found 
in~\cite{yam2006}, after some change of variables.

Note that the same MT can be obtained also in the case
$\sm=2$, $\og=1$, $k_1=1$. In this case, the group~\eqref{ggr} is equal to 
\begin{equation}
\notag
\gq{2}{\cp_1}{1}=\left.\left\{
\begin{pmatrix}
 w^{10}_{11} & w^{10}_{12} \\
 w^{10}_{21} & w^{10}_{22} 
\end{pmatrix}\ \right|\ 
w^{10}_{11}w^{10}_{22}-w^{10}_{21}w^{10}_{12}\neq 0\right\}.
\end{equation}
Consider the subgroup
$\tilde H=\left.\left\{
\begin{pmatrix}
 1 & w^{10}_{12} \\
 0 & w^{10}_{22} 
\end{pmatrix}\ \right|\ 
w^{10}_{22}\neq 0\right\}\,\subset\,\gq{2}{\cp_1}{1}$.
The function $z=w^{10}_{21}$ is 
$\tilde H$-right-invariant and $\hg{1}$-left-invariant, 
and it gives the same MT.
\end{example}
\begin{remark}
\lb{mtyam1}
After the change of variables $v=\exp{\tilde v},\ t=-\tilde t$,
equation~\eqref{vtcpv1} becomes
\begin{equation}
\label{wv1}
\pd_{\tilde t}\big({\tilde v}\big)
=\big(\exp({\tilde v}_1-{\tilde v})-\cp_1\big)
\big(\exp({\tilde v}-{\tilde v}_{-1})-\cp_1\big)-\cp_1^2.
\end{equation}
Equation~\eqref{wv1} is a particular case of equation~(V6)
from the list of Volterra-type equations in~\cite{yam2006}.
The paper~\cite{yam2006} presents also an MT reducing this equation 
to the Volterra equation. The MT~\eqref{ucv1v2} coincides with 
the MT from~\cite{yam2006}, up to the above change of variables.
\end{remark}

\begin{example}
\lb{m2vol}
Let $\sm=2$, $\og=1$, $k_1=2$.
Consider the group $G=\gq{2}{\cp_1}{2}$ given by~\eqref{g22},
the subgroup 
\begin{equation}
H=\left.\left\{
\begin{pmatrix}
 w^{10}_{11} & w^{10}_{12} \\
 0 & w^{10}_{22} 
\end{pmatrix}
+(\lambda -\cp_1)
\begin{pmatrix}
 w^{11}_{11} & w^{11}_{12} \\
 0 & w^{11}_{22} 
\end{pmatrix}\ \right|\ 
w^{10}_{11}\neq 0,\ 
w^{10}_{22}\neq 0\right\}\,\subset\,G,
\end{equation}
and the DLR~\eqref{lrvol} of the Volterra equation~\eqref{volt}.
Clearly, the subgroup $H\subset G$ is of codimension $r=2$.

It is straightforward to check that the functions
\begin{equation}
\lb{hfs}
y^1=\frac{w^{10}_{11}}{w^{10}_{21}},\qquad\quad
y^2=\frac{w^{10}_{21}w^{11}_{11}-w^{10}_{11}w^{11}_{21}}{(w^{10}_{21})^2}
\end{equation}
are $H$-right-invariant.
Therefore, any function of the form $\al(y^1,y^2)$ is also $H$-right-invariant. 

According to Example~\ref{exlrv}, for the DLR~\eqref{lrvol},
the group $\hg{1}$ consists of the constant matrix-functions 
$\begin{pmatrix}
 a_1 & 0 \\
 0 & 1 
\end{pmatrix}$.
To apply Theorem~\ref{thscal}, we need to find a function $z=\al(y^1,y^2)$ 
such that $z$ is $\hg{1}$-left-invariant.

The group $\hg{1}$ acts on $G$ by left multiplication as follows
\begin{multline}
\notag
\begin{pmatrix}
 a_1 & 0 \\
 0 & 1 
\end{pmatrix}\cdot
\left(
\begin{pmatrix}
 w^{10}_{11} & w^{10}_{12} \\
 w^{10}_{21} & w^{10}_{22} 
\end{pmatrix}
+(\lambda -\cp_1)
\begin{pmatrix}
 w^{11}_{11} & w^{11}_{12} \\
 w^{11}_{21} & w^{11}_{22} 
\end{pmatrix}\right)=\\
=\begin{pmatrix}
 a_1 w^{10}_{11} & a_1 w^{10}_{12} \\
 w^{10}_{21} & w^{10}_{22} 
\end{pmatrix}
+(\lambda -\cp_1)
\begin{pmatrix}
 a_1 w^{11}_{11} & a_1 w^{11}_{12} \\
 w^{11}_{21} & w^{11}_{22} 
\end{pmatrix},
\end{multline}
which implies that $z=y^1/y^2$ is $\hg{1}$-left-invariant.
From~\eqref{vcsw},~\eqref{hfs} it follows that 
\begin{gather}
\lb{sy1}
\cS(y^1)=\cS\left(\frac{w^{10}_{11}}{w_{21}^{10}}\right)
=\frac{\cS(w_{11}^{10})}{\cS(w_{21}^{10})}
=\frac{uw_{21}^{10}}{\cp_1 w^{10}_{21}-w_{11}^{10}}
=\frac{u}{\cp_1-y^1},\\
\lb{sy2}
\cS(y^2) = 
\frac{\cS(w_{21}^{10})\cS(w_{11}^{11})-\cS(w_{11}^{10})\cS(w_{21}^{11})}{\cS(w_{21}^{10})^2}
=
\frac{u(w_{21}^{10}w_{11}^{11}-w_{11}^{10}w_{21}^{11}-w_{21}^{10}w_{21}^{10})}{(\cp_1 w_{21}^{10}-w_{11}^{10})^2} 
=\frac{u (y^2-1)}{(\cp_1-y^1)^2}.
\end{gather}
Using~\eqref{sy1},~\eqref{sy2} and the notation $z_k=\cS^k(z)$ for $k\in\zz$, we obtain
\begin{gather}
\lb{yz0z1}
z_0=z=\frac{y^1}{y^2},\qquad\quad
z_1=\cS(z)=\frac{\cS(y^1)}{\cS(y^2)}=\frac{\cp_1-y^1}{y^2-1},\\
\lb{yz2cs2}
z_2=\cS^2(z)=\cS(z_1)=
\frac{\cp_1-\cS(y^1)}{\cS(y^2)-1}=
\frac{(y^1-\cp_1)(\cp_1 (\cp_1-y^1)-u)}{(\cp_1-y^1)^2-(y^2-1)u}.
\end{gather}
Recall that in this example we have $r=2$.
Since the differentials of the functions $z$ and $\cS(z)$ 
are linearly independent, 
condition~\eqref{sclc} is valid.
Equation~\eqref{yz2cs2} shows that condition~\eqref{pdu} is valid as well.

From~\eqref{yz0z1},~\eqref{yz2cs2} one gets 
\begin{gather}
\lb{y12z}
y^1=\frac{z_0(\cp_1+z_1)}{z_0+z_1},\quad\qquad 
y^2=\frac{\cp_1+z_1}{z_0+z_1},\\
\lb{u0yz}
u=\frac{z_{1}^2 (\cp_1-z_0)(\cp_1+z_{2})}{(z_0+z_{1})(z_{1}+z_{2})}.
\end{gather}
Since $z=y^1/y^2$, we have 
\begin{equation}
\lb{dtzy}
D_t(z)=\frac{D_t(y^1)y^2-D_t(y^2)y^1}{(y^2)^2}.
\end{equation}
To compute $D_t(y^1)$ and $D_t(y^2)$ in~\eqref{dtzy}, one can use formulas~\eqref{hfs},~\eqref{vdtw}. 
This gives 
\begin{equation}
\lb{dtzyu}
D_t(z)=\frac{u_{-1}(\cp_1 y^2-y^1)-(y^1)^2(\cp_1(y^2-2)+y^1)}{(y^2)^2}.
\end{equation}
Equation~\eqref{u0yz} implies 
$u_{-1}=\cS^{-1}\left(\dfrac{z_{1}^2 (\cp_1-z_0)(\cp_1+z_{2})}{(z_0+z_{1})(z_{1}+z_{2})}\right)=
\dfrac{z_{0}^2 (\cp_1-z_{-1})(\cp_1+z_{1})}{(z_{-1}+z_{0})(z_{0}+z_{1})}$.
Substituting this and~\eqref{y12z} in~\eqref{dtzyu}, we obtain 
\begin{equation}
\lb{dtzzz}
D_t(z)=
\frac{z_0^2(z_{-1}-z_1)(z_0-\cp_1)(z_0+\cp_1)}{(z_0+z_{-1})(z_0+z_1)}.
\end{equation}

According to Theorem~\ref{thscal}, to obtain an MT, 
we need to replace $z_k$ by $v_k$ for all $k\in\zz$ in~\eqref{u0yz},~\eqref{dtzzz}.
Thus we get the following result. For any $\cp_1\in\fik$, 
the formula 
\begin{equation}
\lb{uv12v}
u=\frac{v_{1}^2 (\cp_1-v)(\cp_1+v_{2})}{(v+v_{1})(v_{1}+v_{2})}
\end{equation}
determines an MT from the equation 
\begin{equation}
\lb{vtvvv}
v_t=\frac{v^2(v_{-1}-v_1)(v-\cp_1)(v+\cp_1)}{(v+v_{-1})(v+v_1)}
\end{equation}
to the Volterra equation~\eqref{volt}.

Equation~\eqref{vtvvv} is a particular case of equation~(V2)
from the list of Volterra-type equations in~\cite{yam2006}.
On page~599 of~\cite{yam2006} Yamilov presents an MT connecting equation~(V2)
with the Volterra equation. Formula~\eqref{uv12v} is a particular case 
of the MT from~\cite{yam2006} (up to a shift of~$u$).
\end{example}

Now consider the case of arbitrary $\diu\in\zsp$.
Recall that $u_l=(u^1_l,\dots,u^{\diu}_l)$ for each $l\in\zz$ 
and $u^j=u^j_0$ for each $j=1,\dots,\diu$. 

\begin{theorem}
\lb{tharb}
Recall that the Lie group $G$ is defined by~\eqref{ggr}.
Let $H\subset G$ be a closed connected Lie subgroup of codimension $r>0$.

Suppose that there are nonnegative integers $\da_1,\dots,\da_\diu$ 
and functions $z^1,\dots,z^\diu$ on an open dense subset of $G/H$ such that 
$\da_1+\dots+\da_\diu+\diu=r$, the function $z^i$ is 
$\hg{\da_i}$-left-invariant for each $i=1,\dots,\diu$, and the following conditions hold
\begin{gather}
\lb{msclc}
\begin{array}{c}
\text{the functions $\cS^k(z^i)$, $i=1,\dots,\diu$, $k=0,\dots,\da_i$, form}\\
\text{a system of local coordinates almost everywhere on $G/H$}\\
\text{\textup{(}i.e., the differentials of the functions 
are linearly independent almost everywhere\textup{)}},
\end{array}\\
\lb{mpdu}
\begin{array}{c}
\text{the determinant of the $\diu\times\diu$ matrix-function 
$\left(\dfrac{\partial}{\partial u^j}\big(\cS^{\da_i+1}(z^i)\big)\right)$}\\
\text{is not identically zero.}
\end{array}
\end{gather}
\textup{(}Note that, by Lemma~\ref{suiz}, since $z^i$ is $\hg{\da_i}$-left-invariant, 
the functions $\cS^k(z^i)$, $i=1,\dots,\diu$, $k=0,\dots,\da_i$, 
do not depend on $u_l$ and are well defined on an open dense subset 
of $G/H$, so condition~\eqref{msclc} makes sense.\textup{)}

Set $z^i_0=z^i$ and $z^i_k=\cS^k(z^i)$ for $i=1,\dots,\diu$ and $k=1,\dots,\da_i+1$.
Condition~\eqref{msclc} says that $z^i_k$, $i=1,\dots,\diu$, $k=0,\dots,\da_i$, 
form a system of local coordinates almost everywhere on~$G/H$.
Applying Remark~\ref{remlcgh} to this system of local coordinates, 
we see that locally one has 
\begin{gather}
\lb{mzrfzu}
z^i_{\da_i+1}=\cS^{\da_i+1}(z^i)=\cS(z^i_{\da_i})=\fe^i(z^{i'}_{k'},u^{i'}),\qquad
i,i'=1,\dots,\diu,\qquad 
k'=0,\dots,\da_{i'},\\
\notag
D_t(z^i)=Q^i(z^{i'}_{k'},u_\ta^{i'},\dots,u_\tb^{i'}),\qquad
i,i'=1,\dots,\diu,\qquad 
k'=0,\dots,\da_{i'},
\end{gather}
for some functions $\fe^i$ and $Q^i$. 
By the implicit function theorem, condition~\eqref{mpdu} implies that 
locally from equations~\eqref{mzrfzu} we can express $u^1,\dots,u^\diu$ 
in terms of $z^{\ti}_{\tk}$, $\ti=1,\dots,\diu$, $\tk=0,\dots,\da_{\ti}+1$, 
\begin{equation}
\lb{mufz}
u^{i'}=\mf^{i'}(z^{\ti}_{\tk}),\qquad\quad
i'=1,\dots,\diu.
\end{equation}
\textup{(}We can do this on a neighborhood of a point where 
$\det\left(\dfrac{\partial}{\partial u^j}\big(\cS^{\da_i+1}(z^i)\big)\right)
\neq 0$.\textup{)}

Then one obtains an MT for equation~\eqref{sdde} as follows.
We introduce new variables $\fv_k^i$ for $i=1,\dots,\diu$ and $k\in\zz$.
For each $j\in\zz$, we define the operator $\cS^j$ on functions $h(\fv_k^i)$ 
by the usual rule 
$$
\cS^j\big(h(\fv_k^i)\big)=h(\cS^j(\fv_k^i)),\qquad\quad 
\cS^j(\fv_k^i)=\fv_{k+j}^i.
$$
That is, applying $\cS^j$ to a function $h=h(\fv_k^i)$, 
we replace $\fv_k^i$ by $\fv_{k+j}^i$ in $h$ for all $i,k$.

Using the function $\mf^{i'}(z^{\ti}_{\tk})$ from~\eqref{mufz}, 
we can consider the function $\mf^{i'}(\fv^{\ti}_{\tk})$.
Let $P^i(\fv_{\hk}^{\hi})$ be the function 
obtained from 
$Q^i(z^{i'}_{k'},u_\ta^{i'},\dots,u_\tb^{i'})$
by replacing $z^{i'}_{k'}$ with $\fv^{i'}_{k'}$ and $u_j^{i'}$ with 
$\cS^j\big(\mf^{i'}(\fv^{\ti}_{\tk})\big)$.
That is,
\begin{equation}
\lb{mpq}
P^i(\fv_{\hk}^{\hi})=
Q^i\big(\fv^{i'}_{k'},\cS^{\ta}\big(\mf^{i'}(\fv^{\ti}_{\tk})\big),\dots,
\cS^{\tb}\big(\mf^{i'}(\fv^{\ti}_{\tk})\big)\big),\qquad
i=1,\dots,\diu.
\end{equation}

We introduce the formula
\begin{equation}
\lb{mvtp}
v^i_t=P^i(\fv_{\hk}^{\hi}),\qquad\quad i=1,\dots,\diu,
\end{equation}
and regard~\eqref{mvtp} as a differential-difference equation 
for the variables~$\fv^1,\dots,\fv^\diu$.

Then the formulas $u^{i'}=\mf^{i'}(\fv^{\ti}_{\tk})$, $i'=1,\dots,\diu$, 
determine an MT from equation~\eqref{mvtp} to equation~\eqref{msdde}, 
where \eqref{msdde} is the component form of~\eqref{sdde}.
\end{theorem}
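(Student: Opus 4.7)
The plan is to adapt the two-step argument used in the scalar case of Theorem~\ref{thscal}. First I would establish that the doubly-indexed family $\{z^i_k\}$, where $z^i_k=\cS^k(z^i)$ for $i=1,\dots,\diu$ and $k\in\zz$, is functionally independent as a collection of functions of the variables $w^{pq}_{ij}$ and $u_l^\ga$ (this is the analogue of Lemma~\ref{lemfi}). Then I would verify the defining MT identity of Definition~\ref{defmtt} by a direct calculation using $D_t(u^{i'})=\ff^{i'}(\ldots)$ together with the commutation $\cS\circ D_t=D_t\circ\cS$ from Lemma~\ref{sdts}.

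For Step~1, suppose a nontrivial relation holds among finitely many $z^i_k$. Applying a suitable power of $\cS^{-1}$, I may shift all indices to be nonnegative, so it suffices to show that for any finite $K$ the collection $\{z^i_k: 1\le i\le\diu,\ 0\le k\le K\}$ is functionally independent. Shifting \eqref{mzrfzu} and \eqref{mufz} by $\cS^j$ for each $j\in\zz$ gives
\begin{equation*}
z^i_{\da_i+1+j}=\fe^i\big(z^{i'}_{k'+j},u^{i'}_j\big),\qquad u^{i'}_j=\mf^{i'}\big(z^{\ti}_{\tk+j}\big).
\end{equation*}
These formulas allow me to recursively replace each $z^i_k$ having $k>\da_i$ by the low-order functions $z^{i'}_{k'}$ with $0\le k'\le\da_{i'}$ together with the dynamical variables $u^{i'}_j$; conversely, each $u^{i'}_j$ can be recovered from the $z^{\ti}_{\tk}$. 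Hence the families
\begin{equation*}
\mathcal{A}=\{z^i_k: 0\le k\le\da_i\}\cup\{u^{i'}_j: j\in\zz\},\qquad
\mathcal{B}=\{z^i_k: i=1,\dots,\diu,\ k\in\zz\}
\end{equation*}
generate the same subalgebra of functions, and the passage between them is locally invertible. By hypothesis~\eqref{msclc}, the $z^i_k$ in~$\mathcal{A}$ are local coordinates on $G/H$, which combined with the independence of the dynamical variables $u^{i'}_j$ shows that $\mathcal{A}$ is functionally independent; therefore so is $\mathcal{B}$.

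For Step~2, the identity required by Definition~\ref{defmtt} in the free variables $\fv^{\ti}_{\tk}$ reads
\begin{equation*}
\sum_{\ti,\tk}\frac{\pd\mf^i}{\pd\fv^{\ti}_{\tk}}\cdot\cS^{\tk}\big(P^{\ti}\big)=\ff^i\big(\cS^{\ua}(\mf^\ga),\cS^{\ua+1}(\mf^\ga),\dots,\cS^{\ub}(\mf^\ga)\big),\qquad i=1,\dots,\diu.
\end{equation*}
By Step~1, the functions $\{z^{\ti}_{\tk}\}$ are functionally independent, so it suffices to prove the analogous identity with $\fv^{\ti}_{\tk}$ replaced throughout by $z^{\ti}_{\tk}$. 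For the latter, I would apply $D_t$ to \eqref{mufz}: using $D_t(u^{i'})=\ff^{i'}(u_\ta^\ga,\dots,u_\ub^\ga)$ and $D_t(z^i_k)=\cS^k(D_t(z^i))=\cS^k(Q^i)$ via Lemma~\ref{sdts}, then substituting $u^{i'}_j=\cS^j(\mf^{i'})$ from \eqref{mufz} shifted by $j$, and finally invoking~\eqref{mpq} to identify $\cS^{\tk}(P^{\ti})$ with the $z$-substitution of $\cS^{\tk}(Q^{\ti})$ after insertion of the $\mf$'s, the required identity emerges by the chain rule.

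The main obstacle is Step~1. In the scalar case the one-dimensional indexing made it easy to normalize via a single shift $\cS^{-p}$ and compare directly against the $r$ coordinates on $G/H$. Here the differing orders $\da_1,\dots,\da_\diu$ force more careful bookkeeping: the staircase substitution that exchanges $z$-variables with $k>\da_i$ for the dynamical $u$-variables must be made bijective at the level of generators of the relevant function algebra, and one must verify that the total count $\da_1+\dots+\da_\diu+\diu=r$ is precisely what makes the two generating systems $\mathcal{A}$ and $\mathcal{B}$ match up. Once this invertible change of generators is set up, Step~2 is a mechanical chain-rule manipulation that closely mirrors the final calculation in the proof of Theorem~\ref{thscal}.
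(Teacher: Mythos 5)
Your proposal follows essentially the same route as the paper's proof: the paper also reduces the multicomponent case to (i) functional independence of the full family $z^i_k$, $k\in\zz$, established "similarly to Lemma~\ref{lemfi}" by the same staircase exchange between $\{z^i_k\}$ and $\{z^i_k:0\le k\le\da_i\}\cup\{u^{i'}_j\}$ that you describe, and (ii) the chain-rule verification of the identity of Definition~\ref{defmtt} in the $z$-variables followed by the substitution $z^i_k\mapsto\fv^i_k$. Your Step~1 is in fact spelled out in more detail than the paper, which leaves the multicomponent bookkeeping implicit; the argument is correct as you outline it.
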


\begin{proof}
According to Remark~\ref{remhri}, one can regard $z^1,\dots,z^{\diu}$ 
as functions on an open dense subset of~$G$.
Since for any $k\in\zz$ we can apply the operator $\cS^k$ to such functions, 
we can consider $z^i_k=\cS^k(z^i)$ for all $k\in\zz$.

The functions $z^i_k$ may depend on the coordinates of $G$ and the variables $u_l^\ga$.
Similarly to Lemma~\ref{lemfi}, one shows that $z^i_k$, $k\in\zz$, $i=1,\dots,\diu$,  
are functionally independent.

According to Definition~\ref{defmtt} of MTs, 
to prove the theorem, we need to show that
\begin{multline}
\lb{mmteq}
\sum_{\ci,\ck}
\frac{\pd}{\pd\fv^{\ci}_{\ck}}\big(\mf^{i'}(\fv^{\ti}_{\tk})\big)\cdot 
\cS^{\ck}\big(P^{\ci}(\fv_{\hk}^{\hi})\big)=
\ff^{i'}\big(\cS^\ua\big(\mf^\ga(\fv^{\ti}_{\tk})\big),
\cS^{\ua+1}\big(\mf^\ga(\fv^{\ti}_{\tk})\big),\dots,
\cS^\ub\big(\mf^\ga(\fv^{\ti}_{\tk})\big)\big),\\
i'=1,\dots,\diu.
\end{multline}
Applying $D_t$ to equation~\eqref{mufz} 
and using the identity 
$D_t(u^{i'})=\ff^{i'}(u_\ua^\ga,u_{\ua+1}^\ga,\dots,u_\ub^\ga)$, one obtains
\begin{equation}
\lb{mffumf}
\ff^{i'}(u_\ua^\ga,u_{\ua+1}^\ga,\dots,u_\ub^\ga)
=\sum_{\ci,\ck}
\frac{\pd}{\pd z^{\ci}_{\ck}}\big(\mf^{i'}(z^{\ti}_{\tk})\big)\cdot D_t(z^{\ci}_{\ck}),
\qquad\quad
i'=1,\dots,\diu.
\end{equation}
Since $\cS\circ D_t=D_t\circ\cS$, we have
\begin{equation}
\lb{mdtzi}
D_t(z^{\ci}_{\ck})=D_t(\cS^{\ck}(z^{\ci}))=
\cS^{\ck}(D_t(z^{\ci}))=
\cS^{\ck}\big(Q^{\ci}(z^{i'}_{k'},u_\ta^{i'},\dots,u_\tb^{i'})\big).
\end{equation}
For any $j\in\zz$, applying $\cS^j$ to equation~\eqref{mufz}, one gets
\begin{equation}
\lb{mujfz}
u_j^{i'}=\cS^j(\mf^{i'}(z^{\ti}_{\tk})),\qquad\quad
i'=1,\dots,\diu,\qquad j\in\zz.
\end{equation}
Substituting~\eqref{mujfz} and~\eqref{mdtzi} in~\eqref{mffumf}, we obtain
\begin{multline}
\lb{mfffmf}
\ff^{i'}\big(\cS^\ua\big(\mf^\ga(z^{\ti}_{\tk})\big),
\cS^{\ua+1}\big(\mf^\ga(z^{\ti}_{\tk})\big),\dots,
\cS^\ub\big(\mf^\ga(z^{\ti}_{\tk})\big)\big)=\\
=\sum_{\ci,\ck}
\frac{\pd}{\pd z^{\ci}_{\ck}}\big(\mf^{i'}(z^{\ti}_{\tk})\big)\cdot 
\cS^{\ck}\big(
Q^{\ci}\big(z^{i'}_{k'},\cS^{\ta}\big(\mf^{i'}(z^{\ti}_{\tk})\big),\dots,
\cS^{\tb}\big(\mf^{i'}(z^{\ti}_{\tk})\big)\big)
\big),\qquad
i'=1,\dots,\diu.
\end{multline}
Since $z^i_k$, $k\in\zz$, $i=1,\dots,\diu$, are functionally independent,
the identity~\eqref{mfffmf} will remain valid if we replace $z^i_k$ by $\fv^i_k$ for all $i,k$.
Replacing $z^i_k$ by $\fv^i_k$ in~\eqref{mfffmf} and using~\eqref{mpq}, we get~\eqref{mmteq}.
\end{proof}

Examples of MTs in the case $\diu=2$ are constructed in Sections~\ref{sectoda},~\ref{secap}.

\begin{remark}
\lb{mnondeg}
In Remark~\ref{nondeg} we have discussed why~\eqref{sclc},~\eqref{pdu} can be regarded as 
non-degeneracy conditions. 
The same applies to~\eqref{msclc},~\eqref{mpdu}.
\end{remark}

\begin{remark}
\lb{remact}
Let $\wor{\sm}$ be the space $\fik^\sm$ without $0$.
Recall that the $(\sm-1)$-dimensional complex projective space $\prs^{\sm-1}$ 
can be identified with the set of one-dimensional subspaces in~$\fik^\sm$.
The standard action of the group $\GL_\sm(\fik)$ on $\wor{\sm}$
is transitive, which gives a transitive action of $\GL_\sm(\fik)$ on $\prs^{\sm-1}$.

For $i=1,\dots,\og$, let $M_i$ be either $\wor{\sm}$ or $\prs^{\sm-1}$. 
As has been shown above, one has the standard transitive action 
of $\GL_\sm(\fik)$ on $M_i$.
Set $\mnf=M_1\times\dots\times M_\og$.

Recall that $G$ is defined by~\eqref{ggr}.
Consider the case $k_1=\dots=k_\og=1$, 
so 
\begin{equation}
\lb{gk1}
G=\gq{\sm}{\cp_1}{1}\times\dots\times\gq{\sm}{\cp_\og}{1}.
\end{equation}
Using the isomorphism $\gq{\sm}{\cp_i}{1}\cong\GL_\sm(\fik)$ 
and the transitive action of $\GL_\sm(\fik)$ on~$M_i$ 
for $i=1,\dots,\og$, we obtain a transitive action 
of $G$ on $\mnf={M_1\times\dots\times M_\og}$.

Let $\pmm\in\mnf$.
Let $H\subset G$ be the stabilizer subgroup of the point~$\pmm$.
That is, 
\begin{equation}
\lb{gxx}
H=\{g\in G\ |\ g\pmm=\pmm\}.
\end{equation}
Then we have the analytic diffeomorphism 
$$
G/H\,\to\, \mnf=M_1\times\dots\times M_\og,\qquad 
gH\mapsto g\pmm,\qquad\forall\,g\in G.
$$ 
Note that $\codim H=\dim G/H=\dim M_1+\dots+\dim M_\og$, because 
$G/H$ is diffeomorphic to $\mnf=M_1\times\dots\times M_\og$.

Therefore, when we apply Theorems~\ref{thscal} or \ref{tharb} to such $G$ and $H$, 
we can replace $G/H$ by $\mnf$ 
and consider functions on open dense subsets of $\mnf$.
Examples of MTs arising from such $G$ and $H$ are given in the next sections.

Since $\gq{\sm}{\cp_i}{1}\cong\GL_\sm(\fik)$, the structure of the group~\eqref{gk1}
does not depend on $\cp_1,\dots,\cp_\og$.
However, the homomorphism $\rho\colon\gla\to G$ given by~\eqref{rhog},~\eqref{gprho} 
depends on $\cp_1,\dots,\cp_\og$.
Hence the subgroups $\rho(\hg{k})\subset G$ for $k\in\zp$ 
and the set of $\hg{k}$-left-invariant functions depend on $\cp_1,\dots,\cp_\og$ as well.
\end{remark}

\section{MTs for the Narita-Itoh-Bogoyavlensky lattice}
\lb{secbog}

For each $p\in\zsp$, the Narita-Itoh-Bogoyavlensky lattice~\cite{bogoy88,itoh87,narita82}
is the following differential-difference equation
\begin{equation}
\label{bog}
u_{t}=u\Big(\sum_{k=1}^pu_{k}-\sum_{k=1}^pu_{-k}\Big).
\end{equation}
It possesses the operator Lax pair~\cite{bogoy88,kmw}
$$
L=\cS+u \cS^{-p},\qquad\quad A=(L^{(p+1)})_{\ge 0},
$$
where $\ge 0$ means taking the terms with nonnegative power of $\cS$ in $L^{(p+1)}$.
The corresponding Lax equation $\pd_t(L)=[A,L]$ is equivalent to~\eqref{bog}.

This implies that equation~\eqref{bog}
is equivalent to the compatibility of the linear system
\begin{gather}
\lb{lphi}
L\phi=\lambda\phi,\\
\lb{phit}
\phi_t=A\phi,
\end{gather}
where $\phi=\phi(n,t)$ is a scalar function and $\la$ is a parameter.

\subsection{The case p=2}
\lb{subsp2}

For $p=2$ equation~\eqref{bog} reads
\begin{equation}
\label{bogp2}
u_t=u(u_2+u_1-u_{-1}-u_{-2}).
\end{equation}
In this case $L=\cS+u\cS^{-2}$, hence 
\begin{equation}
\lb{ap2}
A=(L^3)_{\ge 0}=
\big((\cS^2+(u_1+u)\cS^{-1}+uu_{-2}\cS^{-4})(\cS+u\cS^{-2})\big)_{\ge 0}
=\cS^3+u+u_1+u_2.
\end{equation}

To apply the theory described in Section~\ref{secgt}, 
we need to find a DLR of the form~\eqref{mu},~\eqref{lr}.
Recall that equation~\eqref{bogp2}  
is equivalent to the compatibility of system~\eqref{lphi},~\eqref{phit}.
To obtain a DLR of the form~\eqref{mu},~\eqref{lr}, 
we are going to rewrite system~\eqref{lphi},~\eqref{phit} in the form~\eqref{syspsi} 
for some vector $\Psi$ and matrices $M$, $\lt$.

For $L=\cS+u\cS^{-2}$ equation~\eqref{lphi} reads
\begin{equation}
\lb{csphi}
\cS(\phi)+u \cS^{-2}(\phi)=\lambda \phi.
\end{equation}
Set 
\begin{equation}
\lb{psip}
\psi^1=\cS^{-2}(\phi),\quad\qquad\psi^2=\cS^{-1}(\phi),\quad\qquad\psi^3=\phi.
\end{equation}
Then equation~\eqref{csphi} is equivalent to
\begin{equation}
\label{origM}
\cS
\begin{pmatrix}
 \psi^1 \\
 \psi^2 \\
 \psi^3 \\
\end{pmatrix}
=
\begin{pmatrix}
 0 & 1 & 0 \\
 0 & 0 & 1 \\
 -u & 0 & \lambda  \\
\end{pmatrix}
\begin{pmatrix}
\psi^1 \\
 \psi^2 \\
 \psi^3 \\
\end{pmatrix}.
\end{equation}
Set 
$$
\Psi=
\begin{pmatrix}
\psi^1 \\
 \psi^2 \\
 \psi^3 \\
\end{pmatrix},\qquad\quad
M(u,\la)=
\begin{pmatrix}
 0 & 1 & 0 \\
 0 & 0 & 1 \\
 -u & 0 & \lambda  \\
\end{pmatrix}.
$$
Then~\eqref{origM} reads $\cS(\Psi)=M(u,\la)\Psi$. 
Applying $\cS^{-1}$ to this equation, we obtain 
$$
\Psi=\cS^{-1}\big(M(u,\la)\big)\cS^{-1}(\Psi)=M(u_{-1},\la)\cS^{-1}(\Psi),
$$ 
which implies $\cS^{-1}(\Psi)=M(u_{-1},\la)^{-1}\Psi$, that is,
\begin{equation}
\lb{s1p}
\cS^{-1}
\begin{pmatrix}
 \psi_1 \\
 \psi_2 \\
 \psi_3 \\
\end{pmatrix}
=
\begin{pmatrix}
 0 & \frac{\lambda }{u_{-1}} & -\frac{1}{u_{-1}} \\
 1 & 0 & 0 \\
 0 & 1 & 0 \\
\end{pmatrix}
\begin{pmatrix}
 \psi_1 \\
 \psi_2 \\
 \psi_3 \\
\end{pmatrix}.
\end{equation}

According to~\eqref{ap2}, we have $A=\cS^3+u+u_1+u_2$, so equation~\eqref{phit} reads
\begin{equation}
\lb{phitp2}
\phi_t=\cS^3(\phi)+(u+u_1+u_2)\phi.
\end{equation}
Using~\eqref{psip},~\eqref{origM}, we can rewrite~\eqref{phitp2} as
\begin{equation}
\lb{p3t}
\psi^3_t=\cS^3(\psi^{3})+(u+u_1+u_2)\psi^{3}
=-\lambda^2u\psi^1-\lambda u_1\psi^2+(\lambda^3+u+u_1)\psi^3,
\end{equation}
where we have used \eqref{origM} to compute $\cS^3(\psi^3)$.
According to~\eqref{psip}, one has $\psi^{2}= S^{-1}(\psi^3)$ and $\psi^1=S^{-1}(\psi^2)$.
Combining this with~\eqref{s1p} and~\eqref{p3t}, we get 
\begin{gather}
\lb{p2t}
\begin{split}
\psi^{2}_t= S^{-1}(\psi^3_t)
&=-\lambda^2u_{-1}S^{-1}(\psi^1)-\lambda uS^{-1}(\psi^2)+(\lambda^3+u_{-1}+u)S^{-1}(\psi^3)=\\
&=-\lambda u\psi^{1}+(u_{-1}+u)\psi^{2}+\lambda^2 \psi^{3},
\end{split}\\
\lb{p1t}
\begin{split}
\psi^1_{t}=S^{-1}(\psi^2_t)
&=-\lambda u_{-1}S^{-1}(\psi^1)+(u_{-2}+u_{-1})S^{-1}(\psi^2)+\lambda^2S^{-1}(\psi^3)=\\
&=(u_{-1}+u_{-2})\psi^{1}+\lambda \psi^{3}.
\end{split}
\end{gather}
Equations~\eqref{p3t},~\eqref{p2t},~\eqref{p1t} can be written in matrix form as follows
\begin{equation}
\lb{b2tm}
\partial_t
\begin{pmatrix}
 \psi^1 \\
 \psi^2 \\
 \psi^3 \\
\end{pmatrix}
=
\begin{pmatrix}
 u_{-2}+u_{-1} & 0 & \lambda  \\
 -\lambda  u & u_{-1}+u & \lambda^2 \\
 -\lambda^2u  & -\lambda  u_1 & \lambda^3+u_1+u \\
\end{pmatrix}
\begin{pmatrix}
 \psi^1 \\
 \psi^2 \\
 \psi^3 \\
\end{pmatrix}.
\end{equation}

Since the compatibility of system~\eqref{origM},~\eqref{b2tm} is equivalent to equation~\eqref{bogp2},
we obtain the following DLR for~\eqref{bogp2}
\begin{equation}
\lb{bmu}
M(u,\la)=
\begin{pmatrix}
 0 & 1 & 0 \\
 0 & 0 & 1 \\
 -u & 0 & \lambda  \\
\end{pmatrix}
,\qquad\quad
\lt=
\begin{pmatrix}
 u_{-2}+u_{-1} & 0 & \lambda  \\
 -\lambda  u & u_{-1}+u & \lambda^2 \\
 -\lambda^2u  & -\lambda  u_1 & \lambda^3+u_1+u \\
\end{pmatrix}.
\end{equation}
Note that $M(u,\la)$ is not invertible for $u=0$. 
In agreement with Remark~\ref{reminv}, we assume $u\neq 0$.

Let us compute the groups $\hg{1}$, $\hg{2}$, $\hg{3}$ for this DLR.
In the course of computation, we will see that it is convenient 
to replace~\eqref{bmu} by a gauge-equivalent DLR,
in agreement with Remark~\ref{conjug}.

According to~\eqref{mumu}, the group $\hg{1}$ is generated by the matrix-functions 
\begin{equation}
\lb{bmh1}
M({\tilde u},\la)\cdot M(u,\la)^{-1}
=
\begin{pmatrix}
 1 & 0 & 0 \\
 0 & 1 & 0 \\
 0 & \lambda\big(1-\frac{{\tilde u}}{u}\big) & \frac{{\tilde u}}{u} \\
\end{pmatrix}.
\end{equation}
Note that the right-hand side of~\eqref{bmh1} can be simplified by the following conjugation
$$
\begin{pmatrix}
 1 & 0 & 0 \\
 0 & 1 & 0 \\
 0 & -\lambda  & 1 \\
\end{pmatrix}
\begin{pmatrix}
 1 & 0 & 0 \\
 0 & 1 & 0 \\
 0 & \lambda\big(1-\frac{{\tilde u}}{u}\big) & \frac{{\tilde u}}{u} \\
\end{pmatrix}
\begin{pmatrix}
 1 & 0 & 0 \\
 0 & 1 & 0 \\
 0 & -\lambda  & 1 \\
\end{pmatrix}^{-1}=
\begin{pmatrix}
 1 & 0 & 0 \\
 0 & 1 & 0 \\
 0 & 0 & \frac{{\tilde u}}{u} \\
\end{pmatrix}.
$$
Therefore, it makes sense to consider the matrix
$\mathbf{g}^1(\la)=
\begin{pmatrix}
 1 & 0 & 0 \\
 0 & 1 & 0 \\
 0 & -\lambda  & 1 \\
\end{pmatrix}$ 
and the following gauge-equivalent DLR
\begin{equation}
\notag
\hat M(u,\la)=\mathbf{g}^1(\la)\cdot M(u,\la)\cdot\mathbf{g}^1(\la)^{-1}
=
\begin{pmatrix}
 0 & 1 & 0 \\
 0 & \lambda  & 1 \\
 -u & 0 & 0 \\
\end{pmatrix},\qquad\quad
\hat\lt=\mathbf{g}^1(\la)\cdot\lt\cdot\mathbf{g}^1(\la)^{-1}.
\end{equation}
For this DLR,
the group $\hg{1}$ is generated by the matrix-functions 
\begin{equation}
\notag
\hat M({\tilde u},\la)\cdot\hat M(u,\la)^{-1}
=
\begin{pmatrix}
 1 & 0 & 0 \\
 0 & 1 & 0 \\
 0 & 0 & \frac{{\tilde u}}{u} \\
\end{pmatrix},
\end{equation}
so $\hg{1}$ consists of the constant matrix-functions 
$\begin{pmatrix}
 1 & 0 & 0 \\
 0 & 1 & 0 \\
 0 & 0 & a_1 \\
\end{pmatrix}$,  
where $a_1\in\fik$ is an arbitrary nonzero constant.

According to~\eqref{mhm}, the group $\hg{2}$ is generated by 
$h_1=
\begin{pmatrix}
 1 & 0 & 0 \\
 0 & 1 & 0 \\
 0 & 0 & a_1 \\
\end{pmatrix}
\in\hg{1}$ and 
\begin{equation}
\lb{bmh1m}
\hat M(u,\la)\cdot h_1\cdot\hat M(u,\la)^{-1}=
\begin{pmatrix}
 1 & 0 & 0 \\
 \lambda(1-a_1) & a_1 & 0 \\
 0 & 0 & 1 \\
\end{pmatrix}
\end{equation}
for all nonzero $a_1\in\fik$.
The right-hand side of~\eqref{bmh1m} can be simplified by the following conjugation
$$
\begin{pmatrix}
 1 & 0 & 0 \\
 -\lambda  & 1 & 0 \\
 0 & 0 & 1 \\
\end{pmatrix}
\begin{pmatrix}
 1 & 0 & 0 \\
 \lambda(1-a_1) & a_1 & 0 \\
 0 & 0 & 1 \\
\end{pmatrix}
\begin{pmatrix}
 1 & 0 & 0 \\
 -\lambda  & 1 & 0 \\
 0 & 0 & 1 \\
\end{pmatrix}^{-1}=
\begin{pmatrix}
 1 & 0 & 0 \\
 0 & a_1 & 0 \\
 0 & 0 & 1 \\
\end{pmatrix}.
$$
Hence, it makes sense to consider the matrix
$\mathbf{g}^2(\la)=
\begin{pmatrix}
 1 & 0 & 0 \\
 -\lambda  & 1 & 0 \\
 0 & 0 & 1 \\
\end{pmatrix}$ 
and the following gauge-equivalent DLR
\begin{gather}
\lb{chem}
\check M(u,\la)=\mathbf{g}^2(\la)\cdot\hat M(u,\la)\cdot\mathbf{g}^2(\la)^{-1}
=
\begin{pmatrix}
 \lambda  & 1 & 0 \\
 0 & 0 & 1 \\
 -u & 0 & 0 \\
\end{pmatrix},\\
\lb{cheu}
\check\lt=\mathbf{g}^2(\la)\cdot\hat\lt\cdot\mathbf{g}^2(\la)^{-1}=
\begin{pmatrix}
 \lambda^3+u_{-2}+u_{-1} & \lambda^2 & \lambda  \\
 -\lambda  u_{-2} & u_{-1}+u & 0 \\
 -\lambda^2 u_{-1} & -\lambda  u_{-1} & u+u_{1} \\
\end{pmatrix}.
\end{gather}
For this DLR, the groups $\hg{1}$, $\hg{2}$ are as follows
\begin{gather}
\lb{bhg1}
\hg{1}=\left.\left\{
\begin{pmatrix}
 1 & 0 & 0 \\
 0 & 1 & 0 \\
 0 & 0 & a_1 \\
\end{pmatrix}\ \right|\ 
a_1\in\fik,\ a_1\neq 0
\right\},\\
\lb{bhg2}
\hg{2}=\left.\left\{
\begin{pmatrix}
 1 & 0 & 0 \\
 0 & a_2 & 0 \\
 0 & 0 & a_1 \\
\end{pmatrix}\ \right|\ 
a_1,a_2\in\fik,\ a_1,a_2\neq 0
\right\}.
\end{gather}

According to~\eqref{mhm}, the group $\hg{3}$ is generated by 
$h_2=
\begin{pmatrix}
 1 & 0 & 0 \\
 0 & a_2 & 0 \\
 0 & 0 & a_1 \\
\end{pmatrix}
\in\hg{2}$ and 
\begin{equation}
\notag
\check M(u,\la)\cdot h_2\cdot\check M(u,\la)^{-1}=
\begin{pmatrix}
 a_2 & 0 & \frac{(a_2-1) \lambda }{u} \\
 0 & a_1 & 0 \\
 0 & 0 & 1 \\
\end{pmatrix}
\end{equation}
for all nonzero $a_1,a_2,u\in\fik$.
Therefore, the group $\hg{3}$ for the DLR~\eqref{chem}, \eqref{cheu}
consists of the matrix-functions 
$\begin{pmatrix}
 a_3 & 0 & a_4\lambda\\
 0 & a_2 & 0 \\
 0 & 0 & a_1 \\
\end{pmatrix}$
for all $a_1,a_2,a_3,a_4\in\fik$, $a_1,a_2,a_3\neq 0$.

Let us construct some MTs for equation~\eqref{bogp2},
using Theorem~\ref{thscal} and Remark~\ref{remact}.
To clarify the construction, 
we begin with simple examples (Examples~\ref{bc3} and~\ref{bcp2}), 
which give some known MTs. Example~\ref{bcpcp} is more interesting 
and gives an MT which seems to be new.

Following Remark~\ref{remact}, in Examples~\ref{bc3},~\ref{bcp2},~\ref{bcpcp} 
we consider a group $G$ of the form~\eqref{gk1} and a transitive action 
of~$G$ on a manifold $\mnf$.
We take $H\subset G$ to be the stabilizer subgroup of a point~$\pmm\in\mnf$, 
so $H$ is given by~\eqref{gxx}. 
Then $G/H\cong\mnf$.
Since the results do not depend on the choice of $\pmm\in\mnf$, 
in Examples~\ref{bc3},~\ref{bcp2},~\ref{bcpcp} 
we do not mention $\pmm$ explicitly.

\begin{example}
\lb{bc3}

Using the notation from Section~\ref{sgalr}, 
consider the case $\sm=3$, $\og=1$, $k_1=1$, $\cp_1\in\fik$. 
Then $G=\gq{3}{\cp_1}{1}\cong\GL_3(\fik)$.

Consider the space $\fik^3$ with coordinates $\al^1$, $\al^2$, $\al^3$
and the standard transitive action of the group $G\cong\GL_3(\fik)$ on the manifold
$\mnf=\wor{3}$.

According to Remark~\ref{remact}, to apply Theorem~\ref{thscal}, 
we need to find an $\hg{2}$-left-invariant function~$z$ on an open dense subset 
of $\wor{3}$ such that conditions~\eqref{sclc},~\eqref{pdu} hold for $G/H\cong\wor{3}$ and $r=\dim G/H=3$.

Since we are using the DLR~\eqref{chem},~\eqref{cheu} 
and the standard action of $G\cong\GL_3(\fik)$ on $\wor{3}$, 
one has the formulas
\begin{align}
\lb{sal}
\cS
\begin{pmatrix}
 \al^1 \\
 \al^2 \\
 \al^3 \\
\end{pmatrix}
&=
\begin{pmatrix}
 \cp_1  & 1 & 0 \\
 0 & 0 & 1 \\
 -u & 0 & 0 \\
\end{pmatrix}
\begin{pmatrix}
 \al^1 \\
 \al^2 \\
 \al^3 \\
\end{pmatrix},\\
\lb{dtal}
D_t
\begin{pmatrix}
 \al^1 \\
 \al^2 \\
 \al^3 \\
\end{pmatrix}
&=
\begin{pmatrix}
 \cp_1^3+u_{-2}+u_{-1} & \cp_1^2 & \cp_1  \\
 -\cp_1  u_{-2} & u_{-1}+u & 0 \\
 -\cp_1^2 u_{-1} & -\cp_1 u_{-1} & u+u_{1} \\
\end{pmatrix}
\begin{pmatrix}
 \al^1 \\
 \al^2 \\
 \al^3 \\
\end{pmatrix}.
\end{align}

In the right-hand side of~\eqref{sal} we use the matrix
$\check M(u,\la)$ given by~\eqref{chem}, but we substitute $\cp_1$ in place of $\la$, 
for the following reason.
In the general definition~\eqref{csw} of~$\cS$ we work over the algebra
$\fik[\la]/((\la-\cp_p)^{k_p})$, where one has $(\la-\cp_p)^{k_p}=0$.
In the present example we have $k_1=1$, so we need to impose the relation $\la-\cp_1=0$.

The right-hand side of~\eqref{dtal} is obtained in the same way from the matrix 
$\check\lt$ given by~\eqref{cheu}.

As the group $\hg{2}$ is of the form~\eqref{bhg2}, 
the function $z=\al^1$ is $\hg{2}$-left-invariant.
Using formula~\eqref{sal} and the notation $z_k=\cS^k(z)$ for $k\in\zz$, we obtain
\begin{gather}
\lb{b2z1}
z_0=z=\al^1,\qquad\quad
z_1=\cS(z)=\cS(\al^1)=\cp_1\al^1+\al^2,\\
\lb{b2z2}
z_2=\cS(z_1)=\cp_1\cS(\al^1)+\cS(\al^2)
=\cp_1^2\al^1+\cp_1\al^2+\al^3,\\
\lb{b2z3}
z_3=\cS^3(z)=\cS(z_2)=\cp_1^2\cS(\al^1)+\cp_1\cS(\al^2)+\cS(\al^3)=
(\cp_1^3-u)\al^1+\cp_1^2\al^2+\cp_1\al^3.
\end{gather}

Recall that in this example we have $r=3$.
From~\eqref{b2z1},~\eqref{b2z2} we see that the functions $z$, $\cS(z)$, $\cS^2(z)$ 
form a system of coordinates on the manifold~$\wor{3}$, so condition~\eqref{sclc} is valid.
Equation~\eqref{b2z3} shows that condition~\eqref{pdu} is valid as well.

Therefore, using Theorem~\ref{thscal}, we obtain an MT as follows.
Equations~\eqref{b2z1}, \eqref{b2z2}, \eqref{b2z3} imply $z_3=-uz_0+\cp_1 z_2$, which yields
\begin{equation}
\lb{bu0z}
u=\frac{\cp_1 z_2-z_3}{z_0}.
\end{equation}
Using~\eqref{dtal},~\eqref{b2z1},~\eqref{b2z2}, we get
\begin{equation}
\lb{bdtz}
D_t(z)=D_t(\al^1)=
(\cp_1^3+u_{-2}+u_{-1})\al^{1}+\cp_1^2\al^2+\cp_1\al^3=
(u_{-2}+u_{-1})z_0+\cp_1 z_2.
\end{equation}
Equation~\eqref{bu0z} implies 
$$
u_{-1}=\cS^{-1}\left(\frac{\cp_1 z_2-z_3}{z_0}\right)=
\frac{\cp_1 z_1-z_2}{z_{-1}},\qquad\quad
u_{-2}=\cS^{-2}\left(\frac{\cp_1 z_2-z_3}{z_0}\right)=
\frac{\cp_1 z_0-z_1}{z_{-2}}.
$$
Substituting this in~\eqref{bdtz}, one obtains
\begin{equation}
\lb{bdtzz}
D_t(z)=\left(\frac{\cp_1 z_0-z_1}{z_{-2}}+\frac{\cp_1 z_1-z_2}{z_{-1}}\right)z_0+\cp_1 z_2.
\end{equation}
According to Theorem~\ref{thscal}, to obtain an MT, 
we need to replace $z_k$ by $v_k$ for all $k\in\zz$ in~\eqref{bu0z},~\eqref{bdtzz}.
Here and below we use also the identification $v_0=v$.

Thus we get the following result. For any $\cp_1\in\fik$, 
the formula 
\begin{equation}
\lb{ucv23v}
u=\frac{\cp_1 v_2-v_3}{v}
\end{equation}
determines an MT from the equation 
\begin{equation}
\lb{vtcvp2}
v_t=\left(\frac{\cp_1 v-v_1}{v_{-2}}+\frac{\cp_1 v_1-v_2}{v_{-1}}\right)v+\cp_1 v_2.
\end{equation}
to equation~\eqref{bogp2}.
As is shown in Remark~\ref{remmt1b} below, this MT can also be obtained 
immediately from system~\eqref{lphi},~\eqref{phit}.
We have included this simple example, in order to demonstrate how
the method described in Theorem~\ref{thscal} and Remark~\ref{remact} works.
\end{example}

\begin{remark}
\lb{remmt1b}
The MT~\eqref{ucv23v} can be obtained immediately from system~\eqref{lphi},~\eqref{phit} 
as follows.
Recall that in this subsection we consider the case $p=2$.
Since for $p=2$ we have $L=\cS+u \cS^{-2}$, 
the equation $L\phi=\lambda\phi$ is equivalent to
\begin{equation}
\lb{ulphi}
u=\frac{\lambda\phi-\cS(\phi)}{\cS^{-2}(\phi)}.
\end{equation}
Setting $v=\cS^{-2}(\phi)$ and replacing $\lambda$ by $\cp_1$ in~\eqref{ulphi}, 
one gets~\eqref{ucv23v}.
Substituting~\eqref{ucv23v} and $\phi=\cS^2(v)$ in~\eqref{phit} for $p=2$, 
one obtains an equation equivalent to~\eqref{vtcvp2}.
\end{remark}

\begin{example}
\lb{bcp2}
Consider the case 
$$
\sm=3,\quad \og=1,\quad k_1=1,\quad
G=\gq{3}{\cp_1}{1}\cong\GL_3(\fik),\quad \cp_1\in\fik,
$$
and the standard transitive action of the group $G\cong\GL_3(\fik)$ 
on the $2$-dimensional complex projective space $\prs^{2}$.
So $\mnf=\prs^{2}$ in this example.

According to Remark~\ref{remact}, to apply Theorem~\ref{thscal}, 
one needs to find an $\hg{1}$-left-invariant function~$z$ on an open dense subset 
of $\prs^{2}$ such that conditions~\eqref{sclc},~\eqref{pdu} 
hold for $G/H\cong\prs^{2}$ and $r=\dim G/H=2$.

Recall that in Example~\ref{bc3} we have considered the space  
$\fik^3$ with coordinates $\al^1$, $\al^2$, $\al^3$.
Now we regard $\al^1$, $\al^2$, $\al^3$ as homogeneous coordinates 
for the projective space $\prs^{2}$.

Then $y^1=\al^2/\al^1$, $\,\,y^2=\al^3/\al^1$ are affine coordinates on 
the open dense subset 
$$
\mathbb{V}=\big\{(\al^1:\al^2:\al^3)\in\prs^{2}\ \big|\ \al^1\neq 0\big\}\,\subset\,\prs^{2}.
$$

As the group $\hg{1}$ is of the form~\eqref{bhg1}, 
the function $z=y^1=\al^2/\al^1$ is $\hg{1}$-left-invariant.
Using formula~\eqref{sal} and the notation $z_k=\cS^k(z)$ for $k\in\zz$, we obtain
\begin{gather}
\lb{bcp2z1}
z_0=z=y^1=\frac{\al^2}{\al^1},\qquad\quad
z_1=\cS(z)=\cS\Big(\frac{\al^2}{\al^1}\Big)=
\frac{\al^3}{\cp_1\al^1+\al^2}=\frac{y^2}{\cp_1+y^1},\\
\lb{bcp2z2}
z_2=\cS^2(z)=\cS(z_1)=\frac{\cS(\al^3)}{\cp_1\cS(\al^1)+\cS(\al^2)}=
\frac{-u\al^1}{\cp_1(\cp_1\al^1+\al^2)+\al^3}=
\frac{-u}{(\cp_1+z_0)(\cp_1+z_1)}.
\end{gather}

Recall that in this example we have $r=2$.
Since the differentials of the functions $z$ and $\cS(z)$ 
are linearly independent, 
condition~\eqref{sclc} is valid.
Equation~\eqref{bcp2z2} shows that condition~\eqref{pdu} is valid as well.
Therefore, using Theorem~\ref{thscal}, we obtain an MT as follows.
From~\eqref{bcp2z2} one gets
\begin{equation}
\lb{bcpu0z}
u=-z_2(\cp_1+z_0)(\cp_1+z_1).
\end{equation}
Using~\eqref{dtal} and~\eqref{bcp2z1}, we obtain
\begin{multline}
\lb{bcpdtz}
D_t(z)=D_t\Big(\frac{\al^2}{\al^1}\Big)
=\frac{D_t(\al^2)\al^1-\al^2D_t(\al^1)}{(\al^1)^2}=\\
=\frac{\big(-\cp_1 u_{-2}\al^1+(u_{-1}+u)\al^2\big)\al^1
-\al^2\big((\cp_1^3+u_{-2}+u_{-1})\al^1+\cp_1^2\al^2+\cp_1\al^3\big)}{(\al^1)^2}=\\
=uz_0-u_{-2}(\cp_1+z_0)-z_0\cp_1(\cp_1+z_0)(\cp_1+z_1).
\end{multline}
Equation~\eqref{bcpu0z} yields 
$u_{-2}=\cS^{-2}\big(-z_2(\cp_1+z_0)(\cp_1+z_1)\big)=
-z_0(\cp_1+z_{-2})(\cp_1+z_{-1})$.
Substituting this and~\eqref{bcpu0z} in~\eqref{bcpdtz}, we get
\begin{equation}
\lb{bcpzt}
D_t(z)=
z_0(\cp_1+z_0)\big((\cp_1+z_{-2})(\cp_1+z_{-1})-(\cp_1+z_{1})(\cp_1+z_{2})\big).
\end{equation}
According to Theorem~\ref{thscal}, to obtain an MT, 
we need to replace $z_k$ by $v_k$ for all $k\in\zz$ in~\eqref{bcpu0z},~\eqref{bcpzt}.
Thus we get the following result. For any $\cp_1\in\fik$, 
the formula 
\begin{equation}
\lb{uv2cvc}
u=-v_2(\cp_1+v)(\cp_1+v_1)
\end{equation}
determines an MT from the equation 
\begin{equation}
\lb{vcvcv2}
v_t=v(\cp_1+v)\big((\cp_1+v_{-2})(\cp_1+v_{-1})-(\cp_1+v_{1})(\cp_1+v_{2})\big)
\end{equation}
to equation~\eqref{bogp2}. 
This MT is equivalent to a known MT.
Indeed, set $\tilde v=v+\cp_1$. 
Then formulas~\eqref{uv2cvc},~\eqref{vcvcv2} can be written as
\begin{gather}
\lb{utvpcp}
u=-(\tilde{v}_2-\cp_1)\tilde v\tilde{v}_1,\\
\lb{tvttv}
\tilde v_t=(\tilde v-\cp_1)\tilde v
(\tilde{v}_{-2}\tilde{v}_{-1}-\tilde{v}_{1}\tilde{v}_{2}).
\end{gather}
As Yu.~B.~Suris told us, 
equation~\eqref{tvttv} and the MT~\eqref{utvpcp} are well known.
\end{example}

\begin{example}
\lb{bcpcp}

Now consider the case 
$$
\sm=3,\ \og=2,\ 
k_1=k_2=1,\,\  
G=\gq{3}{\cp_1}{1}\times\gq{3}{\cp_2}{1}\cong\GL_3(\fik)\times\GL_3(\fik),
\,\ 
\cp_1,\cp_2\in\fik.
$$
Using the standard transitive action of $\GL_3(\fik)$ on $\prs^{2}$,
we obtain a transitive action of the group $G\cong\GL_3(\fik)\times\GL_3(\fik)$ 
on $\mnf=\prs^{2}\times\prs^{2}$.

According to Remark~\ref{remact}, to apply Theorem~\ref{thscal}, 
we need to find an $\hg{3}$-left-invariant function~$z$ on an open dense subset 
of $\prs^{2}\times\prs^{2}$ such that conditions~\eqref{sclc},~\eqref{pdu} hold for 
$G/H\cong\prs^{2}\times\prs^{2}$ and $r=\dim G/H=4$.

As has been said in Example~\ref{bcp2}, 
we regard $\al^1$, $\al^2$, $\al^3$ as homogeneous coordinates 
for the $2$-dimensional projective space $\prs^{2}$.
We introduce also homogeneous coordinates $\be^1$, $\be^2$, $\be^3$ 
for another copy of~$\prs^{2}$. Similarly to~\eqref{sal},~\eqref{dtal}, one has the formulas
\begin{align}
\lb{sbe}
\cS
\begin{pmatrix}
 \be^1 \\
 \be^2 \\
 \be^3 \\
\end{pmatrix}
&=
\begin{pmatrix}
 \cp_1  & 1 & 0 \\
 0 & 0 & 1 \\
 -u & 0 & 0 \\
\end{pmatrix}
\begin{pmatrix}
 \be^1 \\
 \be^2 \\
 \be^3 \\
\end{pmatrix},\\
\lb{dtbe}
D_t
\begin{pmatrix}
 \be^1 \\
 \be^2 \\
 \be^3 \\
\end{pmatrix}
&=
\begin{pmatrix}
 \cp_1^3+u_{-2}+u_{-1} & \cp_1^2 & \cp_1  \\
 -\cp_1  u_{-2} & u_{-1}+u & 0 \\
 -\cp_1^2 u_{-1} & -\cp_1 u_{-1} & u+u_{1} \\
\end{pmatrix}
\begin{pmatrix}
 \be^1 \\
 \be^2 \\
 \be^3 \\
\end{pmatrix}.
\end{align}

Recall that, according to Section~\ref{sgalr},
the group $\gla$ consists of \mbox{$\GL_\sm(\fik)$-valued} functions of~$\la$, 
and in the present example we have $\sm=3$.

As has been computed above, the subgroup $\hg{3}\subset\gla$ 
for the DLR~\eqref{chem}, \eqref{cheu} 
consists of the matrix-functions 
$\begin{pmatrix}
 a_3 & 0 & a_4\lambda\\
 0 & a_2 & 0 \\
 0 & 0 & a_1 \\
\end{pmatrix}
\in\gla$
for all $a_1,a_2,a_3,a_4\in\fik$, $a_1,a_2,a_3\neq 0$.

When we speak about $\hg{3}$-left-invariant functions 
on open subsets of $\prs^{2}\times\prs^{2}$, we need to consider 
the action of $\hg{3}$ on $\prs^{2}\times\prs^{2}$ that is determined 
by the embedding $\rho(\hg{3})\subset G$ and the action of $G$ on $\prs^{2}\times\prs^{2}$, 
where the homomorphism $\rho\colon\gla\to G$ is defined by~\eqref{rhog},~\eqref{gprho}.

According to~\eqref{rhog},~\eqref{gprho}, one has
$$
\rho\left(
\begin{pmatrix}
 a_3 & 0 & a_4\lambda\\
 0 & a_2 & 0 \\
 0 & 0 & a_1 \\
\end{pmatrix}
\right)
=
\left(
\begin{pmatrix}
 a_3 & 0 & a_4\cp_1\\
 0 & a_2 & 0 \\
 0 & 0 & a_1 \\
\end{pmatrix},
\begin{pmatrix}
 a_3 & 0 & a_4\cp_2\\
 0 & a_2 & 0 \\
 0 & 0 & a_1 \\
\end{pmatrix}
\right)\in\GL_3(\fik)\times\GL_3(\fik)\cong G.
$$
Therefore, the action of $\hg{3}$ on $\prs^{2}\times\prs^{2}$ is as follows
\begin{equation}
\notag
\begin{pmatrix}
 a_3 & 0 & a_4\lambda\\
 0 & a_2 & 0 \\
 0 & 0 & a_1 \\
\end{pmatrix}
\cdot \left(
\begin{pmatrix}
\al^1 \\
\al^2 \\
\al^3 \\
\end{pmatrix},
\begin{pmatrix}
\be^1 \\
\be^2 \\
\be^3 \\
\end{pmatrix}
\right)=
\left(
\begin{pmatrix}
a_3\al^1+a_4\cp_1\al^3\\
a_2\al^2 \\
a_1\al^3 \\
\end{pmatrix},
\begin{pmatrix}
a_3\be^1+a_4\cp_2\be^3\\
a_2\be^2\\
a_1\be^3\\
\end{pmatrix}
\right).
\end{equation}
This formula implies that the function $z=\dfrac{\al^2\be^3}{\al^3\be^2}$ is 
$\hg{3}$-left-invariant.

Consider the affine coordinates
\begin{equation}
\lb{qqqq}
q^1=\frac{\al^1}{\al^3},\qquad
q^2=\frac{\al^2}{\al^3},\qquad
\hat q^1=\frac{\be^1}{\be^3},\qquad
\hat q^2=\frac{\be^2}{\be^3}
\end{equation}
on the open dense subset $\mathbb{W}\subset\prs^{2}\times\prs^{2}$ where $\al^3,\be^3\neq 0$.

Using~\eqref{sal},~\eqref{sbe},~\eqref{qqqq}, and the formula 
$z=\dfrac{\al^2\be^3}{\al^3\be^2}=\dfrac{q^2}{\hat q^2}$, 
one can express the functions 
$\cS^k(z)$, $k=0,1,2,3$, in terms of $q^1$, $q^2$, $\hat q^1$, $\hat q^2$.
These expressions show that condition~\eqref{sclc} is valid.

Set $z_k=\cS^k(z)$ for $k\in\zz$.
Similarly to the previous examples, one obtains a formula of the form 
$z_4=\cS^{4}(z)=\fe(z_0,z_1,z_{2},z_{3},u)$ with some rational function $\fe$, 
which depends nontrivially on $u$, so condition~\eqref{pdu} is valid as well.
From the equation $z_4=\fe(z_0,z_1,z_2,z_3,u)$ 
one can express $u$ in terms of $z_0$, $z_1$, $z_2$, $z_3$, $z_4$ as follows
\begin{equation}
\lb{bucpcpz}
u=\frac{z_2 z_3(\cp_1 z_4-\cp_2)(\cp_2 z_0 z_1-\cp_1)
(\cp_2 z_1 z_2-\cp_1)}{(z_0 z_1 z_2-1)(z_1 z_2 z_3-1)(z_2 z_3 z_4-1)}.
\end{equation}

Similarly to the computation of $D_t(z)$ in Example~\ref{bcp2}, 
using formulas~\eqref{dtal}, \eqref{dtbe}, \eqref{bucpcpz}, one can obtain the following 
\begin{equation}
\lb{cpcpzt}
D_t(z)=\frac{z_0(z_{-2} z_{-1}-z_1 z_2)(\cp_2-\cp_1 z_0)(\cp_2 z_0 z_{-1}-\cp_1)(\cp_2 z_0 z_1-\cp_1)}{(z_0 z_{-2} z_{-1}-1)(z_0 z_{-1} z_1-1)(z_0 z_1 z_2-1)}.
\end{equation}

According to Theorem~\ref{thscal}, to obtain an MT, 
we need to replace $z_k$ by $v_k$ for all $k\in\zz$ in~\eqref{bucpcpz},~\eqref{cpcpzt}, 
which gives the following result.
For any $\cp_1,\cp_2\in\fik$, 
the formula 
\begin{equation}
\lb{uv2v3}
u=\frac{v_2 v_3(\cp_1 v_4-\cp_2)(\cp_2 v v_1-\cp_1)
(\cp_2 v_1 v_2-\cp_1)}{(v v_1 v_2-1)(v_1 v_2 v_3-1)(v_2 v_3 v_4-1)}.
\end{equation}
determines an MT from the equation 
\begin{equation}
\lb{vtvv2}
v_t=
\frac{v(v_{-2} v_{-1}-v_1 v_2)(\cp_2-\cp_1 v)(\cp_2 v v_{-1}-\cp_1)(\cp_2 v v_1-\cp_1)}{(v v_{-2} v_{-1}-1)(v v_{-1} v_1-1)(v v_1 v_2-1)}
\end{equation}
to equation~\eqref{bogp2}.

Equation~\eqref{vtvv2} and the MT~\eqref{uv2v3} with arbitrary $\cp_1,\cp_2\in\fik$
seem to be new.

In the case $\cp_2=0$ equation~\eqref{vtvv2} is equivalent 
(via scaling)
to a particular case of equation~(17.8.24) from~\cite[Section 17]{suris03}
and to equation~(43a) from~\cite[Section 4]{mx14}.
The MT~\eqref{uv2v3} for $\cp_2=0$ can be extracted from the results 
of~\cite[Section 17]{suris03} and~\cite[Section 4]{mx14} as well.  

In the case $\cp_1=\cp_2\neq 0$ equation~\eqref{vtvv2} 
can be transformed (by scaling of~$t$ and a linear-fractional transformation of~$v$)
to equation~(3.4) from the arxiv version of~\cite{levi14}.
Note that equation~(3.3b) from the journal version of~\cite{levi14} is supposed 
to be the same as equation~(3.4) from the arxiv version of~\cite{levi14}, 
but there is a misprint in equation~(3.3b) in the journal version of~\cite{levi14}.

Recall that the MT~\eqref{uv2v3} has been constructed from the DLR~\eqref{bmu}.
Note that the order of the MT~\eqref{uv2v3} is higher than the size 
of the matrices in the DLR~\eqref{bmu}.
Indeed, the MT~\eqref{uv2v3} is of order~$4$, while the matrices~\eqref{bmu} are of size~$3$. 

Recall that in this example we consider the case $p=2$.
See Section~\ref{secparb} for a discussion of an analogue of~\eqref{vtvv2},~\eqref{uv2v3} 
for arbitrary $p\in\zsp$ and its relations with some formulas of Yu.~B.~Suris.

\end{example}

\subsection{The case of arbitrary $p$}
\lb{secparb}

Recall that we have obtained the $3\times 3$ matrix DLR~\eqref{bmu} 
for equation~\eqref{bogp2} from system~\eqref{lphi},~\eqref{phit} for $p=2$.
Analogously, one can obtain a $(p+1)\times(p+1)$ matrix DLR 
for equation~\eqref{bog} from system~\eqref{lphi},~\eqref{phit} for arbitrary $p\in\zsp$.

Then one can construct MTs for equation~\eqref{bog}, 
using this DLR, Theorem~\ref{thscal}, and Remark~\ref{remact}.
An MT constructed in this way is discussed below.

In Example~\ref{bcpcp} we have studied the case 
\begin{gather*}
p=2,\qquad\sm=3,\qquad \og=2,\qquad k_1=k_2=1,\\
G=\gq{3}{\cp_1}{1}\times\gq{3}{\cp_2}{1}\cong\GL_3(\fik)\times\GL_3(\fik),
\qquad \cp_1,\cp_2\in\fik.
\end{gather*}
Using the standard transitive action 
of ${G\cong\GL_3(\fik)\times\GL_3(\fik)}$ on $\prs^{2}\times\prs^{2}$, 
in Example~\ref{bcpcp} we have obtained the MT~\eqref{uv2v3},~\eqref{vtvv2}.

Let us generalize this to the case of arbitrary $p\in\zsp$.
Since the corresponding computations are very similar to those 
in Section~\ref{subsp2}, we present only the final result.
Let 
\begin{gather*}
p\in\zsp,\qquad\sm=p+1,\qquad \og=2,\qquad k_1=k_2=1,\\
G=\gq{p+1}{\cp_1}{1}\times\gq{p+1}{\cp_2}{1}\cong
\GL_{p+1}(\fik)\times\GL_{p+1}(\fik),
\qquad \cp_1,\cp_2\in\fik.
\end{gather*}
Similarly to Example~\ref{bcpcp}, using the standard transitive action 
of ${G\cong\GL_{p+1}(\fik)\times\GL_{p+1}(\fik)}$ on $\prs^{p}\times\prs^{p}$, 
one can obtain the following.
For any $\cp_1,\cp_2\in\fik$, the formula
\begin{equation}
\lb{ucpvcp}
u=\frac{(\cp_1 v_{2p}-\cp_2)\big(\prod_{j=p}^{2p-1}v_j\big)%
\prod_{i=0}^{p-1}\Big(\cp_2\big(\prod_{j=i}^{i+p-1}v_j\big)-\cp_1\Big)}%
{\prod_{i=0}^p\big(-1+\prod_{j=i}^{i+p}v_j\big)}
\end{equation}
determines an MT from the equation
\begin{equation}
\lb{vtvcpcp}
v_t=\frac{v(\cp_2-\cp_1 v)\big(\prod_{i=1}^p v_{-i}-\prod_{i=1}^p v_{i}\big)%
\prod_{i=0}^{p-1}\Big(\cp_2\big(\prod_{j=i+1-p}^i v_j\big)-\cp_1\Big)}%
{\prod_{i=0}^p\big(-1+\prod_{j=i-p}^i v_j\big)}
\end{equation}
to equation~\eqref{bog}.
Formulas~\eqref{ucpvcp},~\eqref{vtvcpcp} are a generalization 
of~\eqref{uv2v3},~\eqref{vtvv2} to the case of arbitrary $p\in\zsp$.
The fact that~\eqref{ucpvcp} is indeed an MT from~\eqref{vtvcpcp} to~\eqref{bog}
can also be checked by a straightforward computation.

Equation~\eqref{vtvcpcp} and the MT~\eqref{ucpvcp} with arbitrary $\cp_1,\cp_2\in\fik$
seem to be new.
In the case $\cp_2=0$ equation~\eqref{vtvcpcp} is equivalent 
to equation~(17.8.24) from~\cite[Section 17]{suris03} via scaling of the variables.

We showed~\eqref{ucpvcp},~\eqref{vtvcpcp} to Yu.~B.~Suris, and he told us that 
this MT can be written in a more symmetric form as follows.
Let $\al,\beta$ be complex parameters.
Consider the following differential-difference equations for scalar functions 
$w=w(n,t)$ and $v=v(n,t)$
\begin{gather}
\lb{wtw}
w_t=w(\al+\beta w)\Big(\prod_{i=1}^pw_{i}-\prod_{i=1}^pw_{-i}\Big),\\
\lb{vtvalbe}
v_t=v(\al+\beta v)\Big(\prod_{i=1}^pv_{i}-\prod_{i=1}^pv_{-i}\Big)
\frac{\prod_{j=1}^p(1+\al\prod_{i=1}^pv_{j-i})}{\prod_{j=0}^p(1-\beta\prod_{i=0}^pv_{j-i})}.
\end{gather}
Equation~\eqref{wtw} is connected with~\eqref{bog} by the MT
\begin{equation}
\lb{mtuw}
u=(\al+\beta w_p)\prod_{i=0}^{p-1}w_i.
\end{equation}
It is easy to check that equations~\eqref{wtw},~\eqref{vtvalbe} are connected by the MT
\begin{equation}
\lb{wvalbe}
w=v
\frac{1+\al\prod_{i=1}^pv_{-i}}{1-\beta\prod_{i=0}^pv_{-i}}.
\end{equation}
In the case $\beta\neq 0$, equation~\eqref{vtvalbe} 
is equivalent to~\eqref{vtvcpcp} via scaling,  
and the composition of the MTs~\eqref{mtuw},~\eqref{wvalbe} 
is equivalent (up to a shift and scaling) to the MT~\eqref{ucpvcp}.

As Yu.~B.~Suris told us, equation~\eqref{wtw} and the MT~\eqref{mtuw} are well known.
Formulas \eqref{vtvalbe}, \eqref{wvalbe} with arbitrary $\alpha,\beta\in\fik$ 
seem to be new.
In the cases $(\alpha,\beta)=(1,0)$ and $(\alpha,\beta)=(0,1)$   
formulas equivalent to~\eqref{vtvalbe},~\eqref{wvalbe}
can be found in~\cite[Section 17]{suris03}.

\begin{remark}
Svinin~\cite{svinin11} constructed the equation
\begin{equation}
\lb{eqsv}
v_t=\prod_{j=1}^p\frac{1}{v_{-j}-v_{-j+p+1}}
\end{equation}
and the MT
\begin{equation}
\lb{mttsv}
u=\prod_{j=p}^{2p}\frac{1}{v_{-j}-v_{-j+p+1}}
\end{equation}
connecting~\eqref{eqsv} with~\eqref{bog}.
It would be interesting to find out whether the MT~\eqref{mttsv} 
can be constructed by the method of the present paper.
\end{remark}

\section{MTs for the Toda lattice in the Flaschka-Manakov coordinates}
\label{sectoda}

The Toda lattice equation for a scalar function $q=q(n,t)$ 
\begin{equation}
\lb{toda}
q_{tt}=\exp(q_{1}-q)-\exp(q-q_{-1}),\qquad\quad
q_1=q(n+1,t),\qquad q_{-1}=q(n-1,t),
\end{equation}
can be rewritten in the evolution form~\eqref{sdde} 
for a two-component vector-function $u=(u^1,u^2)$ as follows~\cite{flaschka74,manakov75}.
Set $u^1=\exp(q_1-q)$ and $u^2=q_{t}$. Then~\eqref{toda} implies  
\begin{equation}
\label{todamf}
u^1_t=u^1(u^2_{1}-u^2),\qquad\quad
u^2_t=u^1-u^1_{-1}.
\end{equation}
System~\eqref{todamf} is called the Toda lattice in the Flaschka-Manakov coordinates.

It is known (see, e.g.,~\cite{kmw}) that the following matrices form a 
DLR for~\eqref{todamf}
\begin{equation}
\lb{todalr}
M=  \begin{pmatrix} 
\lambda+u^2_1&  u^1 \\ -1 & 0 
\end{pmatrix},\qquad\quad
\lt=  
\begin{pmatrix}
0 &-u^1\\
1&\lambda+u^2
\end{pmatrix}.
\end{equation}
The equation $\pd_t(M)=\cS(\lt)M-M\lt$ is equivalent to~\eqref{todamf}.

To construct MTs by means of the method described in Section~\ref{secgt}, 
we need a DLR $M$, $\lt$ such that $M$ depends only on $u^i$ and $\lambda$.
The matrix $M$ in~\eqref{todalr} is not of this type, because it depends on $u^2_1$.
But this can be easily overcome as follows.

We relabel $u^2:=u^2_1$ and $u^1:=u^1$. 
After this change of variables, 
system~\eqref{todamf} and the DLR~\eqref{todalr} become
\begin{gather}
\label{todanew}
u^1_t=u^1(u^2-u^2_{-1}),\qquad\quad
u^2_t=u^1_1-u^1,\\
\lb{lrnew}
M(u^1,u^2,\lambda)= \begin{pmatrix} 
\lambda+u^2 & u^1 \\ -1 & 0 
\end{pmatrix},\qquad\quad
\lt=  
\begin{pmatrix}
0 &-u^1\\
1&\lambda+u^2_{-1}
\end{pmatrix}.
\end{gather}

According to~\eqref{mumu}, for the DLR~\eqref{lrnew}, 
the group $\hg{1}$ is generated by the matrix-functions 
\begin{equation}
\lb{todah1}
M({\tilde u}^1,{\tilde u}^2,\lambda)\cdot
M(u^1,u^2,\lambda)^{-1}
=\begin{pmatrix} 
\frac{{\tilde u}^1}{u^1} & 
\frac{{\tilde u}^1u^2}{u^1}-{\tilde u}^2+\lambda\big(\frac{{\tilde u}^1}{u^1}-1\big)\\ 
0 & 1 
\end{pmatrix}.
\end{equation}
Note that the right-hand side of~\eqref{todah1} can be simplified by the following conjugation
$$
\begin{pmatrix}
 1 & \lambda \\
 0 & 1
\end{pmatrix}
\begin{pmatrix} 
\frac{{\tilde u}^1}{u^1} & 
\frac{{\tilde u}^1u^2}{u^1}-{\tilde u}^2+\lambda\big(\frac{{\tilde u}^1}{u^1}-1\big)\\ 
0 & 1 
\end{pmatrix}
\begin{pmatrix}
 1 & \lambda \\
 0 & 1
\end{pmatrix}^{-1}=
\begin{pmatrix} 
\frac{{\tilde u}^1}{u^1} & 
\frac{{\tilde u}^1u^2}{u^1}-{\tilde u}^2\\ 
0 & 1 
\end{pmatrix}.
$$
Therefore, it makes sense to consider the following gauge-equivalent DLR
\begin{gather}
\lb{hmtoda}
\hat M=
\begin{pmatrix}
 1 & \lambda \\
 0 & 1
\end{pmatrix}
\cdot M(u^1,u^2,\lambda)\cdot
\begin{pmatrix}
 1 & \lambda \\
 0 & 1
\end{pmatrix}^{-1}=
\begin{pmatrix}
 u^2 & u^1-u^2 \lambda  \\
 -1 & \lambda  \\
\end{pmatrix},\\
\lb{hutoda}
\hat\lt=
\begin{pmatrix}
 1 & \lambda \\
 0 & 1
\end{pmatrix}
\cdot\lt\cdot
\begin{pmatrix}
 1 & \lambda \\
 0 & 1
\end{pmatrix}^{-1}=
\begin{pmatrix}
 \lambda  & \lambda u^2_{-1}-u^1 \\
 1 & u^2_{-1} \\
\end{pmatrix}.
\end{gather}
For this DLR, the group $\hg{1}$ consists of the constant matrix-functions 
$\begin{pmatrix}
 a_1 & a_2 \\
 0 & 1 
\end{pmatrix}$
for all $a_1,a_2\in\fik$, $a_1\neq 0$.

Let us construct some MTs for~\eqref{todanew}, using Theorem~\ref{tharb}.
Recall that $G$ is given by~\eqref{ggr}.
Using the notation from Section~\ref{sgalr}, 
consider the case $\sm=2$, $\og=1$, $k_1=1$, $\cp_1\in\fik$.
Then 
\begin{equation}
\lb{gtoda}
G=\gq{2}{\cp_1}{1}
=\left.\left\{
\begin{pmatrix}
 w^{10}_{11} & w^{10}_{12} \\
 w^{10}_{21} & w^{10}_{22} 
\end{pmatrix}
\ \right|\ 
w^{10}_{11}w^{10}_{22}-w^{10}_{21}w^{10}_{12}\neq 0\right\}
\cong\GL_2(\fik).
\end{equation}
We set also $H=\iel$, where $\iel\in G$ is the identity element.
Since in the product
$$
\begin{pmatrix}
 a_1 & a_2 \\
 0 & 1 
\end{pmatrix}\cdot
\begin{pmatrix}
 w^{10}_{11} & w^{10}_{12} \\
 w^{10}_{21} & w^{10}_{22} 
\end{pmatrix}
$$
the $w^{10}_{21}$ and $w^{10}_{22}$ components do not change, 
the functions $z^1=w^{10}_{21}$ and $z^2=w^{10}_{22}$ are $\hg{1}$-left-invariant.
As the subgroup $H\subset G$ is trivial, these functions are also $H$-right-invariant.

For the DLR~\eqref{hmtoda},~\eqref{hutoda} and the group~\eqref{gtoda}, 
formulas~\eqref{csw},~\eqref{dtw} become
\begin{gather}
\lb{swtoda}
\cS\begin{pmatrix}
 w^{10}_{11} & w^{10}_{12} \\
 w^{10}_{21} & w^{10}_{22} 
\end{pmatrix}
=\begin{pmatrix}
 u^2 & u^1-u^2 \lambda  \\
 -1 & \lambda  
\end{pmatrix}
\begin{pmatrix}
 w^{10}_{11} & w^{10}_{12} \\
 w^{10}_{21} & w^{10}_{22} 
\end{pmatrix},
\qquad \la-\cp_1=0,\\
\lb{dtwtoda}
D_t\begin{pmatrix}
 w^{10}_{11} & w^{10}_{12} \\
 w^{10}_{21} & w^{10}_{22} 
\end{pmatrix}
=
\begin{pmatrix}
 \lambda  & \lambda u^2_{-1}-u^1 \\
 1 & u^2_{-1} 
\end{pmatrix}
\begin{pmatrix}
 w^{10}_{11} & w^{10}_{12} \\
 w^{10}_{21} & w^{10}_{22} 
\end{pmatrix},
\qquad \la-\cp_1=0.
\end{gather}

Using formula~\eqref{swtoda} and the notation $z^1_k=\cS^k(z^1)$, $z^2_k=\cS^k(z^2)$ for $k\in\zz$, 
we obtain
\begin{gather}
\lb{z101}
z^1_0=z^1=w^{10}_{21},\quad\qquad
z^1_1=\cS(z^1)=\cS(w^{10}_{21})=\cp_1 w^{10}_{21}-w^{10}_{11},\\
\lb{z12}
z^1_2=\cS^2(z^1)=\cS(z^1_1)=
\cp_1^2w^{10}_{21}-\cp_1w^{10}_{11}-u^2w^{10}_{11}-u^1w^{10}_{21}+\cp_1u^2w^{10}_{21},\\
\lb{z201}
z^2_0=z^2=w^{10}_{22},\qquad\quad
z^2_1=\cS(z^2)=\cS(w^{10}_{22})=\cp_1 w^{10}_{22}-w^{10}_{12},\\
\lb{z22}
z^2_2=\cS^2(z^2)=\cS(z^2_1)=
\cp_1^2w^{10}_{22}-\cp_1w^{10}_{12}-u^2w^{10}_{12}-u^1w^{10}_{22}+\cp_1u^2w^{10}_{22}.
\end{gather}
Equations~\eqref{z101},~\eqref{z201} imply that $z^1_0$, $z^1_1$, $z^2_0$, $z^2_1$ 
form a system of coordinates for~$G$, so condition~\eqref{msclc} is valid for $\da_1=\da_2=1$.
Formulas~\eqref{z12},~\eqref{z22} show that condition~\eqref{mpdu} is valid as well.

From~\eqref{z101}, \eqref{z12}, \eqref{z201}, \eqref{z22} one can express $u^1$, $u^2$ in terms of 
$z^1_0$, $z^1_1$, $z^2_0$, $z^2_1$, $z^1_{2}$, $z^2_{2}$ as follows
\begin{equation} 
\lb{u12toda}
u^1=\frac{z^1_{2}z^2_{1}-z^1_{1}z^2_{2}}{z^1_{1} z^2_0-z^1_0 z^2_{1}},\qquad\quad
u^2=-\cp_1+
\frac{z^1_0z^2_{2}-z^1_{2}z^2_0}{z^1_0z^2_{1}-z^1_{1}z^2_0}.
\end{equation}
Using formulas~\eqref{dtwtoda},~\eqref{u12toda}, we can compute $D_t(z^1)$, $D_t(z^2)$ and get 
\begin{equation}
\lb{dtz12}
D_t(z^1)=\frac{z^1_{-1}(z^1_0 z^2_{1}-z^1_{1} z^2_0)}{z^1_{-1} z^2_0-z^1_0 z^2_{-1}},\qquad\quad
D_t(z^2)=\frac{z^2_{-1}(z^1_0 z^2_{1}-z^1_{1} z^2_0)}{z^1_{-1} z^2_0-z^1_0 z^2_{-1}}.
\end{equation}

According to Theorem~\ref{tharb}, to obtain an MT, 
we need to replace $z^i_k$ by $v^i_k$ for all $i=1,2$ and $k\in\zz$ in~\eqref{u12toda},~\eqref{dtz12}.
(And we can use the identification $v^1_0=v^1$, $v^2_0=v^2$.) 

This gives the following result. For any $\cp_1\in\fik$, the formulas
\begin{equation}
\lb{fmtu12}
u^1=\frac{v^1_{2}v^2_{1}-v^1_{1} v^2_{2}}{v^1_{1} v^2-v^1 v^2_{1}},\qquad\quad
u^2=-\cp_1+ 
\frac{v^1v^2_{2}-v^1_{2} v^2}{v^1 v^2_{1}-v^1_{1} v^2}
\end{equation}
determine an MT from 
\begin{equation}
\lb{fmev12}
v^1_t=\frac{v^1_{-1}(v^1 v^2_{1}-v^1_{1} v^2)}{v^1_{-1} v^2-v^1 v^2_{-1}},\qquad\quad
v^2_t=\frac{v^2_{-1}(v^1 v^2_{1}-v^1_{1} v^2)}{v^1_{-1} v^2-v^1 v^2_{-1}}
\end{equation}
to~\eqref{todanew}.
We cannot find~\eqref{fmtu12},~\eqref{fmev12} in the existing literature, 
so this MT may be new.

According to the above notation, $v^1=z^1=w^{10}_{21}$ and 
$v^2=z^2=w^{10}_{22}$, where $w^{10}_{21}$, $w^{10}_{22}$ satisfy \eqref{swtoda}, 
\eqref{dtwtoda}. Therefore, solutions of system~\eqref{fmev12} can be obtained 
from solutions of system~\eqref{todanew} as follows.
For a given solution $u^1(n,t)$, $u^2(n,t)$ of system~\eqref{todanew}, one needs 
to solve the auxiliary linear system \eqref{swtoda}, \eqref{dtwtoda}
for the functions $w^{10}_{ij}(n,t)$, $i,j=1,2$, and then 
one can take $v^1(n,t)=w^{10}_{21}(n,t)$, $v^2(n,t)=w^{10}_{22}(n,t)$.

For example, consider the constant solution 
$u^1(n,t)=u^2(n,t)=1$ of~\eqref{todanew}, and set $\la=\cp_1=1$.
Then, solving \eqref{swtoda}, \eqref{dtwtoda} in the case 
$u^1(n,t)=u^2(n,t)=\la=1$, we obtain 
$$
w^{10}_{11}=b_1\mathrm{e}^{t},\quad
w^{10}_{12}=b_2\mathrm{e}^{t},\quad
w^{10}_{21}=(b_1t+b_3-nb_1)\mathrm{e}^{t},\quad
w^{10}_{22}=(b_2t+b_4-nb_2)\mathrm{e}^{t}, 
$$
where $b_1$, $b_2$, $b_3$, $b_4$ are arbitrary constants.
Then 
$$
v^1(n,t)=w^{10}_{21}=(b_1t+b_3-nb_1)\mathrm{e}^{t},\qquad 
v^2(n,t)=w^{10}_{22}=(b_2t+b_4-nb_2)\mathrm{e}^{t}
$$
is a solution of system~\eqref{fmev12}.

\section{MTs for Adler-Postnikov lattices}
\label{secap}

In what follows, \emph{difference operators} are polynomials in $\cS$, $\cS^{-1}$.

Adler and Postnikov~\cite{adler-pos11} studied 
integrable hierarchies of differential-difference equations
associated with spectral problems of the form
\begin{equation}
\lb{ppq}
P\psi=\lambda Q\psi,
\end{equation}
where $P$, $Q$ are difference operators and $\psi=\psi(n,t)$ is a scalar function.

Such hierarchies are constructed in~\cite{adler-pos11} as follows.
One considers the Lax type equations
\begin{equation}
\lb{pabq}
P_{t}=BP-PA,\qquad\quad Q_{t}=BQ-QA
\end{equation}
for some difference operators $P$, $Q$, $A$, $B$.
Then the equation 
\begin{equation}
\lb{psita}
\psi_t=A\psi
\end{equation}
is compatible with~\eqref{ppq} modulo~\eqref{pabq}.
For certain fixed operators $P$ and $Q$, Adler and Postnikov~\cite{adler-pos11}
find an infinite collection of operators $A$, $B$ so that~\eqref{pabq} 
becomes a commutative hierarchy of differential-difference equations.

Recall that in Section~\ref{smcmt} we have described a method to construct MTs 
for differential-difference equations possessing DLRs of the form~\eqref{mu}.
This method is applicable to equations presented in~\cite{adler-pos11}. 
To illustrate this, let us consider the following operators from~\cite[Section 5]{adler-pos11}
\begin{gather}
\lb{pqex}
P=u^1\cS^3+\cS,\qquad\quad Q=\cS^2+u^2,\\
\lb{abmin}
A^{-}=u^2_{-2}u^2_{-1}S^{-2}+\fap_{-3}+\fap_{-2},\qquad\quad B^{-}=u^2_{-1}u^2S^{-2}+\fap_{-1}+\fap,\\
\lb{abplus}
A^{+}=u^1_{-2}u^1_{-1}S^{2}+\gap_{-1}+\gap,\qquad\quad 
B^{+}=u^1u^1_1S^{2}+\gap+\gap_1,
\end{gather}
where 
$\fap=u^1u^2_1u^2_2,\quad \gap=u^1_{-2}u^1_{-1}u^2,\quad \fap_k=\cS^k(\fap),\quad 
\gap_k=\cS^k(\gap)$ for $k\in\zz$.

Following~\cite{adler-pos11}, we consider equations~\eqref{pabq},~\eqref{psita} 
in the case $A=A^{-}$, $B=B^{-}$ and in the case $A=A^{+}$, $B=B^{+}$.
In these cases, the variable $t$ is denoted by $t^{-}$ and $t^{+}$ respectively.

The system $P_{t^{-}}=B^{-}P-PA^{-},\,\ Q_{t^{-}}=B^{-}Q-QA^{-}$ is equivalent to 
\begin{equation}
\lb{utmin}
\begin{aligned}
u^1_{t^{-}} & = u^1(\fap_{-1}-\fap_{1}),\\
u^2_{t^{-}} & = u^2(\fap+\fap_{-1}-\fap_{-2}-\fap_{-3}-u^2_1+u^2_{-1}).
\end{aligned}
\end{equation}
The system $P_{t^{+}}=B^{+}P-PA^{+},\,\ Q_{t^{+}}=B^{+}Q-QA^{+}$ is equivalent to 
\begin{equation}
\lb{utplus}
\begin{aligned}
u^1_{t^{+}} & = u^1(\gap+\gap_1-\gap_2-\gap_3-u^1_{-1}+u^1_1),\\
u^2_{t^{+}} & = u^2(\gap_1-\gap_{-1}).
\end{aligned}
\end{equation}

Let us construct matrix DLRs for 
the differential-difference equations~\eqref{utmin} and~\eqref{utplus}.
Set 
\begin{equation}
\lb{vfps}
\vf^1=\psi,\qquad\quad
\vf^2=\cS(\psi),\qquad\quad
\vf^3=\cS^2(\psi).
\end{equation}
Using \eqref{ppq}, \eqref{pqex}, and \eqref{vfps}, one gets
\begin{equation}
\lb{svf}
\cS
\begin{pmatrix}
 \vf^1 \\
 \vf^2 \\
 \vf^3 \\
\end{pmatrix}
=
\begin{pmatrix}
 0 & 1 & 0 \\
 0 & 0 & 1 \\
 \lambda \frac{u^2}{u^1} & -\frac{1}{u^1} & \frac{\lambda}{u^1}  \\
\end{pmatrix}
\begin{pmatrix}
\vf^1 \\
 \vf^2 \\
 \vf^3 \\
\end{pmatrix}.
\end{equation}
Using formulas~\eqref{abmin},~\eqref{vfps} and equation~\eqref{psita} for $A=A^{-}$, $t=t^{-}$, 
we obtain
\begin{gather}
\lb{dtmvf1}
\pd_{t^{-}}(\vf^1)=\psi_{t^{-}}=A^{-}\psi=u^2_{-2}u^2_{-1}\cS^{-2}(\psi)+(\fap_{-3}+\fap_{-2})\psi,\\
\lb{dtmvf2}
\pd_{t^{-}}(\vf^2)=\cS(\psi_{t^{-}})=u^2_{-1}u^2\cS^{-1}(\psi)+(\fap_{-2}+\fap_{-1})\cS(\psi),\\
\lb{dtmvf3}
\pd_{t^{-}}(\vf^3)=\cS^2(\psi_{t^{-}})=u^2u^2_{1}\psi+(\fap_{-1}+\fap)\cS^2(\psi).
\end{gather}
Applying the operators $\cS^{-1}$ and $\cS^{-2}$ to~\eqref{svf}, one can express 
the functions $\cS^{-1}(\psi)=\cS^{-1}(\vf^1)$ and $\cS^{-2}(\psi)=\cS^{-2}(\vf^1)$ 
in terms of $\vf^1$, $\vf^2$, $\vf^3$. 
Substituting these expressions and $\psi=\vf^1$ in the right-hand sides 
of~\eqref{dtmvf1}, \eqref{dtmvf2}, \eqref{dtmvf3}, we get 
\begin{equation}
\lb{dtminvf}
\pd_{t^{-}}
\begin{pmatrix}
 \vf^1 \\
 \vf^2 \\
 \vf^3 \\
\end{pmatrix}
=
\begin{pmatrix}
 \fap_{-3}+\fap_{-2}-u^2_{-1}+\frac{1}{\lambda^2} & \frac{u^1_{-2} u^2_{-1}-1}{\lambda } & \frac{u^1_{-1}}{\lambda^2} \\
 \frac{u^2}{\lambda } & -u^2+\fap_{-2}+\fap_{-1} & \frac{u^2 u^1_{-1}}{\lambda } \\
 u^2 u^2_1 & 0 & \fap+\fap_{-1} \\
\end{pmatrix}
\begin{pmatrix}
 \vf^1 \\
 \vf^2 \\
 \vf^3 \\
\end{pmatrix}.
\end{equation}

The equations $P\psi=\lambda Q\psi$ and $\psi_{t^{-}}=A^{-}\psi$
are compatible modulo the system
$$
P_{t^{-}}=B^{-}P-PA^{-},\qquad\quad Q_{t^{-}}=B^{-}Q-QA^{-},
$$
which is equivalent to~\eqref{utmin}.
Since~\eqref{svf} and \eqref{dtminvf} have been obtained from the equations
$P\psi=\lambda Q\psi$, $\psi_{t^{-}}=A^{-}\psi$, 
we see that system~\eqref{svf}, \eqref{dtminvf} is compatible modulo~\eqref{utmin}.

Therefore, the matrices 
\begin{equation}
\lb{lrtmin}
\begin{aligned}
M&=
\begin{pmatrix}
 0 & 1 & 0 \\
 0 & 0 & 1 \\
 \lambda \frac{u^2}{u^1} & -\frac{1}{u^1} & \frac{\lambda}{u^1}  \\
\end{pmatrix},\\
\lt^{-}&=
\begin{pmatrix}
 \fap_{-3}+\fap_{-2}-u^2_{-1}+\frac{1}{\lambda^2} & \frac{u^1_{-2} u^2_{-1}-1}{\lambda } & \frac{u^1_{-1}}{\lambda^2} \\
 \frac{u^2}{\lambda } & -u^2+\fap_{-2}+\fap_{-1} & \frac{u^2 u^1_{-1}}{\lambda } \\
 u^2 u^2_1 & 0 & \fap+\fap_{-1} \\
\end{pmatrix}
\end{aligned}
\end{equation}
form a DLR for~\eqref{utmin}. The equation
$\pd_{t^{-}}(M)=\cS(\lt^{-})M-M\lt^{-}$ is equivalent to~\eqref{utmin}.

Similarly, using the equations $P\psi=\lambda Q\psi$ and $\psi_{t^{+}}=A^{+}\psi$, 
one gets the following DLR for~\eqref{utplus}
\begin{equation}
\lb{lrtplus}
M=
\begin{pmatrix}
 0 & 1 & 0 \\
 0 & 0 & 1 \\
 \lambda \frac{u^2}{u^1} & -\frac{1}{u^1} & \frac{\lambda}{u^1}  \\
\end{pmatrix},\qquad
\lt^{+}=
\begin{pmatrix}
 \gap+\gap_{-1} & 0 & u^1_{-2} u^1_{-1} \\
 \lambda u^2 u^1_{-1} & \gap+\gap_1-u^1_{-1} & \lambda  u^1_{-1} \\
 \lambda^2 u^2 & \lambda\left(u^1 u^2_1-1\right) & \lambda^2-u^1+\gap_1+\gap_2 \\
\end{pmatrix}.
\end{equation}

Recall that the groups $\hg{k}$, $k\in\zp$, 
corresponding to a DLR~\eqref{mu} are determined by the matrix $M$.
Using $M$ from~\eqref{lrtmin},~\eqref{lrtplus}, 
one obtains the following results on the corresponding groups $\hg{1}$, $\hg{2}$.
\begin{itemize}
\item The group $\hg{1}$ consists of the matrix-functions 
$\begin{pmatrix}
 1 & 0 & 0 \\
 0 & 1 & 0 \\
 a_1 & -a_1 \lambda  & a_2 \\
\end{pmatrix}$
for all $a_1,a_2\in\fik$, $a_2\neq 0$.
\item The group $\hg{2}$ consists of the matrix-functions 
$\begin{pmatrix}
 1 & 0 & 0 \\
\al_{21}(\la) & \al_{22}(\la) & \al_{23}(\la) \\
\al_{31}(\la) & \al_{32}(\la) & \al_{33}(\la) \\ 
\end{pmatrix}$
for some scalar functions $\al_{ij}(\la)$, $i=2,3$, $j=1,2,3$. 
(We do not need the explicit form of these functions.)
\end{itemize}

Let us construct some MTs for systems~\eqref{utmin} and~\eqref{utplus},
using Theorem~\ref{tharb} and Remark~\ref{remact}.

Since \eqref{utmin} and~\eqref{utplus} are two-component systems, we have $\diu=2$.
Using the notation from Section~\ref{sgalr}, consider the case 
$$
\sm=3,\ \og=2,\ k_1=k_2=1,\,\ 
G=\gq{3}{\cp_1}{1}\times\gq{3}{\cp_2}{1}\cong\GL_3(\fik)\times\GL_3(\fik),
\,\ 
\cp_1,\cp_2\in\fik.
$$
The standard transitive actions of $\GL_3(\fik)$ on the manifolds $\wor{3}$
and $\prs^{2}$ give a transitive action of ${G\cong\GL_3(\fik)\times\GL_3(\fik)}$ 
on $\mnf=(\wor{3})\times\prs^{2}$.
We take $H\subset G$ to be the stabilizer subgroup of a point~$\pmm\in\mnf$, 
so $H$ is given by~\eqref{gxx}. Then $G/H\cong\mnf$.

We are going to apply Theorem~\ref{tharb} in the case $\da_1=2$, $\da_2=1$.
According to Remark~\ref{remact} and Theorem~\ref{tharb}, 
we need to find an $\hg{2}$-left-invariant function~$z^1$ 
and an $\hg{1}$-left-invariant function~$z^2$
on an open dense subset of the manifold $(\wor{3})\times\prs^{2}$
such that conditions~\eqref{msclc},~\eqref{mpdu} hold for 
$G/H\cong(\wor{3})\times\prs^{2}$ and $r=\dim G/H=5$.

Let $\al^1$, $\al^2$, $\al^3$ be coordinates on $\wor{3}$ 
and $\beta^1$, $\beta^2$, $\beta^3$ be 
homogeneous coordinates on the projective space $\prs^{2}$.
The above results on the structure of $\hg{1}$ and $\hg{2}$ imply that 
the function $z^1=\al^1$ is $\hg{2}$-left-invariant 
and the function $z^2=\beta^1/\beta^2$ is $\hg{1}$-left-invariant.

A straightforward computation shows that conditions~\eqref{msclc},~\eqref{mpdu} are satisfied, 
so we can apply Theorem~\ref{tharb}. 
This gives the following result. 
For any $\cp_1,\cp_2\in\fik$, the formulas
\begin{equation}
\lb{u12pq}
u^1= \frac{v^2_2 \left(\cp_2 v^2 v^2_1 \left(v^1_1-\cp_1 v^1_2\right)+\cp_1 v^1 \left(\cp_2-v^2_1\right)\right)}{\cp_1 v^1-\cp_2 v^1_3 v^2 v^2_1 v^2_2},\qquad\quad
u^2= \frac{-\cp_1 v^1_2+v^1_3 v^2_2 \left(\cp_2-v^2_1\right)+v^1_1}{\cp_1 v^1-\cp_2 v^1_3 v^2 v^2_1 v^2_2}
\end{equation}
determine an MT from the equation 
\begin{equation}
\lb{ptm}
\begin{aligned}
v^1_{t^{-}} & = v^1 \left(u^1_{-2} u^1_{-1} u^2+u^1_{-3} u^1_{-2} u^2_{-1}\right)+v^1_2 u^1_{-2} u^1_{-1},\\ 
v^2_{t^{-}} & = \frac{-\cp_2 v^2_1 (v^2)^2 u^1_{-1} u^2+v^2 \left(v^2_1 u^1_{-3} u^1_{-2} u^2_{-1}-u^1_{-1} \left(\cp_2+v^2_1 \left(u^1 u^2_1-1\right)\right)\right)+u^1_{-2} u^1_{-1}}{v^2_1} 
\end{aligned}
\end{equation}
to~\eqref{utmin}, and from the equation
\begin{equation}
\lb{ptpl}
\begin{aligned}
v^1_{t^{+}} & = \frac{\cp_1 v^1_1 \left(u^1_{-2} u^2_{-1}-1\right)+v^1 \left(\cp_1^2 u^2_{-1} \left(u^1_{-2} u^2+u^1_{-3} u^2_{-2}-1\right)+1\right)+v^1_2 u^1_{-1}}{\cp_1^2},\\
v^2_{t^{+}} & =  \frac{ -\cp_2 (v^2)^2 u^2+v^2 \left(\cp_2^2 u^2_{-1} \left(u^1_{-3} u^2_{-2}-1\right)+\cp_2^2 u^2 \left(1-u^1_{-1} u^2_1\right)+1\right)}{\cp_2^2 }+\\ 
&\qquad +\frac{v^2_1 \cp_2 \left(u^1_{-2} u^2_{-1}-1\right)+u^1_{-1} \left(1-\cp_2 v^2 u^2\right)}{\cp_2^2 v^2_1} 
\end{aligned}
\end{equation}
to~\eqref{utplus}.
In formulas~\eqref{ptm},~\eqref{ptpl}, 
one has $u^1_k=\cS^k(u^1)$ and $u^2_k=\cS^k(u^2)$ for $k\in\zz$, 
where $u^1$, $u^2$ are given by~\eqref{u12pq}. For example, according to~\eqref{u12pq},
$$
u^2_{-1}=\cS^{-1}(u^2)=\cS^{-1}\Big(
\frac{-\cp_1 v^1_2+v^1_3 v^2_2 \left(\cp_2-v^2_1\right)+v^1_1}{\cp_1 v^1-\cp_2 v^1_3 v^2 v^2_1 v^2_2}
\Big)=
\frac{-\cp_1 v^1_1+v^1_2 v^2_1 \left(\cp_2-v^2\right)+v^1}{\cp_1 v^1_{-1}-\cp_2 v^1_2 v^2_{-1} v^2 v^2_1}.
$$

\section{Conclusion}

In this paper, we have presented a geometric method to construct 
Miura-type transformations (MTs) 
and modified equations for differential-difference (lattice) equations, 
including multicomponent ones.
We construct MTs and modified equations 
from invariants of certain Lie group actions on manifolds associated with 
matrix Darboux-Lax representations (DLRs) of differential-difference equations.

Using this construction, from a given suitable DLR one can obtain many MTs of different orders.
As has been shown in Example~\ref{bcpcp},
the order of the obtained MTs may be higher than the size of the matrices in the DLR.

Applying this method to DLRs of the Volterra, 
Narita-Itoh-Bogoyavlensky, Toda lattices, 
and Adler-Postnikov lattices from~\cite{adler-pos11},
we have constructed a number of MTs and modified lattice equations.
The MTs \eqref{uv2v3}, \eqref{ucpvcp}, \eqref{fmtu12}, \eqref{u12pq}
and modified equations 
\eqref{vtvv2}, \eqref{vtvcpcp}, \eqref{fmev12}, \eqref{ptm}, \eqref{ptpl} seem to be new.

The described method to construct MTs can help in solving classification problems 
for integrable differential-difference equations.
Indeed, when one tries to classify a certain class of such equations, 
one often finds a few basic equations such that all other equations 
from the considered class can be obtained from 
the basic ones by means of MTs~\cite{yam2006}.
Therefore, systematic methods to construct MTs can help 
in obtaining such classification results.

In Section~\ref{secgt}, for any given DLR of the form~\eqref{mu} 
we have defined an increasing sequence of groups $\hg{k}$, $k\in\zp$.
As has been shown in Section~\ref{secgt}, 
these groups play a crucial role in the construction of MTs.
In Sections~\ref{secgt},~\ref{secbog},~\ref{sectoda},~\ref{secap}, 
we have computed $\hg{k}$ for some small values of~$k$ 
for a number of DLRs, and this has allowed us to construct the above-mentioned MTs.

It would be interesting to compute the groups $\hg{k}$ for higher values of $k$, 
because this may give many new kinds of MTs.
Also, it would be interesting to try to describe all MTs 
that can be constructed by the presented method for some well-known DLRs 
(such as the DLRs of the Narita-Itoh-Bogoyavlensky and Toda lattices).

It is well known that auto-B\"acklund transformations can often 
be obtained as compositions of MTs. Therefore, it is natural 
to ask the following question: 
when can one obtain auto-B\"acklund transformations, using 
MTs constructed by the presented method?

We leave these problems for future work.

\section*{Acknowledgements}
The authors would like to thank V.~E.~Adler, A.~V.~Mikhailov, Yu.~B.~Suris, P.~Xenitidis, 
and R.~I.~Yamilov for useful discussions.
The authors gratefully acknowledge support from a Leverhulme Trust grant.

\end{document}